\DeclareFontFamily{U}{mathx}{\hyphenchar\font45}
\DeclareFontShape{U}{mathx}{m}{n}{
      <5> <6> <7> <8> <9> <10>
      <10.95> <12> <14.4> <17.28> <20.74> <24.88>
      mathx10
      }{}
\DeclareSymbolFont{mathx}{U}{mathx}{m}{n}
\DeclareMathSymbol{\bigtimes}{1}{mathx}{"91}
\definecolor{DarkRed}{rgb}{0.5,0.1,0.1}
\definecolor{DarkBlue}{rgb}{0.1,0.1,0.5}
\definecolor{ForestGreen}{rgb}{0.1333,0.5451,0.1333}
\definecolor{Red}{rgb}{0.9,0,0}
\crefname{property}{property}{Property}
\crefname{equation}{eq}{Eq}
\tikzset{vertexA/.style={circle,fill=black,minimum size=5pt,inner sep=0pt, font=\small}}
\tikzstyle{vertexB}=[circle,fill=black,minimum size=3pt,inner sep=0pt, font=\small]
\tikzstyle{selected vertex} = [vertex, fill=red!24]
\tikzstyle{edge} = [draw, black!50, -]
\tikzstyle{vertexRed}=[circle,fill=red,draw,minimum size=10pt,inner sep=0pt, font=\small, line width=0.5pt]
\tikzstyle{vertexBlack}=[circle,fill=black,draw,minimum size=10pt,inner sep=0pt, font=\small, line width=0.5pt]
\def\BState{\State\hskip-\ALG@thistlm}
\newtheorem{lemma}{Lemma}[section]
\newtheorem{claim}[lemma]{Claim}
\newtheorem*{claim*}{Claim}
\newtheorem*{proposition*}{Proposition}
\newtheorem*{lemma*}{Lemma}
\newtheorem*{problem*}{Problem}
\crefname{lemma}{Lemma}{Lemmas}
\crefname{claim}{Claim}{Claims}
\newtheorem{mdresult}{Result}
\newtheorem*{mdresult*}{Main Result}
\newtheorem{assumption}{Assumption}
\theoremstyle{definition}
\newtheorem{mdinvariant}[lemma]{Lemma}
\theoremstyle{definition}
\newtheorem{mdalg}{Algorithm}
\newenvironment{Algorithm}{\begin{tbox}\begin{mdalg}}{\end{mdalg}\end{tbox}}
\renewcommand{\qed}{\nobreak \ifvmode \relax \else
      \ifdim\lastskip<1.5em \hskip-\lastskip
      \hskip1.5em plus0em minus0.5em \fi \nobreak
      \vrule height0.75em width0.5em depth0.25em\fi}
\renewcommand{\leq}{\leqslant}
\renewcommand{\geq}{\geqslant}
\newcommand{\eps}{\ensuremath{\varepsilon}}
\newcommand{\card}[1]{\left\vert{#1}\right\vert}
\newcommand{\set}[1]{\ensuremath{\left\{ #1 \right\}}}
\newenvironment{tbox}{\begin{tcolorbox}[
		enlarge top by=5pt,
		enlarge bottom by=5pt,
		 breakable,
		 boxsep=0pt,
                  left=4pt,
                  right=4pt,
                  top=10pt,
                  arc=0pt,
                  boxrule=1pt,toprule=1pt,
                  colback=white
                  ]
	}
{\end{tcolorbox}}
\newcommand{\II}{\ensuremath{\mathbb{I}}}
\newcommand{\mireal}[1][]{
  \ifx\relax#1\relax%
    \II(\mione \,; \mitwo)%
  \else%
    \II(\mione \,; \mitwo\mid #1)%
  \fi
}
\newcommand{\e}[1]{\eps_{#1}}
\newcommand{\anew}{\ensuremath{A_{new}}}
\newcommand{\bnew}{\ensuremath{B_{new}}}
\newcommand{\cnew}{\ensuremath{C_{new}}}
\newcommand{\pnew}{\ensuremath{P_{new}}}
\newcommand{\ao}{\ensuremath{A_{old}}}
\newcommand{\bo}{\ensuremath{B_{old}}}
\newcommand{\co}{\ensuremath{C_{old}}}
\newcommand{\po}{\ensuremath{P_{old}}}
\newcommand{\ebef}{\ensuremath{E_{bef}}}
\newcommand{\eaft}{\ensuremath{E_{aft}}}
\newcommand{\abef}{\ensuremath{A_{bef}}}
\newcommand{\aaft}{\ensuremath{A_{aft}}}
\newcommand{\bbef}{\ensuremath{B_{bef}}}
\newcommand{\baft}{\ensuremath{B_{aft}}}
\newcommand{\cbef}{\ensuremath{C_{bef}}}
\newcommand{\caft}{\ensuremath{C_{aft}}}
\newcommand{\To}{Type-1\xspace}
\newcommand{\Tt}{Type-2\xspace}
\newif\iffinaltheorem
\title{An Improved Fully Dynamic Algorithm for Counting 4-Cycles in General Graphs using Fast Matrix Multiplication}
\author{
	Sepehr Assadi\footnote{(\href{mailto:sepehr@assadi.info}{sepehr@assadi.info)} 
Cheriton School of Computer Science, University of Waterloo. Supported in part by a Sloan Research 
Fellowship, an NSERC Discovery Grant, and a Faculty of Math Research Chair grant.} \and 
Vihan Shah\footnote{(\href{mailto:vihan.shah@uwaterloo.ca}{\text{vihan.shah@uwaterloo.ca}}) 
Cheriton School of Computer Science, University of Waterloo. Supported in part by Sepehr Assadi's NSERC Discovery Grant.}  
}
\date{}
\begin{document}
\maketitle

\thispagestyle{empty}
\pagenumbering{roman}


\begin{abstract}
We study subgraph counting over fully dynamic graphs, which undergo edge insertions and deletions. Counting subgraphs is a fundamental problem in graph theory with numerous applications across various fields, including database theory, social network analysis, and computational biology. 

Maintaining the number of triangles in fully dynamic graphs is very well studied and has an upper bound of $O(m^{1/2})$ for the update time by Kara, Ngo, Nikolic, Olteanu, and Zhang (TODS 20). There is also a conditional lower bound of approximately $\Omega(m^{1/2})$ for the update time by Henzinger, Krinninger, Nanongkai, and Saranurak (STOC 15) under the OMv conjecture implying that $O(m^{1/2})$ is the ``right answer'' for the update time of counting triangles.
More recently, Hanauer, Henzinger, and Hua (SAND 22) studied the problem of maintaining the number of $4$-cycles in fully dynamic graphs and designed an algorithm with $O(m^{2/3})$ update time which is a natural generalization of the approach for counting triangles. Thus, it seems natural that $O(m^{2/3})$ might be the correct answer for the complexity of the update time for counting $4$-cycles. 

In this work, we present an improved algorithm for maintaining the number of $4$-cycles in fully dynamic graphs. Our algorithm achieves a worst-case update time of $O(m^{2/3-\eps})$ for some constant $\eps>0$. Our approach crucially uses fast matrix multiplication and leverages recent developments therein to get an improved runtime. Using the current best value of the matrix multiplication exponent $\omega=2.371339$ we get $\eps=0.009811$ and if we assume the best possible exponent i.e. $\omega=2$ then we get $\eps=1/24$.
The lower bound for the update time is $\Omega(m^{1/2})$, so there is still a big gap between the best-known upper and lower bounds. The key message of our paper is demonstrating that $O(m^{2/3})$ is not the correct answer for the complexity of the update time.

\end{abstract}

\clearpage
\bigskip
\setcounter{tocdepth}{3}
\tableofcontents

\clearpage

\pagenumbering{arabic}
\setcounter{page}{1}


\section{Introduction}\label{sec:intro} 

We study subgraph counting over fully dynamic graphs that undergo edge 
insertions and deletions.
Given an input graph $G$ and a small pattern graph $H$, say a triangle or a $4$-cycle, we are interested in 
counting the number of subgraphs (not necessarily induced) of $G$ that are isomorphic to $H$. 
Counting subgraphs is a fundamental problem in graph theory with numerous applications across various fields, including social network analysis, telecommunication networks, and 
computational biology (see, e.g.,
\cite{ribeiro2021survey,costa2011analyzing,milo2002network,sahu2020ubiquity}
 for some applications).

Subgraph counting has many applications in database theory \cite{ngo2014skew,NgoOptimalJoin,DBLP:journals/tods/KaraNNOZ20} as well.
The problem of counting the number of elements in binary joins over binary relations is equivalent to the problem of subgraph counting in \textbf{layered graphs} (formally defined in \Cref{subsec:prelim-notation}), where the vertex set is partitioned into multiple layers, each corresponding to an attribute. Vertices represent attribute values, and edges exist only between specific layers (typically consecutive ones), representing tuples of attribute values in the relations. 

Specifically, the problem of finding the number of elements in a cyclic join of size $k$ is equivalent to counting the number of $k$-cycles in $k$-layered graphs.
For example, let $R, S,$ and $T$ be relations that have schemas $(A, B),$ $(B, C),$ and $(C, A)$ respectively. 
Then the size of the join of $R$ with $S$ with $T$ equals the number of triangles in the corresponding layered graph where there is a layer for each attribute $A,B,C$, the vertices are the attribute values and the edges only exist between neighboring layers representing the tuples of attribute values in the relations.
Throughout this paper, we adopt the language of subgraph counting; however, all results can equivalently be rephrased in terms of joins.
An example of the connection between joins and layered graphs can be found in \Cref{fig:join-layered}.

In the dynamic model, the pattern graph $H$ is fixed and known apriori, and the input graph $G$ starts as an empty graph which then undergoes edge insertion and deletions.
If both insertions and deletions are allowed, the model is referred to as fully dynamic.
After \emph{every} update we need to output the \emph{exact} number of occurrences of the subgraph $H$ inside $G$.
The goal is to minimize the worst-case update time over all edge updates.

In this paper, we are interested in counting $4$-cycles in the fully dynamic model. 
We can equivalently state the problem as follows:
Given four binary relations $A(L_1,L_2)$, $B(L_2,L_3)$, $C(L_3,L_4)$, and $D(L_4,L_1)$ that undergo continuous updates through tuple insertions and deletions, the problem is to dynamically maintain the count of tuples in the cyclic join $ J = A \Join B \Join C \Join D $. 
While $J$ itself is not materialized, a view storing its count is maintained and must be updated after every update to the database.
As updates occur, incremental view maintenance (IVM) is used to efficiently update the join size by computing the change in count ($\Delta J$) rather than recomputing the join size from scratch. 
The goal is to minimize the worst-case update time over all updates to the database while ensuring that the updated count is output after each update.

We are interested in $4$-cycles because it is the smallest pattern graph for which the dynamic counting problem is not well-understood.
Counting subgraphs on $1$ and $2$ vertices is trivial and counting subgraphs on $3$ vertices is well studied.
In particular, $3$-cycle (triangle) counting in the dynamic model has an upper bound of $O(\sqrt{m})$ for the 
worst-case update time \cite{DBLP:journals/tods/KaraNNOZ20}.
There is also a conditional lower bound of $\Omega(m^{1/2-\gamma})$ for any constant $\gamma>0$ for the update 
time 
\cite{henzinger2015unifying} under the Online Matrix-Vector (OMv) conjecture.

However, the bounds for graphs on $4$ vertices are not well understood.
In particular, $4$-cycle counting in the dynamic model has an upper bound of $O(m^{2/3})$ 
\cite{hanauer_et_al:LIPIcs.SAND.2022.18} for the 
worst-case update time but the conditional lower bound is the same bound of $\Omega(m^{1/2-\gamma})$ for 
any constant $\gamma>0$
\cite{henzinger2015unifying,hanauer_et_al:LIPIcs.SAND.2022.18}.
\cite{hanauer_et_al:LIPIcs.SAND.2022.18} also studied the problem of counting $4$-cliques showing that the 
folklore upper bound of $O(m)$ for the update time is tight under the static combinatorial $4$-clique 
conjecture.	 
One may imagine that the complexity of the update time for $4$-cycles should be somewhere between 
$O(\sqrt{m})$ for triangles and $O(m)$ for $4$-cliques.
This raises the following question:
 \begin{quote}
	Question: What is the worst-case update time for counting $4$-cycles in the fully dynamic model? 
\end{quote}

The algorithm of \cite{hanauer_et_al:LIPIcs.SAND.2022.18} is a natural generalization of the approach for 
counting triangles in \cite{DBLP:journals/tods/KaraNNOZ20}, so it seems natural that $O(m^{2/3})$ (which 
sits in the middle of $O(\sqrt{m})$ and $O(m)$) might be the correct answer for the update time for counting 
$4$-cycles.
We show this is \emph{not} the case by giving an algorithm for maintaining the 
number of $4$-cycles in fully dynamic graphs in $O(m^{2/3-\eps})$ worst-case update time for some constant $\eps>0$.
This shows that $O(m^{2/3})$ is not the correct answer for the complexity of the update time. 
Consider the following theorem:
\begin{restatable}{theorem}{AlgGeneral}\label{thm:alg-general}
	There is an algorithm for counting the number of $4$-cycles in any fully dynamic 
	graph in 
	worst-case update time $O(m^{2/3 -\eps})$ 
	where $m$ is the number of edges in the graph at that point and $\eps>0$ is a constant.
	Using the current best value of the square matrix multiplication exponent 
	$\omega=2.371339$
	\iffinaltheorem\footnote{The time complexity of multiplying two $n \times n$ matrices is 
	$O(n^{\omega})$.}\fi 
	we get $\eps=0.009811$ and 
	if we assume the best possible exponent i.e.\ $\omega=2$ then we get 
	$\eps=1/24$.
\end{restatable}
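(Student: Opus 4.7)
The plan is to improve upon the $O(m^{2/3})$ bound of \cite{hanauer_et_al:LIPIcs.SAND.2022.18} by combining the standard high/low-degree decomposition with \emph{periodic rebuilding} of an auxiliary data structure using fast (rectangular) matrix multiplication. Pick a degree threshold $\Delta$ (to be tuned), call a vertex \emph{heavy} if its degree exceeds $\Delta$ and \emph{light} otherwise. Note that there are at most $O(m/\Delta)$ heavy vertices. When an edge $(u,v)$ is inserted or deleted, the change in the number of $4$-cycles equals (up to sign) the number of pairs $(x,y)$ with $x \in N(v)$, $y \in N(u)$, and $xy \in E$, so the task reduces to counting length-$3$ paths from $u$ to $v$ avoiding $(u,v)$.

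First I would split this count by the heavy/light status of the two intermediate vertices $x,y$. The cases where at least one of $x,y$ is heavy can be handled in time roughly $O(\Delta \cdot m/\Delta) = O(m)$ in the naive view, but with standard two-sided data structures maintaining, for each heavy vertex $h$ and each vertex $w$, the values $|N(h)\cap N(w)|$ or a summary thereof, these contributions can be read off in $O(m/\Delta)$ time per update (with $\Delta$-time maintenance). The true bottleneck, just as in the $O(m^{2/3})$ algorithm, is the \emph{light--light} case: both $u,v$ light and both intermediate vertices light. Naively this costs $O(\Delta^2)$, which forces $\Delta = m^{1/3}$.

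Next, the key idea is to batch updates and use fast matrix multiplication to beat the $\Delta^2$ barrier on the light--light case. Fix a batch size $\tau$. Within each batch I would maintain a buffer of pending edge insertions/deletions, and split the contribution to each update into (i) length-$3$ paths using only ``old'' edges present at the last rebuild and (ii) paths using at least one ``new'' edge from the buffer. The old--old contribution for the light--light case will be read off from a precomputed table $T[u,v]$ that gives, for every pair of light vertices, the number of relevant length-$3$ paths; this table is recomputed at each rebuild as a product/sum of appropriately defined (rectangular) 0/1 matrices whose dimensions reflect the sizes $L$ of the light set and $n$ of the vertex set. Using fast rectangular matrix multiplication, the rebuild cost is subcubic, and is paid once every $\tau$ updates. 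The new--edge contributions involve only $O(\tau)$ edges and can be evaluated in time polynomial in $\tau$ per update.

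Finally, I would balance the three quantities: the amortized rebuild cost per update (fast-matrix-multiplication cost divided by $\tau$), the per-update cost inside a batch proportional to $\tau \cdot \mathrm{poly}(\Delta)$ from the buffered new edges, and the heavy-case cost $O(m/\Delta + \Delta)$ from maintaining the heavy summaries. Optimizing over $\Delta$ and $\tau$ as functions of $m$ (and the rectangular matrix-multiplication exponent) yields $O(m^{2/3-\varepsilon})$, with the explicit $\varepsilon$ determined by $\omega$; plugging in $\omega = 2.371339$ gives $\varepsilon = 0.009811$, and $\omega = 2$ gives $\varepsilon = 1/24$. A standard incremental de-amortization (spreading each rebuild over the next $\tau$ updates while running two copies in parallel) converts the amortized guarantee into a worst-case one. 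The hard part will be choosing the exact matrix product that captures the light--light count so that the \emph{rectangular} matrix-multiplication exponent (with dimensions governed by the number of light vertices and the number of heavy/buffered vertices) is the one that appears in the analysis, and then verifying that the three-way balance truly goes strictly below $m^{2/3}$ whenever $\omega < 3$.
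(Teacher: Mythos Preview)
Your high-level framework---periodic rebuilding via fast matrix multiplication, splitting each length-$3$ path count into an ``old'' part read from a precomputed table and a ``new'' part over buffered edges, then de-amortizing---is indeed the paper's strategy (their ``phases'' are your batches). But the proposal has two concrete gaps that would prevent it from going through as written.

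First, your final sentence asserts the balance beats $m^{2/3}$ ``whenever $\omega<3$''. This is false, and the paper singles this out as surprising: the method only gains when $\omega<2.5$. The governing constraints are that a phase of length $m^{1-\delta}$, each update costing $m^{2/3-\eps}$, must afford one product of $m^{2/3+O(\eps)}\times m^{2/3+O(\eps)}$ matrices (giving $1-\delta+2/3-\eps\geq\omega(2/3+2\eps)$), together with $\delta\geq 3\eps$ coming from iterating over pairs of heavy vertices one of which is ``new''. Eliminating $\delta$ yields $\eps\leq (5-2\omega)/(6\omega+12)$, which is positive only for $\omega<2.5$ and gives exactly $1/24$ at $\omega=2$. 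So the constants you quote are right, but the qualitative claim is not, and getting the balance to close requires more than a generic three-way optimization.

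Second, your table $T[u,v]$ is indexed by light vertices, and you take its rebuild cost to be a rectangular product in dimensions ``$L$ and $n$''. But $n$ is not controlled by $m$: there can be $\Theta(m)$ light vertices, so both storing $T$ and recomputing it cost $\Omega(m^2)$, which swamps everything. The paper handles this with a separate ``tiny'' class (degree below $m^{1/3-2\eps}$): paths through tiny vertices and queries with a tiny endpoint are dispatched directly, after which every remaining vertex has degree $\geq m^{1/3-2\eps}$ and hence there are at most $m^{2/3+2\eps}$ of them. Only then are the matrices small enough for the phase budget. Relatedly, the single heavy/light split is too coarse: the paper needs High/Medium/Low on the query layers and Dense/Sparse on the inner layers, and for the hard $HH$ and $LL$ cases it further splits each of the three edges of a $3$-path by old/new phase (eight combinations), maintaining six of the eight $A^{HS}B^{SS}C^{SH}$-type products incrementally and invoking a separate subroutine (the ``$A,C$ fixed'' algorithm) for the $\text{old}\cdot\text{new}\cdot\text{old}$ pattern. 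Your ``new-edge contributions \dots\ in time polynomial in $\tau$'' hides exactly this case analysis, and it is where the $3\eps\leq\delta$ constraint (and hence the $\omega<2.5$ threshold) arises.
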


This approach crucially relies on fast matrix multiplication, and we only get an improvement when $\omega<2.5$.
This is very surprising because any upper bound on $\omega$ better than $3$ like Strassen's algorithm \cite{strassen1969gaussian} is not sufficient.
For $\omega=2.371339$ the update time improves from $m^{0.66}$ to $m^{0.65686}$ and if $\omega=2$ it further improves to $m^{0.625}$. 
This improvement to the second digit after the decimal place is significant considering improvements of a similar kind are much smaller like in metric TSP \cite{karlin2021slightly}, random-order streaming matching \cite{assadi_randomstreaming}, and fast matrix multiplication \cite{alman2024more}.
These results, like ours, demonstrate that the previous best bounds were not optimal by providing a slight improvement over them.

\paragraph{Existing Lower Bounds.}
The lower bound of $\Omega(m^{1/2-\gamma})$ for any constant $\gamma>0$ for the update 
time for maintaining the number of triangles in any fully dynamic graph holds under the OMv conjecture, so it 
even holds against algorithms that use fast matrix multiplication.
However, the lower bound of $\Omega(m^{1-\gamma})$ for any constant $\gamma>0$ for the update 
time for maintaining the number of $4$-cliques in any fully dynamic graph holds under the static 
combinatorial $4$-clique conjecture, so it does not hold against algorithms that use fast matrix 
multiplication.	 
Thus, it may be possible to develop a faster dynamic algorithm for $4$-cliques using fast matrix multiplication 
without breaking the lower bound.
We note that our algorithm is non-combinatorial because it uses fast matrix multiplication, so we do not rule 
out a combinatorial lower bound of $\Omega(m^{2/3})$ for the update time for maintaining the number of 
$4$-cycles in a fully dynamic graph.

We will show in \Cref{sec:cycles-general-graphs} that the problem of dynamically maintaining 
the number of $4$-cycles in general graphs is equivalent to dynamically maintaining the number of layered $4$-cycles in $4$-layered graphs (which we define formally in \Cref{subsec:prelim-notation} and informally below).
$4$-layered graphs are graphs with $4$ sets of vertices which we call layers each 
of which are 
independent sets and the edges of the graph only exist between consecutive layers 
(there can also 
be edges between the last and first layer).
These graphs are basically $4$-partite graphs with additional constraints on the edges (that edges exist only between consecutive layers).
A layered $4$-cycle in a $4$-layered graph is a $4$-cycle that has exactly one vertex in each layer and edges connecting these vertices in adjacent layers.

As we discussed earlier, counting layered cycles in layered graphs is interesting from the perspective of database theory.
Suppose we have attributes $L_1,L_2,L_3$ and $L_4$ and binary relations $A(L_1,L_2)$, $B(L_2,L_3)$, 
$C(L_3,L_4)$, and $D(L_4,L_1)$ defined on these attributes.
We can construct a $4$-layered graph $G$ to represent this information, where each layer corresponds to 
one of the attributes $L_1,L_2,L_3,$ and $L_4$.
Each element of the attributes is represented as a vertex in the graph $G$.
Also, each tuple in the relations corresponds to an edge in $G$.
If we compute the join of $A,B,C$ and $D$ we are effectively finding tuples $(a,b,c,d)$ such that: $(a,b)\in 
A$, $(b,c)\in B$, $(c,d)\in C$, and $(d,a)\in D$.
Each resulting tuple $(a,b,c,d)$ from the join corresponds to a unique $4$-cycle in the graph $G$.
Therefore, the size of the join is equal to the number of layered $4$-cycles in $G$.

Thus, the problem of computing the size of a join of four binary relations is equivalent to the problem of 
counting the number of layered $4$-cycles in $4$-layered graphs.
Since maintaining the number of $4$-cycles in a fully dynamic graph is equivalent to maintaining the number 
of layered $4$-cycles in fully dynamic $4$-layered graphs, we solve the latter problem.
We state the result for $4$-layered graphs below:
\begin{restatable}{theorem}{AlgLayer}\label{thm:alg-layer}
There is an algorithm for counting the number of layered $4$-cycles in any fully dynamic 
$4$-layered graph in worst-case update time $O(m^{2/3 -\eps})$ 
where $m$ is the number of edges in the graph at that point and $\eps>0$ is a constant.
Using the current best value of the square matrix multiplication exponent 
$\omega=2.371339$ we get 
$\eps=0.009811$ and 
if we assume the best possible exponent i.e.\ $\omega=2$ then we get 
$\eps=1/24$.
\end{restatable}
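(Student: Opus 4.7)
The plan is to combine the standard heavy/light decomposition used in the $O(m^{2/3})$ algorithm with an amortized fast-matrix-multiplication subroutine for the hardest subcase, and then to de-amortize. Let $A_1, A_2, A_3, A_4$ be the biadjacency matrices of the four consecutive layers, so the number of layered $4$-cycles equals $\mathrm{tr}(A_1 A_2 A_3 A_4)$. An edge insertion or deletion in, say, $A_1$ at position $(u^*, v^*)$ changes this count by $\pm(A_2 A_3 A_4)[v^*, u^*]$, so after every update it suffices to compute a single entry of a triple matrix product; by cyclic symmetry, the same task arises for updates in the other three matrices.

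Fix a degree threshold $\Delta$ to be tuned, call a vertex \emph{heavy} if its current degree exceeds $\Delta$ (so each layer contains $O(m/\Delta)$ heavy vertices), and expand
\[
(A_2 A_3 A_4)[v^*, u^*] \;=\; \sum_{w \in L_3,\, x \in L_4} A_2[v^*, w]\,A_3[w, x]\,A_4[x, u^*]
\]
into four terms based on the heavy/light status of the intermediate vertices $w$ and $x$. Every case in which at least one of $w, x$ is light can be evaluated in $\mathrm{poly}(\Delta)$ time per update by iterating over the at most $\Delta$ neighbors of the light vertex and consulting a small collection of auxiliary counters (sorted adjacency lists for heavy vertices, plus precomputed length-$2$ path counts into the opposite light layer), essentially as in the bookkeeping of \cite{hanauer_et_al:LIPIcs.SAND.2022.18, DBLP:journals/tods/KaraNNOZ20}. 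The genuine bottleneck is the all-heavy case $Q_{HH}(v^*, u^*) = \sum_{w \in H_3,\, x \in H_4} A_2[v^*, w] A_3[w, x] A_4[x, u^*]$; for this I would maintain the matrix $M = A_2^{L_2, H_3}\cdot A_3^{H_3, H_4}\cdot A_4^{H_4, L_1}$ and answer $Q_{HH}(v^*, u^*) = M[v^*, u^*]$ in $O(1)$.

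To keep $M$ current I would rebuild it from scratch every $T$ updates, using fast rectangular matrix multiplication on matrices whose dimensions are at most $m \times (m/\Delta)$, $(m/\Delta) \times (m/\Delta)$, and $(m/\Delta) \times m$. Between rebuilds, I buffer the at most $T$ recent edge updates, and upon a query I correct the (possibly stale) entry $M[v^*, u^*]$ by explicitly re-evaluating the contribution of each buffered update via the same local enumeration used for the mixed cases, paying $\mathrm{poly}(\Delta)\cdot T$ extra per query. Balancing $T$, $\Delta$, and the various per-update costs, and using the current best bounds on (rectangular) matrix multiplication, yields an amortized per-update time of $O(m^{2/3 - \eps})$; the improvement first appears precisely when $\omega < 5/2$, and plugging in $\omega = 2.371339$ gives $\eps = 0.009811$ while $\omega = 2$ gives $\eps = 1/24$. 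Finally, the amortized bound is converted to a worst-case bound by the standard two-copies trick: maintain two versions of $M$ and rebuild one in the background over the $T$ updates between refreshes, spreading the rebuild cost uniformly while the other copy (plus the buffered corrections) answers queries. The main technical obstacle is the rectangular matrix multiplication analysis: because the three dimensions involved can be quite imbalanced ($m$ vs.\ $m/\Delta$), one has to verify that at the optimized setting of $\Delta$ the relevant rectangular multiplication exponent is polynomially smaller than the naive $m^3/\Delta^2$ bound, and the precise speedup one gets from plugging in the current (or optimistic) value of $\omega$ is exactly what determines $\eps$ numerically.
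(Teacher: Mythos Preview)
Your plan has the right spirit---periodic rebuilds via fast matrix multiplication plus buffered corrections---but as written it has a dimensional blow-up that kills the bound. You define $M = A_2^{L_2,H_3}A_3^{H_3,H_4}A_4^{H_4,L_1}$, whose output is indexed by \emph{all} of $L_2\times L_1$; even restricting to vertices of positive degree, that is an $\Theta(m)\times\Theta(m)$ array, so merely writing $M$ down costs $\Omega(m^2)$, and no setting of $T$ and $\Delta$ can amortize this below $m^{2/3}$. The paper's fix is non-obvious: it first isolates ``tiny'' vertices (degree below $m^{1/3-2\eps}$), shows that all paths through them and all queries involving them can be handled directly in $O(m^{2/3-\eps})$ time, and only then is it entitled to assume each layer has at most $m^{2/3+2\eps}$ relevant vertices. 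With that reduction in place, the periodic matrix products are between $m^{2/3+2\eps}$-dimensional matrices, which is what makes the phase constraint $(2\omega+1)\eps + (\omega-1)\tfrac{2}{3} \le 1-\delta$ (and hence the $\omega<5/2$ threshold and your stated values of $\eps$) actually come out.

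Second, your claim that ``every case in which at least one of $w,x$ is light can be evaluated in $\mathrm{poly}(\Delta)$ time'' sweeps the real difficulty under the rug, because you never case on the degrees of the query endpoints $u^*,v^*$. If $v^*$ is itself high-degree you cannot iterate over its light neighbors in $L_3$, and a single heavy/light threshold does not suffice: the paper needs three classes $H,M,L$ for the endpoint layers and two classes $D,S$ for the intermediate layers, and the two genuinely hard subcases turn out to be symmetric opposites---$HH$ queries through two \emph{sparse} intermediates, and $LL$ queries through two \emph{dense} intermediates---neither of which is your ``all-heavy'' case. Finally, your buffered-correction step (``re-evaluate the contribution of each buffered update via the same local enumeration'') is exactly where the paper spends most of its effort: splitting the buffered phase into eight old/new combinations across $A,B,C$, each handled differently, and in particular the combination $A_{\mathrm{old}}\,B_{\mathrm{new}}\,C_{\mathrm{old}}$ requires an entirely separate sub-algorithm (with its own chunk decomposition and rectangular-FMM constraints) because neither endpoint enumeration nor the precomputed products apply to it.
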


This is the first result to demonstrate that fast matrix multiplication can be leveraged to achieve a speedup in join size estimation problems in dynamic graphs. While this approach has been previously explored for static algorithms \cite{tetekicalp22,XiaoFMMQuery}, it had not been extended to the dynamic setting until now.
For example, in the static case, fast matrix multiplication enables solving the triangle query in time $O(m^{\frac{2\omega}{\omega + 1}})$ \cite{alon1997finding}, which is faster than the traditional combinatorial $m^{3/2}$ algorithm since $\omega < 3$. However, in dynamic settings, such a gap between combinatorial and non-combinatorial algorithms had not been observed. The triangle query, for instance, can be maintained in $O(m^{1/2})$ update time using a combinatorial algorithm, which is optimal under the OMv conjecture. 
A widely held belief was that fast matrix multiplication does not help in dynamic settings, as algorithms like Strassen's require prior knowledge of the two input matrices. 
Consequently, it was assumed that Incremental View Maintenance (IVM) would not benefit from fast matrix multiplication.
Our result challenges that intuition by providing a counterexample using the $4$-cycle query. 
This insight could potentially inspire a new class of IVM algorithms.

It is also important to note that we get a worst-case bound on the update-time and not an amortized bound. On the surface it looks like we are multiplying large matrices, so the bound should be amortized, but we answer initial queries quickly because there are only a few edges and in parallel compute the matrix products using fast matrix multiplication to speed up later queries.
We now discuss our techniques in more detail.

\subsection*{Our Techniques}
To solve the problem of maintaining the total number of $4$-cycles we count the 
number of cycles through the new update.
Thus, the algorithm remains the same (counts the number of cycles through the new edge update) whether 
the update is an insertion or deletion.
The answer for the total number of $4$-cycles after the update is the total number of 
$4$-cycles before the update plus or minus the number of cycles through the new edge update depending 
on whether it is an insertion or deletion.
We count the number of cycles through the new edge update by finding the number of $3$-paths 
between the endpoints of the edge update.
In the simple algorithm (\Cref{sec:alg-trivial}), we do this by storing a data structure 
for the number of $2$-paths (paths with $2$ edges also called \emph{wedges}) between all pairs of vertices.
During a query $(u,v)$, we count the number of $3$-paths between $u,v$ by going over all neighbors of $u$ and then using the stored 
number of wedges to $v$ and then summing them up to get the number of $3$-paths.
This gives a worst-case update time of $O(n)$.
We now want to get the update time purely as a function of $m$, the number of edges.
The goal is to minimize the worst-case update time over all edge updates.
This might be slightly tricky because we want the update time to be purely a function of $m$, the current 
number of edges in the graph, but $m$ changes with every insertion or deletion.
Thus, the goal is to come up with an algorithm with a worst-case update time of $O(m^x)$ with the smallest 
possible value of $x$.

\paragraph{Algorithm of Previous Work}
\cite{hanauer_et_al:LIPIcs.SAND.2022.18} is successful in getting the update time purely as a function of $m$ 
and give an upper bound of $O(m^{2/3})$ for the worst-case update time.
Their approach is to group vertices of the graph into high and low degree vertices and compute different data structures for them which can be maintained in $O(m^{2/3})$ update time and use them to answer queries in $O(m^{2/3})$ update time.
The data structures they maintain for low degree vertices are wedges through low degree vertices and 
$3$-paths through two low degree vertices, both of which can be maintained because iterating over pairs of 
neighbors of low degree vertices can be done easily.
For the high degree vertices, they maintain wedges through high degree vertices but only for pairs of vertices 
that have a high degree. Maintaining this is easy because the number of high degree vertices is small.
For the query, the number of paths through two low degree vertices is stored.
The number of paths through one high and one low degree vertex can be found easily by iterating over all high degree vertices and using the wedges through low degree vertices.
To calculate the number of paths through two high degree vertices, they do different things depending on the degrees of the endpoints of the query. 
If the endpoints have a low degree then they iterate over their neighbors otherwise they use the wedges through high degree vertices.
This is enough to give them an update time of $O(m^{2/3})$.

\paragraph{Our Algorithm}
To improve the upper bound of \cite{hanauer_et_al:LIPIcs.SAND.2022.18} we crucially use fast matrix 
multiplication. 
We group the vertices into many classes based on their degree and the layer they are in and create data 
structures for them and use them to answer queries in update time $O(m^{2/3-\eps})$. 
The main technical part is to solve the problem under the assumption that the number of vertices in each layer $n=m^{2/3}$ (this is technically 
incorrect because we assume $n=m^{2/3+2\eps}$ but this is morally correct).
Dropping this assumption is highly non-trivial, but we show how to do that in \Cref{sec:tiny-vertices}.
The idea is that for vertices that have a very small degree, we can answer queries for them and compute 
paths through them in $O(m^{2/3-\eps})$ update time, so we can ignore them for the rest of the analysis.
This gives us a lower bound on the degree of the remaining vertices and hence we can assume that the 
number of relevant vertices is small ($m^{2/3}$).

The vertices in the first and last layer are divided into high, medium, and low degree.
The vertices in the second and third layer are divided into high and low degree called dense and sparse 
vertices respectively.
We can iterate over the neighbors of low and medium degree vertices within the update time, and we can 
iterate over all the medium and high degree vertices within the update time.
The first data structure we store is wedges through sparse vertices (similar to wedges through low degree 
vertices).
Next we store wedges through dense vertices but the endpoints of the wedges have to be either high, 
medium or dense (similar to wedges through high degree vertices, but the endpoints are also high degree).
Using these we can answer queries when one of the endpoints is medium degree which without loss of 
generality is in the first layer.
We can iterate over the neighbors of the vertex in the first layer and then use the number of wedges through 
sparse vertices to the other endpoint.
To get the paths through the dense vertices in the third layer, we iterate over all of them and check if they 
have an edge to the endpoint in the last layer.
We then use the wedges through sparse vertices and wedges through dense vertices to the endpoint in the 
first layer.
These details are similar to the algorithm of \cite{hanauer_et_al:LIPIcs.SAND.2022.18}.

The first difficult case is when both the endpoints of the query are high degree, and we have to calculate the 
number of paths through two sparse vertices.
The algorithm of \cite{hanauer_et_al:LIPIcs.SAND.2022.18} had the number of $3$-paths through two low 
degree vertices stored, but we cannot afford to store the number of $3$-paths through two sparse vertices.
This is where our crucial idea of \textbf{phases} comes in.
We partition the stream of updates into phases of size $m^{1-\delta}$, and it is important that the size of a 
phase is asymptotically smaller than $m$, so we need $\delta>0$.
We then consider eight different types of $3$-paths where each edge of the $3$-path is either in the current 
phase or in an old phase.
For most types of these $3$-paths, we are able to store the number of $3$-paths through two sparse 
vertices like the data structure for the number of $3$-paths through two low degree vertices in 
\cite{hanauer_et_al:LIPIcs.SAND.2022.18}.
This is because the number of high degree vertices when considering the edges of a single phase is 
asymptotically smaller compared to the number of high degree vertices when considering all the edges.
Using these ideas (and some others) we are able to answer queries when both endpoints of the query have a 
high degree.

Another difficult case is when both the endpoints of the query are low degree, and we have to calculate the 
number of paths through two dense vertices.
We cannot iterate over the neighbors of both endpoints like we did in the algorithm of 
\cite{hanauer_et_al:LIPIcs.SAND.2022.18}.
We need to use another property of phases here.
A phase should be long enough so that in the time it takes to process all the edge 
updates in a phase, we are able to multiply two square matrices of dimension $m^{2/3}$.
During a phase, we compute the number of paths between all pairs of vertices in the previous phase using 
fast matrix multiplication.
This allows us to iterate over the neighbors of an endpoint and use the stored number of wedges to the other 
endpoint that go through the edges of the old phases.
To deal with edges that go through the new phase we take advantage of its size and iterate over all pairs of 
dense vertices in the second and third layer (one of which has just $O(m^{1-\delta})$ edges incident on it) 
and 
compute the number of paths through them.
A combination of these ideas along with some others allows us to answer queries when both endpoints have a 
low degree.

%

Note that our algorithm is not as simple as running the algorithm of 
\cite{hanauer_et_al:LIPIcs.SAND.2022.18} for the new phase (this would give a 
better runtime 
because we have asymptotically fewer edges) and then using matrix multiplication to calculate the number of 
paths between all pairs of vertices in the old phases. 
This is because we also need to look at paths that have some edges in the old 
phases and some in 
the new phase.
We spell out the full details of the algorithm in \Cref{sec:main-alg} with the setup 
being in 
\Cref{sec:setup-main}.
We also state a warm up to the main algorithm in \Cref{sec:alg-simple} which solves 
the problem 
under an assumption.
The purpose of stating this algorithm is  two-fold. 
The algorithm is simpler so this gives a flavor for the main algorithm, and we will also 
use it as a 
subroutine in our main algorithm.


\paragraph{Related Work.}
We now discuss relevant related work.
\cite{eppstein2009h} studied the problem of counting the number of $3$-vertex subgraphs in a fully 
dynamic graph and gave an algorithm with update time $O(h)$ where $h$ is the $h$-index of the graph, the 
maximum number such that the graph contains $h$ vertices of degree at least $h$.
\cite{EPPSTEIN201244} extend the results of \cite{eppstein2009h} to directed subgraphs and to subgraphs 
of size $4$.
For the directed case, they provide an algorithm for all three-vertex induced subgraphs in $O(h)$ amortized 
update time. 
For the undirected case, they provide an algorithm for maintaining the counts of $4$-vertex subgraphs in 
$O(h^2)$ amortized update time. 
\cite{henzinger2022complexity} showed that counting $4$-cycles is hard even in random graphs by giving a 
lower bound of $\Omega(m^{1/2-\gamma})$ for any $\gamma>0$, for the update time.
Finally, \cite{hanauer2022recent} provides a comprehensive survey on recent advances in fully dynamic 
graph algorithms.

Outside maintaining the join size in dynamic graphs, there has also been related work on join algorithms \cite{ngo2014skew,NgoOptimalJoin}, join size estimation and join sampling \cite{Dengetal23,Kimetal23} among other areas.
There is also a vast body of work on cycle detection and approximately counting the number of cycles for both directed and undirected graphs starting with \cite{alon1997finding,Yusteretal97} and and culminating in the current best bounds \cite{tetekicalp22,censorhillel_et_al24} (see additional references therein).
The problem of approximately counting subgraphs, particularly triangles and $4$-cycles, has attracted significant attention in the streaming setting \cite{JowhariNew05,BuriolCounting06,BecchettiEfficient10,KaneCounting12,McGregorVV16,KallaugherMPV19,McGregorVorotnikova20,vorotnikova2020improved}. Finally, there is also a long line of work for listing cycles beginning with the classic algorithms \cite{TarjanEnumeration73,JohnsonFinding75} with recent advancements focusing on efficiently listing $6$-cycles in both dense and sparse graphs \cite{Jinlisting24,williams2024listing6cyclessparsegraphs}.


\section{Preliminaries}\label{sec:prelim}

\subsection{Notation.}\label{subsec:prelim-notation}
For a graph $G=(V,E)$, we use $n$ to represent the number of vertices ($\card{V}$) and $m$ to represent the number of edges ($\card{E}$).
We use $\deg(v)$ and $N(v)$ for each vertex $v \in V$ to denote the degree (the number of edges incident on the vertex) and neighborhood (the set of neighbors) of $v$, respectively. 
Note that in this paper, we will only consider unweighted undirected simple graphs (no multi-edges and self loops). 

A \textbf{$k$-path} in a graph is a path with $k$-edges. 
Note that we only consider simple paths, so every vertex on the path has degree exactly $2$ except for the starting and ending vertex which have degree $1$.
Formally, a $k$-path in $G$ is a sequence of distinct vertices $(v_0, v_1, \dots, v_k)$ such that $(v_i, v_{i+1}) \in E$ for all $0 \leq i < k$. 
In this paper, we will mostly just use $2$-paths (also called wedges) and $3$-paths.

A \textbf{$k$-cycle} in a graph is a cycle with $k$-edges (equivalently $k$ vertices).
We only consider simple cycles, so every vertex in the cycle has degree exactly $2$.
Formally, a $k$-cycle in $G$ is a sequence of vertices $(v_0, v_1, \dots, v_{k-1}, v_0)$ such that $(v_i, v_{i+1}) \in E$ for all $0 \leq i < k$ (with indices taken modulo $k$), and where all vertices $v_0, v_1, \dots, v_{k-1}$ are distinct. 
In this paper, we will primarily focus on $4$-cycles.

A \textbf{$4$-layered graph} is a graph $G=(V,E)$ where $V= L_1 \cup L_2 \cup L_3 \cup L_4$ is a union of four layers each containing $n$ vertices.
Each layer is an independent set (no edges exist between vertices in the same layer).
Additionally, edges only exist between the four consecutive layers and are represented by the matrices $A, B, C,$ and $D$. 

A \textbf{layered $4$-cycle} in a $4$-layered graph $G=((L_1,L_2,L_3,L_4),E)$ is a $4$-cycle defined by a sequence of vertices $(v_1, v_2, v_3, v_4, v_1)$ such that $v_i \in L_i$ and $(v_i, v_{i+1}) \in E$ for all $i \in [4]$ (where $v_5=v_1$).

We can similarly define layered $2$-paths and layered $3$-paths which have all of its vertices in different layers.
Throughout the paper, we are solely focused on layered cycles and paths in $4$-layered graphs, so
for convenience, we will drop the word ``layered''.

\begin{figure}[t]
	\centering
	\small
	\begin{minipage}{0.35\textwidth}
		\centering
	\centering
	\begin{tabular}{ |p{2cm}|p{2cm}|  }
		\hline
		\multicolumn{2}{|c|}{\textbf{Databases}} \\
		\hline
		\hline	
		{A ($L_1,L_2$)}& {B ($L_2,L_3$)} \\
		\hline
		(1,1) &(1,1)\\
		\hline
		(1,2) &(2,1)\\
		\hline
		(1,3) &(3,1)\\
		\hline
		(2,2) &(3,3)\\
		\hline
		(3,2) &\\
		\hline
	\end{tabular}
	\end{minipage}
	\hfill
	\begin{minipage}{0.2\textwidth}
		\centering
	\centering
	\begin{tabular}{ |p{2cm}|  }
		\hline
		\multicolumn{1}{|c|}{\textbf{$A \Join B$}} \\
		\hline
		(1,1,1)\\
		\hline
		(1,2,1)\\
		\hline
		(1,3,1) \\
		\hline
		(1,3,3) \\
		\hline
		(2,2,1)\\
		\hline
		(3,2,1) \\
		\hline
	\end{tabular}
	\end{minipage}
	\hfill
	\begin{minipage}{0.35 \textwidth}
		\centering
    		\resizebox{110pt}{80pt}{ \begin{tikzpicture}
	\node[vertexA] at (0,0) (a1) {};
	\node[vertexA] at (0,1) (a2) {};
	\node[vertexA] at (0,2) (a3) {};
	\node[vertexA] at (2,0) (b1) {};
	\node[vertexA] at (2,1) (b2) {};
	\node[vertexA] at (2,2) (b3) {};
	
	\node[vertexA] at (4,0) (c1) {};
	\node[vertexA] at (4,1) (c2) {};
	\node[vertexA] at (4,2) (c3) {};
			
	\draw[edge] (a1) -- (b3);
	\draw[edge] (a2) -- (b2);
	\draw[edge] (a1) -- (b1);
	\draw[edge] (a1) -- (b2);
	\draw[edge] (a3) -- (b2);
	
	\draw[edge] (b3) -- (c1);
	\draw[edge] (b3) -- (c3);
	\draw[edge] (b2) -- (c1);
	\draw[edge] (b1) -- (c1);
	
	\draw (1,-0.5) node [anchor=north west][inner sep=0.75pt]   [align=left] 
	{{$A$}};
	
	\draw (3,-0.5) node [anchor=north west][inner sep=0.75pt]   [align=left] 
	{{$B$}};

\draw (0,3) node [anchor=north west][inner sep=0.75pt]   [align=left] 
{{$L_1$}};

\draw (2,3) node [anchor=north west][inner sep=0.75pt]   [align=left] 
{{$L_2$}};

\draw (4,3) node [anchor=north west][inner sep=0.75pt]   [align=left] 
{{$L_3$}};

\draw (-0.7,2.2) node [anchor=north west][inner sep=0.75pt]   [align=left] 
{{$3$}};

\draw (-0.7,1.2) node [anchor=north west][inner sep=0.75pt]   [align=left] 
{{$2$}};

\draw (-0.7,0.2) node [anchor=north west][inner sep=0.75pt]   [align=left] 
{{$1$}};

\end{tikzpicture} }			
	\end{minipage}
	
	\caption{Binary Relations $A,B$. Every element of the join ($A \Join B$) corresponds to a $2$ -paths in the layered graph between $L_1$ and $L_3$. Thus, the size of the join ($A \Join B$) is the number of $2$-paths in the layered graph.}
	\label{fig:join-layered}
\end{figure}
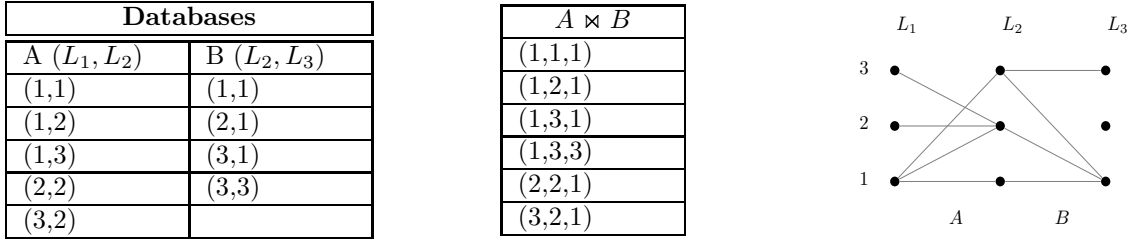

\paragraph{Fast Matrix Multiplication.}
We now discuss some notation for fast matrix multiplication.
Note that $\omega$ is the exponent of square matrix multiplication. 
This means that the time it takes 
to multiply two $n \times n$ matrices is $O(n^{\omega})$.
The current best known upper bound on $\omega$ is $2.371339$ 
\cite{alman2024more}.
We will also need fast rectangular matrix multiplication in our algorithm.
We define $\omega(a,b,c)$ to be the exponent of rectangular matrix multiplication 
of an $n^a \times 
n^b$ dimension matrix by an $n^b \times n^c$ dimension matrix.
\cite{alman2024more} also gives improved upper bounds for $\omega(a,b,c)$.

\subsection{Equivalent Queries.}
After every update to the simple graph, the query we have to answer is the exact total number (count) of distinct $4$-cycles in the current graph.
We will show in \Cref{sec:cycles-general-graphs} that this problem is equivalent in general and layered graphs, so we will just focus on layered graphs.
Thus, the new problem is the following.
After every update to the simple $4$-layered graph, the query we have to answer is the exact total number of distinct layered $4$-cycles in the current graph.
This is the same problem as having $4$ binary relations that are continuously updated, where the query answered after every update is the exact total number of tuples in their cyclic join.

It is easy to see that maintaining the total number of layered $4$-cycles is equivalent to just counting the number of new layered $4$-cycles i.e.\ cycles that go through the new edge update (insertion or deletion).
This is because we can add or subtract cycles that go through the new edge update from the previous answer to get the new answer.
We answer the query of the number of new layered $4$-cycles by counting the number of $3$-paths between the endpoints of the edge update.

Without loss of generality, we assume that queries have to be answered only when there is an edge 
update in $D$ because we can run $4$ copies of this algorithm, one for queries in $D$ and the others for queries in $A,B,$ and $C$.
After each update, we ask the relevant instance of the algorithm to provide the answer.
This means we now focus on queries in $D$ while handling edge updates in $A,B,$ and $C$. 
To efficiently answer queries in $D$, we maintain appropriate data structures for edges in $A,B$ and $C$.
Note that we maintain the same data structures for all $4$ copies of the algorithm (corresponding to matrices $A,B,C$ and $D$). 
By symmetry, the analysis is the same for all of them, so we just focus on the case when queries are in $D$ and edge updates are everywhere else.
The final equivalent problem we will solve is the following:
 \begin{quote}
We are given a $4$-layered graph $G$ which undergoes edge updates in $A,B$ and $C$ each of which must be processed in $O(m^{2/3-\eps})$ worst-case time. 
At any point, a query $(u,v)$ can be made ($u \in L_1$ and $v \in L_4$) for the number of $3$-paths between $u$ and $v$ that go through $A,B$ and $C$ which should be answered in $O(m^{2/3-\eps})$ worst-case time.
\end{quote}

\section{Warm-up: When A and C are Fixed}\label{sec:alg-simple}
In this section, we give an algorithm for counting $4$-cycles in a $4$-layered graph under a few 
assumptions.

\subsection{Setup}\label{sec:setup-simple}

We let the worst-case update time be $O(m^{2/3-\e1})$ and we will show we can get $\e1>0$ to 
be a constant.
We first discuss the data structures we need to store and then show how the queries can be 
answered in time $O(m^{2/3-\e1})$.

Before we discuss the data structures, we group the vertices into different \textbf{classes}.
The vertices in $L_1$ (resp. $L_4$) are partitioned into $H,M,L$ based on their degree in $A$ (resp. 
$C$).
\begin{itemize}
	\item \textbf{High} ($H$): degree between $m^{2/3-\e1}$ and $n$.
	\item \textbf{Medium} ($M$): degree between $m^{1/3+\e1}$ and $2 m^{2/3-\e1}$.
	\item \textbf{Low} ($L$): degree between $0$ and $2 m^{1/3+\e1}$.	
\end{itemize}

The edge updates to $B$ are divided into \textbf{chunks} of size $m^{2/3 - \e1}$ called 
$B_1,B_2,\ldots B_i$.
For a chunk $B_i$, vertices in $L_2$ and $L_3$ with degree in $B_i$ at most $m^{1/3-\e2}$ are 
in \textbf{Sparse} ($S$) and degree in $B_i$ at least $m^{1/3-\e2}$ are in 
\textbf{Dense} ($D$).
Note that these classes are different for each $B_i$, so a vertex could be in $S$ for 
some chunk $B_i$ and in $D$ for some other chunk $B_j$.
We also partition the edges of $B_i$ based on whether the endpoints are in $D$ or 
$S$. Thus, 
$B_i$ can be written as a sum of four matrices: $B_i= B_{i,DD} + B_{i,SS} + B_{i,DS} 
+ B_{i,SD}$.
Also, we denote by $B_{<i}:=\sum_{j=1}^{i-1} B_j$. Similarly, we have $B_{<i 
,DD}:=\sum_{j=1}^{i-1} 
B_{j,DD}$ for all the four subsets of $B_i$.
Note that these matrices may contain numerous rows and columns consisting entirely of zeros, effectively reducing their dimension for computational purposes.

We use superscripts to talk about a submatrix of a matrix by restricting the vertices in different 
layers. 
We also use $*$ to represent all the vertices in a layer.
For instance $A^{H *}$ represents the submatrix of $A$ where we only look at the 
edges 
of $A$ between $L_1^H$ and $L_2$. 
We use a subscript to represent a subset of the edges.
For instance $B_i$ contains only the edges in chunk $i$ and $B_{i,DD}$ only 
contains the edges 
between dense vertices of $B_i$.
To make things less cumbersome we will drop some parts of the notation when things are clear 
from context.
We now state the assumptions we need in this section:
\begin{restatable}{assumption}{AsmVertices}\label{asm:vertices}
	The number of vertices in each layer is $n \leq m^{2/3+2\eps}$.
\end{restatable}

\begin{restatable}{assumption}{AsmClasses}\label{asm:classes}
	The vertices do not change classes throughout the algorithm.
\end{restatable}

\begin{assumption}\label{asm:AC-fixed}
	There are no edge updates in $A$ and $C$ i.e.\ the only edge updates are in $B,D$.
\end{assumption}


We will prove the following lemma in this section:
\begin{lemma}\label{lem:layered-AC-fixed}
	\Cref{thm:alg-layer} holds under \Cref{asm:vertices}, \Cref{asm:classes}, and 
	\Cref{asm:AC-fixed}.
\end{lemma}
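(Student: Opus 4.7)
I plan to prove the lemma by describing an algorithm that, after every $B$-update, spends $O(m^{2/3-\e1})$ worst-case time maintaining data structures, and that can answer each query $(u,v)\in L_1\times L_4$ arriving from a $D$-update in $O(m^{2/3-\e1})$ worst-case time by computing the $(u,v)$-entry of the $3$-matrix product:
\begin{align*}
(ABC)_{uv}=\sum_{x\in L_2,\,y\in L_3}A_{ux}B_{xy}C_{yv}.
\end{align*}
Since $A,C$ do not change (\Cref{asm:AC-fixed}), the H/M/L classifications of $L_1$ and $L_4$ and all wedge structures built from $A$ and $C$ alone can be precomputed once and are consistent with \Cref{asm:classes}. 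I partition the stream of $B$-updates into chunks $B_1,B_2,\ldots$ of exactly $m^{2/3-\e1}$ updates each and within each chunk freeze the sparse/dense partition of $L_2,L_3$. This yields the decomposition
\begin{align*}
(ABC)_{uv}=\sum_i\sum_{\tau\in\{SS,SD,DS,DD\}}(A\,B_{i,\tau}\,C)_{uv},
\end{align*}
which I split further into ``past chunks'' ($j<i$) and the ``current chunk'' ($j=i$), handling each sub-case with a different data structure.

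\paragraph{Maintenance of sparse-middle tables.} Within the current chunk $B_i$, I incrementally maintain two sparse-wedge tables
\begin{align*}
P_i[u,y]=\!\!\sum_{\substack{x\in L_2\\ x\text{ sparse in }B_i}}\!\!A_{ux}(B_i)_{xy},\qquad Q_i[x,v]=\!\!\sum_{\substack{y\in L_3\\ y\text{ sparse in }B_i}}\!\!(B_i)_{xy}C_{yv},
\end{align*}
restricted to the index ranges that queries can read (typically $u,v\in H\cup M$, size at most $m^{2/3-\e1}$). Because a sparse endpoint in $B_i$ touches at most $m^{1/3-\e2}$ edges of $B_i$, a per-update $(x,y)$ touches only $O(\deg_A(x))$ or $O(\deg_C(y))$ entries of these tables; a degree-class case analysis will show this fits within $O(m^{2/3-\e1})$ per update. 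At each chunk boundary I fold $P_i,Q_i$ into running cumulative tables that summarize all past chunks' SS, SD, DS contributions for later query access.

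\paragraph{Dense--dense via background fast matrix multiplication.} The DD contribution is the nontrivial case: the number of dense--dense pairs summed over all past chunks is far too large to enumerate at query time. When chunk $B_{i-1}$ ends, the matrix $B_{i-1,DD}$ has dense support of size $|D_2|,|D_3|\le m^{1/3-\e1+\e2}$ per side, and I plan to compute and fold into a cumulative table $M[u,v]$ (indexed by $(H\cup M)^2$) the restricted product
\begin{align*}
\bigl(A|_{H\cup M,\,D_2}\bigr)\cdot B_{i-1,DD}\cdot\bigl(C|_{D_3,\,H\cup M}\bigr)
\end{align*}
via a rectangular fast matrix multiplication of outer dimension $m^{2/3-\e1}$ and inner dimensions $m^{1/3-\e1+\e2}$. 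I distribute this FMM evenly across the $m^{2/3-\e1}$ updates of chunk $B_i$, converting the chunk-total FMM cost into a worst-case per-update slice; a small naive structure handles $B_{i-1,DD}$'s contribution until the FMM finishes merging it into $M$.

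\paragraph{Query assembly and main obstacle.} For a query $(u,v)$: I read (a) $M[u,v]$ for the past DD contribution, (b) the past SS/SD/DS contributions from the cumulative wedge tables (iterating over $N_A(u)$ or $N_C(v)$ when the relevant endpoint is low-degree, or over the small set $H\cup M$ when it is high-degree), and (c) current-chunk contributions by direct enumeration using $|B_i|\le m^{2/3-\e1}$ edges and $|B_{i,DD}|\le |D_2|\cdot|D_3|$ dense--dense pairs. The main obstacle is parameter balancing: the background FMM must satisfy
\begin{align*}
\omega\!\left(\tfrac{2}{3}-\e1,\;\tfrac{1}{3}-\e1+\e2,\;\tfrac{2}{3}-\e1\right)\le\tfrac{4}{3}-2\e1,
\end{align*}
while the wedge-table maintenance and every query case simultaneously fit within $O(m^{2/3-\e1})$, for a single positive choice of $(\e1,\e2,\eps)$. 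Feasibility of this system requires $\omega<2.5$ — exactly the sensitivity to fast matrix multiplication highlighted in the introduction — and solving it using the current best rectangular FMM exponents yields the concrete value $\eps=0.009811$ (or $\eps=1/24$ under $\omega=2$) claimed in \Cref{thm:alg-layer}.
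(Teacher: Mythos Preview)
Your chunking framework and the S/D split are the right skeleton, but the DD step as you have written it does not go through, and this is not a parameter-tuning issue but a structural one.

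You propose to maintain a cumulative table $M[u,v]$ indexed by $(H\cup M)^2$ and fold in each chunk's DD contribution via the rectangular product $\bigl(A|_{H\cup M,D_2}\bigr)\cdot B_{i-1,DD}\cdot\bigl(C|_{D_3,H\cup M}\bigr)$. But $|H\cup M|$ can be as large as $m^{2/3-\e1}$, so the \emph{output} of this product already has $m^{4/3-2\e1}$ entries, i.e.\ exactly the chunk budget; any FMM with $\omega>2$ costs strictly more. Concretely, your constraint $\omega(2/3-\e1,\,1/3-\e1+\e2,\,2/3-\e1)\le 4/3-2\e1$ asks for $\omega(1,\beta,1)=2$ with $\beta=(1/3-\e1+\e2)/(2/3-\e1)\approx 0.7$ at the paper's parameters, far above the dual exponent $\alpha\approx 0.32$. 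So no positive $\e1$ is feasible with current $\omega$, and the claimed $\eps=0.009811$ cannot come out of this system. (The ``$\omega<2.5$'' threshold you cite belongs to the \emph{main} algorithm's phase constraint, not to this warm-up lemma.) Separately, your plan never covers the LL query case for past DD paths, since $M$ is not indexed by $L$.

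The paper avoids both problems by \emph{not} storing a DD 3-path table over all of $H\cup M$. It stores a full 3-path table only over $H\times H$ (outer dimension $m^{1/3+\e1}$, middle $m^{2/3-\e1}$, no S/D split needed there; constraint $\omega(1/3+\e1,2/3-\e1,1/3+\e1)\le 4/3-2\e1$). For $M$ it keeps only the wedge tables $A^{M*}B_{<i}$ and $B_{<i}C^{*M}$, maintained online by iterating over the $\le m^{2/3-\e1}$ vertices of $M$ on each $B$-update, and answers M-queries by iterating over neighbors of the M/L endpoint. For $L$ it uses \Cref{asm:vertices} to bound $|L|\le m^{2/3+2\eps}$ and computes wedge tables $A^{L*}B_{<i,DD}$, $A^{L*}B_{<i,S*}$ (and their $C$-side mirrors) per chunk, the DD one via FMM with constraint $\omega(2/3+2\eps,\,1/3-\e1+\e2,\,1/3-\e1+\e2)\le 4/3-2\e1$; LL queries then iterate over the $\le 2m^{1/3+\e1}$ neighbors of one endpoint and read these wedge tables. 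The missing idea in your plan is precisely this: drop from 3-path to 2-path storage whenever an endpoint class is large, and reserve the full 3-path FMM for the tiny $H\times H$ corner.
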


\paragraph{High Level Idea.}
We would like to store wedges between all pairs of vertices like in the simple 
algorithm in \Cref{sec:alg-trivial} and then during a query, iterate over the all the neighbors of one 
of the query 
vertices and use the number of wedges to the other query vertex.
There are two major problems with this approach.
We do not have enough time to iterate over all the neighbors of a high degree 
vertex. 
Also, iterating over all low degree vertices in a layer takes too much time, and so we 
cannot store the 
wedges for low degree vertices.

Thus, we create different data structures for vertices in $H,M,L$ which we show in 
the next 
subsection (\Cref{subsec:DS-simple}). 
We directly store the number of $3$-paths between pairs of high degree vertices 
because we 
cannot iterate over their neighbors.
We store the wedges for high and medium degree vertices and some types of 
wedges for the low 
degree vertices.
Lastly, we show how we use these data structures to answer queries in the 
subsection after we 
show the data structures (\Cref{subsec:queries-simple}).

\subsection{Data Structures}\label{subsec:DS-simple}

\renewcommand{\arraystretch}{1.5} 


\begin{table}[h!]
	\centering
\begin{tabular}{ |p{4cm}|p{4cm}|p{4cm}|  }
	\hline
	\multicolumn{3}{|c|}{\textbf{Data Structures}} \\
	\hline
	\hline	
	\textbf{High}& \textbf{Medium} &\textbf{Low} \\
	\hline
	$A^{H*} \cdot B_{<i}$ (\Cref{eq:DS-H-simple}) & $A^{M*} \cdot B_{<i}$ (\Cref{eq:DS-M-simple}) & 
	$A^{L*} \cdot B_{<i,DD}$ (\Cref{eq:DS-L-simple})\\
	\hline
	$A^{H*}\cdot B_{<i} \cdot C^{*H} $ (\Cref{eq:DS-H-simple}) & $ B_{<i} \cdot C^{*M}$ 
	(\Cref{eq:DS-M-simple}) & $ A^{L*} \cdot 
	B_{<i,S*}$ (\Cref{eq:DS-L-simple}) \\
	\hline
	$B_{<i} \cdot C^{*H}$	(\Cref{eq:DS-H-simple}) & & $B_{<i,DD} \cdot C^{*L}$ (\Cref{eq:DS-L-simple}) 
	\\
	\hline
	& & $B_{<i,*S} \cdot C^{*L}$ (\Cref{eq:DS-L-simple}) \\
\hline
\end{tabular}
\caption{Data Structures used in the Algorithm where $A$ and $C$ are fixed.}
\label{tab:DS-simple}
\end{table}

We first show what data structures we maintain.
We maintain some data structures on the fly as the updates arrive and show that the update time is 
$O(m^{2/3-\e1})$.
For the remaining data structures, we compute them for the previous chunk when the updates of the next chunk arrive. 
We show that this process takes $O(m^{4/3-2\e1})$ total time, so the computation can be spread over the $m^{2/3-\e1}$ updates that arrive during the chunk, leading to our desired worst-case update time.
The list of data structures can be found in \Cref{tab:DS-simple}.

\begin{lemma}\label{lem:DS-update-time-simple}
	The worst-case update time for all the data structures we store is $O(m^{2/3-\e1})$ during an 
	edge update and $O(m^{4/3-2\e1})$ if it is being updated during the insertion of a chunk.
\end{lemma}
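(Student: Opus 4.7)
The plan is to split the worst-case per-update cost into two pieces: an $O(1)$ cost of recording the edge itself, and a share of a background rebuild of the data structures that is triggered only at chunk boundaries. By \Cref{asm:AC-fixed} the only edge updates are to $B$ or $D$. No data structure in \Cref{tab:DS-simple} mentions $D$, and every one of them is built on top of $B_{<i}$ or one of its restrictions $B_{<i,DD}$, $B_{<i,SS}$, $B_{<i,DS}$, $B_{<i,SD}$; these are all frozen throughout the $m^{2/3-\e1}$ updates of chunk $i$. Thus, aside from $O(1)$ bookkeeping to append the new edge to the current chunk, no listed data structure requires any incremental maintenance during a chunk. All real work happens at chunk boundaries and must be amortized against the $m^{2/3-\e1}$ updates of the next chunk.

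The crux of the proof is therefore to show that, at the boundary between chunks $i$ and $i+1$, all data structures in \Cref{tab:DS-simple} can be rebuilt with the new $B_{<i+1}=B_{<i}+B_i$ in total time $O(m^{4/3-2\e1})$; dividing by $m^{2/3-\e1}$ then yields the claimed worst-case per-update bound. Three counting facts drive the analysis: $|L_1^H|,|L_4^H|\leq m^{1/3+\e1}$ (each high vertex has degree at least $m^{2/3-\e1}$ in a matrix with at most $m$ edges); $|L_1^M|,|L_4^M|\leq m^{2/3-\e1}$; and each chunk $B_j$ has at most $m^{1/3-\e1+\e2}$ dense vertices in $L_2$ or $L_3$, so $B_{<i,DD}$ (with at most $m$ edges in total) is concentrated on few columns per chunk.

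Given these bounds, each product in \Cref{tab:DS-simple} would be computed either by iterating over nonzero entries or by a single rectangular fast matrix multiplication on the skinny factor. For instance, $A^{H*}\cdot B_{<i}$ would be obtained by a rectangular product of the $m^{1/3+\e1}\times n$ factor $A^{H*}$ with the $n\times n$ matrix $B_{<i}$ (whose $\leq m$ nonzeros would also allow a sparse fallback), and $B_{<i}\cdot C^{*H}$ is handled symmetrically. The low-degree products such as $A^{L*}\cdot B_{<i,DD}$, $A^{L*}\cdot B_{<i,S*}$, $B_{<i,DD}\cdot C^{*L}$, $B_{<i,*S}\cdot C^{*L}$ are computed by sparse iteration that exploits the small number of nonzero columns (or rows) contributed by each chunk to the restricted matrix; combined with the cap on chunk size $m^{2/3-\e1}$ and the dense-vertex bound $m^{1/3-\e1+\e2}$, a per-row iteration bounded by the appropriate restricted degree stays within budget once $\e1$ and $\e2$ are chosen small enough.

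The main obstacle will be the triple product $A^{H*}\cdot B_{<i}\cdot C^{*H}$: its ``narrow--wide--narrow'' shape $m^{1/3+\e1}\times n\times m^{1/3+\e1}$ does not admit a cheap purely sparse evaluation, and careful parenthesization matters. I would first form $B_{<i}\cdot C^{*H}$ (already listed in \Cref{tab:DS-simple}) and then left-multiply by $A^{H*}$ as one rectangular matrix product; checking that the resulting exponent $\omega(1/3+\e1,\,2/3+2\e1,\,1/3+\e1)$ stays below $4/3-2\e1$ for some positive $\e1$ and a compatible $\e2$ is exactly the step that forces the warm-up's choice of constants and establishes $\e1>0$.
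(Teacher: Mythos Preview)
Your plan has a genuine gap: you propose to \emph{rebuild} each data structure from scratch at the chunk boundary using the full accumulated matrix $B_{<i}$, and this is too slow. Already for your own example $A^{H*}\cdot B_{<i}$, the sparse fallback iterates over the $\leq m$ nonzeros of $B_{<i}$ and for each touches all $m^{1/3+\e1}$ high rows, costing $m^{4/3+\e1}$; the dense rectangular product has inner and one outer dimension $n=m^{2/3+2\eps}$, which even at $\omega=2$ costs $m^{4/3+4\eps}$. Both exceed the $m^{4/3-2\e1}$ budget. The same obstruction hits $B_{<i}\cdot C^{*H}$ (on which your triple-product step depends) and $A^{M*}\cdot B_{<i}$, whose output can have $m^{2/3-\e1}\cdot m^{2/3+2\eps}=m^{4/3-\e1+2\eps}$ entries that you cannot even write down in time.

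What the paper actually does is \emph{incremental over chunks}: it never recomputes anything on $B_{<i}$. For each chunk $j$ it produces the chunk-local terms $A^{H*}\cdot B_j$, $(A^{H*}\cdot B_j)\cdot C^{*H}$, $A^{M*}\cdot B_j$, $A^{L*}\cdot B_{j,DD}$, $A^{L*}\cdot B_{j,S*}$ and adds them into running sums---this is exactly where \Cref{asm:AC-fixed} is used, since $A,C$ fixed makes the sums telescope. The products $A^{H*}\cdot B_j$ and $A^{M*}\cdot B_j$ are maintained on the fly per edge update at cost $O(m^{1/3+\e1})$ and $O(m^{2/3-\e1})$ respectively (\Cref{clm:H-online-simple,clm:M-online-simple}), so your assertion that no listed data structure needs incremental maintenance is false for the paper's scheme. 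Crucially, working with a single chunk $B_j$ of only $m^{2/3-\e1}$ edges shrinks the inner dimension of the triple product to $m^{2/3-\e1}$, giving the constraint $\omega(1/3+\e1,\,2/3-\e1,\,1/3+\e1)\leq 4/3-2\e1$ of \Cref{constraint:DS-H-simple} with a \emph{negative} $\e1$ in the middle exponent, not the $2/3+2\e1$ you wrote.
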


Consider one of these data structures, for instance $A^{H*} B_{<i}$. It can be expressed in database terminology as follows:
We start with the binary relation $A$ and select the tuples where the first attribute corresponds to a high-degree vertex in $L_1$, denoted as $A^{H*}$. For the binary relation $B$, we filter the tuples to include only those belonging to the first $i-1$ chunks, denoted as $B_{<i}$. We then perform a join between the filtered relations $A^{H*}$ and $B_{<i}$ on their shared attribute (vertices in $L_2$). 
Finally, we aggregate the result over pairs of vertices in $L_1$ and $L_3$, by computing the count of two-hop paths between them.

\paragraph{Data Structures for High vertices.}
We maintain the following data structures for high degree vertices: 
\begin{align}\label{eq:DS-H-simple}
	A^{H*} \cdot B_{<i} \; , \;\; A^{H*}\cdot B_{<i} \cdot C^{*H} \text{ and } \; B_{<i} \cdot 
C^{*H}.	
\end{align}

\begin{claim}\label{clm:H-online-simple}
	We can maintain $A^{H*} \cdot B_i$ and $B_i \cdot C^{*H}$ on the fly in 
	worst-case update time 
	$O(m^{1/3+\e1})$.
\end{claim}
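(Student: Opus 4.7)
The plan is to maintain both products incrementally, applying a small local update after each edge change in the current chunk $B_i$. First, I would bound the sizes of the two high-degree classes involved. Since every $x \in L_1^H$ has $\deg_A(x) \geq m^{2/3-\e1}$ and $\sum_x \deg_A(x) \leq 2m$, we get $|L_1^H| \leq 2m^{1/3+\e1}$, and symmetrically $|L_4^H| \leq 2m^{1/3+\e1}$. By \Cref{asm:classes} these sets are fixed throughout the algorithm, so we can precompute and store them, and by \Cref{asm:AC-fixed} the matrices $A^{H*}$ and $C^{*H}$ themselves never change.

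Next, I would describe what happens when an edge update $(u,v)$ with $u \in L_2$, $v \in L_3$ arrives in $B_i$ (insertion or deletion). Toggling the entry $B_i[u,v]$ changes $(A^{H*} \cdot B_i)[x,v]$ by exactly $\pm A^{H*}[x,u]$ for each $x \in L_1^H$, and all other entries of $A^{H*} \cdot B_i$ are unaffected. Thus updating this product amounts to reading the (already stored) column $A^{H*}[\cdot,u]$ restricted to $L_1^H$, and adding $\pm 1$ times each of its entries into column $v$ of the stored product. This costs $O(|L_1^H|) = O(m^{1/3+\e1})$. The argument for $B_i \cdot C^{*H}$ is completely symmetric: only row $u$ of the product is touched, and each entry $(B_i \cdot C^{*H})[u,y]$ changes by $\pm C^{*H}[v,y]$ for $y \in L_4^H$, costing $O(|L_4^H|) = O(m^{1/3+\e1})$.

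Because insertions and deletions only flip a single entry of $B_i$, the same incremental update rule applies uniformly, and the bound is worst-case rather than amortized. There is no real obstacle here beyond noting that the two ``high'' classes are genuinely small — that is the only place the definition of $H$ (degree $\geq m^{2/3-\e1}$) enters — and that we can afford to touch every member of these classes per update. Putting these observations together yields the claimed $O(m^{1/3+\e1})$ worst-case update time.
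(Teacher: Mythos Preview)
Your proposal is correct and follows essentially the same approach as the paper: iterate over all vertices in $L_1^H$ (respectively $L_4^H$), whose size is $O(m^{1/3+\e1})$ by a degree-counting argument, and update the single affected column (respectively row) of the stored product. You supply more detail than the paper's two-line proof (in particular you explicitly derive the bound on $|L_1^H|$ and spell out the incremental update formula), but the underlying idea is identical.
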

\begin{proof}
	Consider the data structure $A^{H*} \cdot B_i$ (the analysis for $B_i \cdot 
	C^{*H}$ is symmetric).
	When we get an edge update $(u,v)$ in $B_i$ we iterate over all $m^{1/3+\e1}$ vertices in 
	$L_1^H$ and update the data structure.
\end{proof}

\begin{claim}\label{clm:H-offline-simple}
	Multiplying $(A^{H*} \cdot B_i)$ with $C^{*H}$ can be done in time 
	$m^{\omega({1/3+\e1},{2/3-\e1},{1/3+\e1})}$.
\end{claim}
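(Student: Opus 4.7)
The plan is to compute the product $(A^{H*} B_i) \cdot C^{*H}$ by viewing it as a rectangular matrix multiplication whose three dimensions are exactly $m^{1/3+\e1}$, $m^{2/3-\e1}$, and $m^{1/3+\e1}$, so that by definition the cost is $m^{\omega(1/3+\e1,\,2/3-\e1,\,1/3+\e1)}$.

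First, I would bound the number of rows and columns on the two outer sides. Since every vertex in $L_1^H$ has degree at least $m^{2/3-\e1}$ in $A$ and the total number of edges of $A$ is at most $m$, a standard degree-counting argument gives $|L_1^H| \leq m / m^{2/3-\e1} = m^{1/3+\e1}$. The identical argument applied to $C$ gives $|L_4^H| \leq m^{1/3+\e1}$. Thus $A^{H*} B_i$ has at most $m^{1/3+\e1}$ rows and $C^{*H}$ has at most $m^{1/3+\e1}$ columns.

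The key observation concerns the middle dimension. A priori the middle dimension is $|L_3|$, which could be as large as $n \leq m^{2/3+2\e1}$, so one cannot simply multiply the two matrices in full. However, since $B_i$ is a single chunk of at most $m^{2/3-\e1}$ edges, the number of vertices in $L_3$ that are incident to any edge of $B_i$ is at most $m^{2/3-\e1}$. Consequently, the matrix $A^{H*} B_i$ has at most $m^{2/3-\e1}$ nonzero columns (indexed by those $L_3$ vertices). For the multiplication with $C^{*H}$, every row of $C^{*H}$ corresponding to a zero column of $A^{H*} B_i$ contributes nothing to the product, so we may restrict $C^{*H}$ to the corresponding $\leq m^{2/3-\e1}$ rows without changing the result.

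After these restrictions, we are multiplying a matrix of dimensions at most $m^{1/3+\e1} \times m^{2/3-\e1}$ by a matrix of dimensions at most $m^{2/3-\e1} \times m^{1/3+\e1}$, which by the definition of the rectangular matrix multiplication exponent takes time $m^{\omega(1/3+\e1,\,2/3-\e1,\,1/3+\e1)}$. The main obstacle to be aware of is precisely that one must exploit the chunk-size bound on $B_i$ to collapse the middle dimension from $n$ down to $m^{2/3-\e1}$; otherwise the statement of the claim would not hold. No matrix entry needs to be stored or enumerated outside the restricted submatrices, since the missing rows of $C^{*H}$ are multiplied only by zero entries of $A^{H*} B_i$.
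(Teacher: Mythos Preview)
Your proof is correct and follows essentially the same approach as the paper: both bound the outer dimensions by $m^{1/3+\e1}$ via degree counting on high vertices, and both collapse the middle dimension to $m^{2/3-\e1}$ by observing that $B_i$ has only $m^{2/3-\e1}$ edges and hence touches at most that many vertices in $L_3$. Your version is slightly more explicit about why the unused rows of $C^{*H}$ can be discarded, but the argument is the same.
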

\begin{proof}
	We know that $B_i$ has at most $m^{2/3-\e1}$ edges.
	Thus, the dimension of $(A^{H*} \cdot B_i)$ is $m^{1/3+\e1} \times m^{2/3-\e1}$ 
	and the 
	dimension of $C^{*H}$ is $m^{2/3-\e1} \times m^{1/3+\e1}$.
	This is because we only have to consider vertices in $L_3$ that have a non-zero degree, giving us a dimension of $m^{2/3-\e1}$ instead of $m^{2/3+2\eps}$.
	Thus, using rectangular fast matrix multiplication we can multiply $(A^{H*} \cdot 
	B_i)$ with $C^{*H}$ 
	in time $m^{\omega({1/3+\e1},{2/3-\e1},{1/3+\e1})}$.
\end{proof}

We need to compute this data structure for every chunk in the time that the next chunk arrives.
This gives us a constraint $m^{\omega({1/3+\e1},{2/3-\e1},{1/3+\e1})}\leq m^{4/3-2\e1}$ 
which implies:
\begin{constraint}\label{constraint:DS-H-simple}		
		{\omega({1/3+\e1},{2/3-\e1},{1/3+\e1})} \leq 4/3 -2\e1.
\end{constraint}

\paragraph{Data Structures for Medium vertices.}
We maintain the following data structures for medium degree vertices: 
\begin{equation}\label{eq:DS-M-simple}
	A^{M*} \cdot B_{<i} \text{ and } \; B_{<i} \cdot C^{*M}.	
\end{equation}

\begin{claim}\label{clm:M-online-simple}
	We can maintain $A^{M*} \cdot B_i$ and $B_i \cdot C^{*M}$ in time 
	$O(m^{2/3-\e1})$.
\end{claim}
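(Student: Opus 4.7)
The plan is to show that each insertion or deletion to $B_i$ can be absorbed into both data structures in time $O(m^{2/3-\e1})$ by iterating over medium-degree vertices rather than over the neighbors of the endpoints of the update (whose degrees in $A$ or $C$ are unbounded for vertices in $L_2, L_3$).

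The key preliminary observation is the size bound $|L_1^M|, |L_4^M| = O(m^{2/3-\e1})$. Indeed, by definition every medium vertex has degree at least $m^{1/3+\e1}$ in $A$ (respectively $C$), so summing degrees gives $|L_1^M| \cdot m^{1/3+\e1} \leq 2m$, i.e.\ $|L_1^M| \leq 2m^{2/3-\e1}$, and symmetrically for $L_4^M$. Under Assumption~\ref{asm:AC-fixed} the matrices $A$ and $C$ are static, so at initialization I would build a hash-table representation of $A$ and $C$ supporting $O(1)$ adjacency queries $(w,u) \in A?$ and $(v,w) \in C?$; this does not affect the update time.

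Now consider an insertion of an edge $(u,v)$ to $B_i$, with $u \in L_2$ and $v \in L_3$ (deletions are handled identically with sign flipped). For the matrix $A^{M*} \cdot B_i$, only entries in column $v$ can change, and for each $w \in L_1^M$ the entry $(A^{M*} \cdot B_i)[w, v]$ changes by exactly the indicator $A[w,u] \in \{0,1\}$. I would therefore loop over the $O(m^{2/3-\e1})$ vertices $w \in L_1^M$, query $A[w,u]$ in $O(1)$, and if it equals $1$ increment $(A^{M*} \cdot B_i)[w,v]$ by $\pm 1$. The total cost is $O(m^{2/3-\e1})$, as required.

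The update of $B_i \cdot C^{*M}$ is entirely symmetric: only row $u$ can change, and for each $w \in L_4^M$ the change equals $\pm C[v,w]$. Iterating over the $O(m^{2/3-\e1})$ vertices of $L_4^M$ and querying $C$ in $O(1)$ per vertex gives the same $O(m^{2/3-\e1})$ bound. The main (minor) obstacle is simply recognizing that we must scan medium vertices rather than the neighborhoods of $u,v$, since $u$ and $v$ may be incident to arbitrarily many edges in $A$ and $C$; the size bound on $L_1^M$ and $L_4^M$ makes the scan affordable.
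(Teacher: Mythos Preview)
Your proof is correct and follows essentially the same approach as the paper: iterate over the $O(m^{2/3-\e1})$ vertices in $L_1^M$ (respectively $L_4^M$) for each edge update in $B_i$ and update the affected entries. You provide more explicit detail on the size bound for $L_1^M$ and the mechanics of the update, but the core argument is identical.
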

\begin{proof}
	Consider the data structure $A^{M*} \cdot B_i$ (the analysis for $B_i \cdot 
	C^{*M}$ is symmetric).
	When we get an edge update $(u,v)$ in $B_i$ we iterate over all $m^{2/3-\e1}$ vertices in 
	$L_1^M$ and update the data structure.
\end{proof}

\paragraph{Data Structures for Low vertices.}
Consider the low degree vertices in $L_1$ (analogous for $L_4$).
$A^{L*}$ is the submatrix of $A$ only incident on low degree vertices in $L_1$.
We compute the following data structures: 
\begin{equation}\label{eq:DS-L-simple}
	A^{L*} \cdot B_{<i,DD} \; , \;\;  A^{L*} \cdot B_{<i,SS} \; , \;\; A^{L*} \cdot
B_{<i,SD} \text{ and } \; B_{<i,DD} \cdot C^{*L} , \;\; B_{<i,SS} \cdot C^{*L} , \;\; B_{<i,DS} \cdot 
C^{*L}.
\end{equation}

\begin{claim}\label{clm:L-HH-simple}
	Computing $A^{L*} \cdot B_{i,DD}$ can be done in time 
	$m^{\omega({2/3+2\eps},{1/3-\e1+\e2},{1/3-\e1+\e2})}$.
\end{claim}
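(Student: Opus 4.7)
The plan is to reduce the product $A^{L*}\cdot B_{i,DD}$ to a rectangular matrix multiplication whose effective dimensions are exactly $m^{2/3+2\eps} \times m^{1/3-\e1+\e2}$ and $m^{1/3-\e1+\e2} \times m^{1/3-\e1+\e2}$, and then invoke fast rectangular matrix multiplication. The only thing to check is that the three dimensions above are correct, which comes down to two simple counts: the number of low-degree vertices in $L_1$ and the number of dense vertices in $L_2$ and $L_3$ for the chunk $B_i$.

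For the first count, I would simply use \Cref{asm:vertices}: each layer has at most $n\leq m^{2/3+2\eps}$ vertices, so the row dimension of $A^{L*}$ is at most $m^{2/3+2\eps}$. For the second count, I would argue as follows. By definition, every dense vertex has degree at least $m^{1/3-\e2}$ in $B_i$, and by construction $B_i$ is a chunk of at most $m^{2/3-\e1}$ edges. A standard counting gives at most $m^{2/3-\e1}/m^{1/3-\e2}=m^{1/3-\e1+\e2}$ dense vertices in each of $L_2$ and $L_3$.

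With these counts in hand, the next step is to observe that $B_{i,DD}$ is supported only on the rows (resp. columns) indexed by dense vertices of $L_2$ (resp. $L_3$), so the columns of $A^{L*}$ indexed by non-dense $L_2$ vertices contribute nothing to $A^{L*}\cdot B_{i,DD}$. Therefore, the product equals the product of the restricted matrices of dimensions $m^{2/3+2\eps}\times m^{1/3-\e1+\e2}$ and $m^{1/3-\e1+\e2}\times m^{1/3-\e1+\e2}$, which can be computed in time $m^{\omega(2/3+2\eps,\,1/3-\e1+\e2,\,1/3-\e1+\e2)}$ using rectangular fast matrix multiplication. This is essentially the same argument as in \Cref{clm:H-offline-simple}, but with the refined bound on the number of dense vertices replacing the chunk-size bound on the inner dimension. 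The proof should be a short paragraph; the only nontrivial ingredient is the counting argument for dense vertices, which I expect to be the cleanest obstacle, and it follows immediately from the threshold $m^{1/3-\e2}$ used in the definition of dense vertices.
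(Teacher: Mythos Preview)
Your proposal is correct and matches the paper's proof essentially step for step: bound $|L_1|$ by $m^{2/3+2\eps}$ via \Cref{asm:vertices}, bound the number of dense vertices in $L_2,L_3$ by $m^{2/3-\e1}/m^{1/3-\e2}=m^{1/3-\e1+\e2}$ from the chunk size and the dense threshold, restrict $A^{L*}$ to the dense $L_2$-columns, and apply rectangular fast matrix multiplication. If anything, you spell out the dense-vertex count more explicitly than the paper does.
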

\begin{proof}
	$B_i$ has at most $m^{1/3-\e1+\e2}$ dense vertices in $L_2$ and $L_3$ and the 
	total 
	number of vertices in $L_1$ is at most $m^{2/3+2\eps}$.
	We need to multiply the matrix $A^{L*}$ of dimension $m^{2/3+2\eps} \times 
	m^{1/3-\e1+\e2}$ 
	with the matrix $B_{i,DD}$ of dimension $m^{1/3-\e1+\e2} \times 
	m^{1/3-\e1+\e2}$.
	Note that the dimension of $A^{L*}$ is $m^{2/3+2\eps} \times m^{2/3+2\eps}$, 
	but we restrict 
	ourselves to the non-zero degree vertices in $L_2$ when considering the edges 
	of $B_{i,DD}$.
	We use rectangular fast matrix multiplication for this.
	Thus, the time taken for fast matrix multiplication is 
	$m^{\omega({2/3+2\eps},{1/3-\e1+\e2},{1/3-\e1+\e2})}$.
\end{proof}

This gives us a constraint $m^{\omega({2/3+2\eps},{1/3-\e1+\e2},{1/3-\e1+\e2})} \leq m^{4/3-2\e1}$ 
which implies:
\begin{constraint}
\label{constraint:DS-L-HH-simple}		
		\omega({2/3+2\eps},{1/3-\e1+\e2},{1/3-\e1+\e2}) \leq 4/3 -2\e1.
\end{constraint}

\begin{claim}\label{clm:L-L-simple}
	Computing $A^{L*} \cdot B_{i,SS}$ and $A^{L*} \cdot B_{i,SD}$ can be done in 
	time $m^{4/3 
	+\e1 
	- \e2}$.
\end{claim}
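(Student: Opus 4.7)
The plan is to exploit two pieces of sparsity at once: every sparse vertex $v \in L_2^S$ has at most $m^{1/3-\e2}$ neighbors in $B_i$, and the total number of edges in $A^{L*}$ is trivially at most $m$. The entries of $A^{L*} \cdot B_{i,SS}$ count wedges $u$--$v$--$w$ with $u \in L_1^L$, $(u,v) \in A$, and $(v,w) \in B_{i,SS}$; since $B_{i,SS}$ contains only edges between sparse vertices of $L_2$ and $L_3$, every middle vertex $v$ on such a wedge is automatically in $L_2^S$. The same observation holds for $A^{L*} \cdot B_{i,SD}$ by the definition of $B_{i,SD}$.

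First I would preprocess, once and for all (since \Cref{asm:AC-fixed} fixes $A$), an adjacency list $N^L_A(v) := \{u \in L_1^L : (u,v) \in A\}$ for every $v \in L_2$, at a one-time cost of $O(m)$. To then compute $A^{L*} \cdot B_{i,SS}$, I iterate over every sparse $v \in L_2^S$, and for each pair $(u,w) \in N^L_A(v) \times N_{B_{i,SS}}(v)$ I increment the $(u,w)$-entry of the output matrix. The computation of $A^{L*} \cdot B_{i,SD}$ is identical with $B_{i,SS}$ replaced by $B_{i,SD}$ throughout.

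The running time of this procedure is
\[
\sum_{v \in L_2^S} |N^L_A(v)| \cdot |N_{B_{i,SS}}(v)| \;\leq\; \Bigl(\max_{v \in L_2^S} |N_{B_{i,SS}}(v)|\Bigr) \cdot \sum_{v \in L_2^S} |N^L_A(v)| \;\leq\; m^{1/3-\e2} \cdot |A^{L*}| \;\leq\; m^{4/3-\e2},
\]
where the maximum is controlled by the sparsity of $v$ in $B_i$, and the last inequality uses the trivial bound $|A^{L*}| \leq m$. Since $\e1 \geq 0$, this is bounded by $m^{4/3+\e1-\e2}$, matching the claim, and the analysis for $A^{L*} \cdot B_{i,SD}$ is verbatim the same.

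I do not expect a substantive obstacle here: the argument is essentially a double-counting that pairs the $L_2$-sparsity of $B_i$ with the global edge bound on $A$. The one thing worth flagging is that fast matrix multiplication would \emph{not} help here—the sparse block $B_{i,SS}$ (resp.\ $B_{i,SD}$) has only $O(m^{2/3-\e1})$ nonzeros, so sparse iteration beats any dense rectangular FMM routine, which is why the bound is expressed in a combinatorial rather than $\omega(\cdot,\cdot,\cdot)$ form.
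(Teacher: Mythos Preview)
Your proof is correct and follows essentially the same wedge-enumeration idea as the paper: both count triples $(u,v,w)$ with $(u,v)\in A^{L*}$ and $(v,w)\in B_{i,S*}$ by exploiting the sparse-vertex degree bound $m^{1/3-\e2}$ in $B_i$. The only difference is bookkeeping---the paper iterates from $L_1^L$ and bounds the first two loops by $m^{2/3+2\eps}\cdot m^{1/3+\e1}$, whereas you iterate from $L_2^S$ and use $\sum_v |N_A^L(v)|\le m$ directly, which actually gives the slightly tighter bound $m^{4/3-\e2}$.
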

\begin{proof}
	Go over all $m^{2/3+2\eps}$ low vertices in $L_1$ and then go over their neighbors in 
	$L_2$ that are sparse in $B_i$ and then go over their neighbors in $L_3$ and 
	update the matrix 
	product.
	The degree of the low vertices in $L_1$ is at most $m^{1/3+\e1}$ and the sparse 
	vertices of 
	$B_{i,L*}$ in $L_2$ have degree at most $m^{1/3-\e2}$ in $B_i$.
	Thus, the total time taken is $m^{4/3 +\e1 - \e2 +2\eps}$.
\end{proof}
This gives us a constraint $m^{4/3 +\e1 - \e2 +2\eps} \leq m^{4/3-2\e1}$ which implies:
\begin{constraint}
\label{constraint:DS-L-L-simple}		
		3\e1 +2\eps \leq \e2.
\end{constraint}

\begin{proof}[Proof of \Cref{lem:DS-update-time-simple}]
\Cref{clm:H-online-simple,clm:H-offline-simple,clm:M-online-simple,clm:L-HH-simple,clm:L-L-simple}
together prove the statement in \Cref{lem:DS-update-time-simple} for $B_i$.
We now want to extend this to hold for $B_{<i}$.	

We compute these data structures for each new chunk $B_j$.
Therefore, by induction, when computing the data structure for $B_{i-1}$ we 
already have the data 
structure for $B_{<i-1} = \sum_{j=1}^{i-2} B_j$. When we are done computing the 
data structure for 
$B_{i-1}$ we add it to the one of $B_{<i-1}$ and get the data structure for $B_{<i} = 
\sum_{j=1}^{i-1} 
B_j$.
Observe that for this to work we need $A$ and $C$ to remain the same. 
\end{proof}

\subsection{Queries:}\label{subsec:queries-simple}

\begin{figure}[t]
	\centering 
    \resizebox{150pt}{90pt}{ \begin{tikzpicture}
	\node[vertexBlack] (a1) at (0,0)  {};
	\node[vertexBlack] at (0,1) (a2) {};
	\node[vertexBlack] at (0,2) (a3) {};
	\node[vertexRed] at (0,3) (a4) {};
 
	\node[vertexBlack] at (2,0) (b1) {};
	\node[vertexBlack] at (2,1) (b2) {};
	\node[vertexBlack] at (2,2) (b3) {};
    \node[vertexBlack] at (2,3) (b4) {};

	\node[vertexBlack] at (4,0) (c1) {};
	\node[vertexBlack] at (4,1) (c2) {};
	\node[vertexBlack] at (4,2) (c3) {};
	\node[vertexBlack] at (4,3) (c4) {};

	\node[vertexBlack] at (6,0) (d1) {};
	\node[vertexBlack] at (6,1) (d2) {};
	\node[vertexBlack] at (6,2) (d3) {};
	\node[vertexRed] at (6,3) (d4) {};

	\draw[edge] (c2) -- (b4);
	\draw[edge] (c3) -- (b3);
	\draw[edge,blue] (c2) -- (b2);
	\draw[edge,blue] (c2) -- (b3);
	\draw[edge,blue] (c4) -- (b3);
	\draw[edge,bend left=40,dashed] (a4) to (d4);

    \draw[edge,red] (d4) -- (c4);
    \draw[edge,red] (d4) -- (c2);
    \draw[edge,red] (d4) -- (c1);

    \draw[edge,blue] (a4) -- (b3);
    \draw[edge,blue] (a4) -- (b2);

\draw (1,-0.5) node [anchor=north west][inner sep=0.75pt]   [align=left] 
{\normalsize{$A$}};
	
\draw (3,-0.5) node [anchor=north west][inner sep=0.75pt]   [align=left] 
{\normalsize{$B$}};

\draw (5,-0.5) node [anchor=north west][inner sep=0.75pt]   [align=left] 
{\normalsize{$C$}};

\draw (2.9,0.7) node [anchor=north west][inner sep=0.75pt]   [align=left] 
{\normalsize{$B_{<i}$}};

\draw (-1.3,3) node [anchor=north west][inner sep=0.75pt]   [align=left] 
{\normalsize{$u \in H$}};

\draw (6.3,3.2) node [anchor=north west][inner sep=0.75pt]   [align=left] 
{\normalsize{$v\in M$}};

\draw (-0.2,3.8) node [anchor=north west][inner sep=0.75pt]   [align=left] 
{\normalsize{$L_1$}};
\draw (1.8,3.8) node [anchor=north west][inner sep=0.75pt]   [align=left] 
{\normalsize{$L_2$}};
\draw (3.8,3.8) node [anchor=north west][inner sep=0.75pt]   [align=left] 
{\normalsize{$L_3$}};
\draw (5.8,3.8) node [anchor=north west][inner sep=0.75pt]   [align=left] 
{\normalsize{$L_4$}};
 
\end{tikzpicture} }
	\caption{This figure shows a query $(u,v)$ where $u\in H$ and $v\in M$. We iterate over the neighbors of 
	$v$ and then use the wedges to $u$ stored in the data structure $A^{H*} \cdot B_{<i}$. The neighbors of 
	$v$ are shown using \textcolor{red}{red} edges, the edges in $B_{<i}$ are in gray and the wedges to $u$ 
	are shown in \textcolor{blue}{blue}.}
	\label{fig:HM-query-simple}
\end{figure}
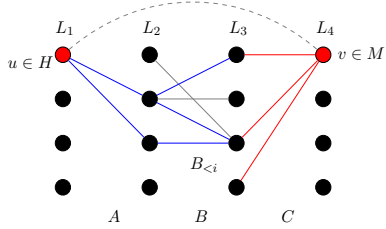	

We have assumed that queries only come in layer $D$.
Let the current query be $e=(u,v)$ where $u$ is in $L_1$ and $v$ is in $L_4$.
Let $B_{i+1}$ be the current chunk in $B$ that was being inserted when we got query $e$.  
Note that the chunk $B_{i+1}$ might be incomplete.

We first describe the procedure of \emph{lazy evaluation} that we use for chunks $B_i$ and $B_{i+1}$.
Given the query $u,v$ we go over all edges of $B_i$ one by one and check if it forms a path from 
$u$ to $v$ (similar for $B_{i+1}$). This takes $O(1)$ time per edge in $B_i$. Thus, we know the 
number of paths from $u$ to $v$ passing through $B_i$ and $B_{i+1}$.
We now have to only look at paths that go through $B_{<i}$.
We will do that in different cases below.

\begin{lemma}\label{lem:query_time-simple}
	All types of queries can be answered in worst-case time $O(m^{2/3-\e1})$.
\end{lemma}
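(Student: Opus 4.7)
The plan is to proceed by case analysis on the classes of the endpoints $u\in L_1$ and $v\in L_4$, each of which lies in $\{H,M,L\}$. Before the case split, I will handle those 3-paths whose middle $B$-edge lies in the current chunk $B_{i+1}$ or in the most recent past chunk $B_{i}$ via \emph{lazy evaluation}: since each of these chunks has at most $m^{2/3-\e1}$ edges, iterating over them and testing in $O(1)$ for each $(a,b)$ whether $(u,a)\in A$ and $(b,v)\in C$ costs $O(m^{2/3-\e1})$ total. It therefore remains to count 3-paths $u\to a\to b\to v$ whose middle edge $(a,b)$ lies in $B_{<i}$.

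For the cases where at least one of $u,v$ is high or medium, my plan is to iterate over the $A$-neighbors or $C$-neighbors of the less-dense endpoint---at most $2m^{2/3-\e1}$ when it is medium and at most $2m^{1/3+\e1}$ when it is low---and, for each such neighbor, look up the corresponding entry of one of $A^{H*}\cdot B_{<i}$, $B_{<i}\cdot C^{*H}$, $A^{M*}\cdot B_{<i}$, or $B_{<i}\cdot C^{*M}$ from \Cref{tab:DS-simple}; the H-H case is merely a direct $O(1)$ lookup into $A^{H*}\cdot B_{<i}\cdot C^{*H}$. Each of these subcases performs $O(m^{2/3-\e1})$ many $O(1)$ lookups and therefore fits inside the per-query budget.

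The L-L case is the main difficulty and uses the four low-vertex data structures $A^{L*}\cdot B_{<i,DD}$, $A^{L*}\cdot B_{<i,S*}$, $B_{<i,DD}\cdot C^{*L}$, and $B_{<i,*S}\cdot C^{*L}$. My plan is to partition each 3-path $u\to a\to b\to v$ by whether $a\in L_2$ and $b\in L_3$ are dense or sparse in the unique chunk containing $(a,b)$, giving the four disjoint classes $DD$, $SD$, $DS$, $SS$. Iterating over the $\leq 2m^{1/3+\e1}$ $C$-neighbors $b$ of $v$ and reading off $(A^{L*}\cdot B_{<i,DD})[u,b]$ gives the $DD$ count in $O(m^{1/3+\e1})$; the same iteration against $A^{L*}\cdot B_{<i,S*}$ gives the $SS{+}SD$ count; and iterating $u$'s $A$-neighbors against $B_{<i,*S}\cdot C^{*L}$ gives the $DS{+}SS$ count. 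Summing these three yields $\mathrm{Total}+SS$, leaving the $SS$ term as the one contribution that must be handled separately.

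The hard part is therefore evaluating the $SS$ contribution within the remaining $O(m^{2/3-\e1})$ budget. My plan is to enumerate $SS$ 3-paths by looping over each $a\in N_A(u)$ and each past chunk $j$ in which $a$ is sparse in $B_j$; because sparse vertices have at most $m^{1/3-\e2}$ neighbors in their chunk, listing $a$'s $B_j$-neighbors $b$ and checking sparsity of $b$ in $B_j$ along with $(b,v)\in C$ is cheap per $(a,j)$. Bounding the sum of these per-$(a,j)$ costs against the budget is the technical crux of the proof: it should follow from amortizing the chunk count against $a$'s total $B$-degree and the $\leq m^{1/3+\e1}$ bound on $|N_A(u)|$, together with the setup constraint $\e2\geq 3\e1+2\eps$ from \Cref{constraint:DS-L-L-simple}. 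Once this is done, subtracting $SS$ from the summed lookups yields the exact $L$-$L$ count and completes the proof.
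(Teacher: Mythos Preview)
Your lazy-evaluation step and the handling of the $HH$, $HM$, $HL$, $ML$, $MM$ cases are correct and match the paper.

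The $LL$ case, however, has a genuine gap. Your plan to recover the $SS$ count by looping over each $a\in N_A(u)$ and each chunk $j$ in which $a$ is sparse, then listing $a$'s $B_j$-neighbors, does not fit inside the $O(m^{2/3-\e1})$ budget. Even charging to actual degrees, the total work is $\sum_{a\in N_A(u)} \sum_{j:\,a\text{ sparse in }B_j} \deg_{B_j}(a)$, and a single vertex $a$ that is sparse in every chunk can contribute $\Theta(m^{1/3-\e2})$ per chunk across $\Theta(m^{1/3+\e1})$ chunks, hence $\Theta(m^{2/3+\e1-\e2})$ in total. Summed over $|N_A(u)|\le 2m^{1/3+\e1}$ neighbors this gives $\Theta(m^{1+2\e1-\e2})$, which under \Cref{constraint:DS-L-L-simple} taken with equality is $\Theta(m^{1-\e1-2\eps})$. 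This is $\le m^{2/3-\e1}$ only if $\eps\ge 1/6$, whereas the actual parameter satisfies $\eps\le 1/24$. So the amortization you sketch cannot close the gap.

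The paper avoids this issue entirely: the entry ``$A^{L*}\cdot B_{<i,S*}$'' in \Cref{tab:DS-simple} is shorthand for the two \emph{separate} data structures $A^{L*}\cdot B_{<i,SS}$ and $A^{L*}\cdot B_{<i,SD}$ (and symmetrically on the $C$ side); see \Cref{eq:DS-L-simple} and \Cref{clm:L-L-simple}. With these stored separately, one iterates over the $\le 2m^{1/3+\e1}$ $C$-neighbors $b$ of $v$ and reads off the $DD$, $SS$, $SD$ counts directly from $A^{L*}\cdot B_{<i,DD}$, $A^{L*}\cdot B_{<i,SS}$, $A^{L*}\cdot B_{<i,SD}$; the remaining $DS$ count is obtained by iterating over $u$'s $A$-neighbors against $B_{<i,DS}\cdot C^{*L}$. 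There is no over-counting and hence no need to isolate the $SS$ term at query time.
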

\begin{proof}
We will divide all types of queries into three cases.

\textbf{Case 1: HH.}
In this case, both $u,v$ are high degree vertices.

We know the number of paths from $u$ to $v$ that go through $A, B_{<i}$ and $C$ using 
the data structure $A^{H*} \cdot B_{<i} \cdot C^{*H}$ (\Cref{eq:DS-H-simple}).
The total time taken is $O(1)$.

\textbf{Case 2: HM, ML, HL, MM.}
In this case, $u \in L_1$ is a high or medium degree vertex and $v \in L_4$ is a medium or low 
degree vertex. Note that this also solves the symmetric cases.

To calculate the number of paths from $u$ to $v$ that go through $B_{<i}$ we 
can go over the $2m^{2/3-\e1}$ neighbors of $v$ in $L_3$ and use the data 
structure $A^{H*} \cdot
B_{<i}$ ($A^{M*} \cdot B_{<i}$) maintained in \Cref{eq:DS-H-simple} 
(\Cref{eq:DS-M-simple}) to 
get the number of paths to $u$.
The total time taken is $O(m^{2/3-\e1})$.
\Cref{fig:HM-query-simple} gives an illustration of an $HM$ query.

\textbf{Case 3: LL.}
In this case, both $u,v$ are low degree vertices.

The low degree vertices can be tricky because we do not have $A^{L*} \cdot 
B_{<i,DS}$ stored.
We iterate over the $2m^{1/3+\e1}$ neighbors of $v$ and then compute the number 
of $DD,SS,SD$ 
paths using the data structures $A^{L*} \cdot B_{<i,DD},$ $A^{L*} \cdot 
B_{<i,SS},$ $A^{L*} \cdot 
B_{<i,SD}$ (\Cref{eq:DS-L-simple}).
For the number of $DS$ paths we calculate the paths from the other side in the 
following way.
We iterate over the $2m^{1/3+\e1}$ neighbors of $u$ and use the data 
structure $B_{i,DS} \cdot C^{*L}$ (\Cref{eq:DS-L-simple}).
The total time taken is $O(m^{1/3+\e1})$.
\end{proof}

We make a brief remark here that will be addressed now and will not be mentioned for the remainder of the paper. 
Consider a situation where an edge $e$ is inserted in chunk $B_1$ and subsequently deleted in chunk $B_2$. At first glance, this may seem like a problem when creating data structures for chunk $B_2$, and we may have to go back and change the data structure for chunk $B_1$ incurring a huge cost.
However, this situation can be more effectively understood by treating $e$ as a ``negative edge'' in chunk $B_2$. This approach eliminates the need to alter the data structure for $B_1$ entirely. 
Once we add the data-structures of $B_1$ and $B_2$ together, the entire contribution of edge $e$ will disappear.

For example, consider the data structure $A^{M*} B_i$ and let $x$ be a vertex in $L_1^M$ and $y$ be a vertex in $L_3$.
Let $x$ and $y$ both have edges in $B_1$ to vertices $w$ and $z$ in $L_2$.
Thus, the data structure $A^{M*} B_1$ will have a value of $2$ in position $(x,y)$. Now assume the edge $(w,y)$ is deleted in $B_2$. Then the data structure $A^{M*} B_2$ will have a value of $-1$ in position $(x,y)$. Summing up $A^{M*} B_1$ and $A^{M*} B_2$, we get a value of $1$ indicating one path between $x$ and $y$ in $A^{M*} B_{\leq2}$ which is the correct answer.

Also, note that the computation required for an insertion or deletion is the same. 
One concrete way to take care of this problem is to have $8$ copies of every data structure because edges inside matrices $A,B$ and $C$ either positive or negative. 
Another approach is to maintain a single copy, updating it by adding and subtracting the appropriate values, which may sometimes result in negative values.
In any case, the time complexity increases by at most an $O(1)$ factor.

\subsection{Constraints:}
\label{subsec:constraints-simple}
The simplest constraints come from the fact that the thresholds for the classes should be in 
increasing order. 
So we have $1/3+\e1 \leq 2/3-\e1$ and $1/3-\e2 \leq 2/3-\e1$ implying the following constraints:

\begin{constraint}
\label{constraint:L_one-simple}		
		\e1 \leq 1/6.
\end{constraint}

\begin{constraint}
\label{constraint:L_two-simple}		
		\e1-\e2 \leq 1/3.
\end{constraint}

Solving all these constraints 
(\Cref{constraint:DS-H-simple,constraint:DS-L-HH-simple,constraint:DS-L-L-simple,constraint:L_one-simple,constraint:L_two-simple})
 gives us $\e1 =0.04201965$ and 
$\e2=0.14568075$ when 
$\eps=0.0098109$.
Thus, we get $\e1 \geq \eps$.
We need this because we use this algorithm as a subroutine in our main 
algorithm which has update time $O(m^{2/3-\eps})$.
\Cref{lem:DS-update-time-simple} and \Cref{lem:query_time-simple} along with $\e1 \geq \eps$ 
prove \Cref{lem:layered-AC-fixed}.
Solving all these constraints for the best possible value of $\omega(a,b,c)$
gives us $\e1 = 1/24= 0.0416666$ and 
$\e2=5/24=0.2083333$ when $\eps=1/24$.
The best possible value of $\omega(a,b,c)$ is $\max(a+b,b+c)$ which means that the time it takes to multiply 
the matrices is 
asymptotically the same as the time it takes to read the input. 
\Cref{sec:apx-verify-constraints} shows that these values of $\e1$ and $\e2$, given the value of $\eps$, 
satisfy all the constraints.


\section{Setup for the Main Algorithm}\label{sec:setup-main}
In this section, we setup the problem we want to solve. 
We will eventually prove the following theorem:
\AlgLayer*

The setup will be very similar to the warm-up algorithm in \Cref{sec:alg-simple}.
We let the worst-case update time be $O(m^{2/3-\eps})$ and we will show we can get $\eps>0$ to be a 
constant.
We will discuss the data structures we need to store and then show how the queries can be 
answered using them in time $O(m^{2/3-\eps})$.
Before we discuss the data structures we need, we group the vertices into different classes.


The vertices in $L_1$ and $L_4$ are grouped into classes $L,M,H$ based on their 
degree in $A$ and $C$ respectively.
\begin{itemize}
	\item \textbf{High} ($H$): degree between $m^{2/3-\eps}$ and $n$.
	\item \textbf{Medium} ($M$): degree between $m^{1/3+\eps}$ and $2m^{2/3-\eps}$.
	\item \textbf{Low} ($L$): degree between $0$ and $2m^{1/3+\eps}$.	
\end{itemize}

We group vertices in $L_2$ (resp. $L_3$) into classes $S,D$ based on their combined 
degree in $A,B$ (resp. $B,C$).
\begin{itemize}
	\item \textbf{Dense} ($D$): degree between $m^{2/3-\eps}$ and $n$.
	\item \textbf{Sparse} ($S$): degree between $0$ and $2m^{2/3-\eps}$.
\end{itemize}
The classes for vertices in $L_2,L_3$ have same names here and in \Cref{sec:setup-simple} but 
their definitions are very different. 
The difference is that in the warm-up algorithm we are looking at the degree inside a chunk $B_i$ 
(there are no chunks here), and here we are looking at the combined degree in $A,B$ (or $B,C$).
Also, note that this algorithm does not have chunks, but it will use the algorithm in 
\Cref{sec:alg-simple} as a subroutine.

Notice that these ranges for different classes (in all layers) overlap with each other, so when a vertex 
is in an overlapping region it belongs to both classes.
The overlapping region will be helpful for vertices that transition from one class to the other.


Finally, we talk about the parameters we will use.
The update time of the algorithm will be proved to be $O(m^{2/3-\eps})$ where $\eps$ is a 
parameter.
The goal is to show that $\eps>0$ is a constant.
The algorithm will progress in phases of $m^{1-\delta}$ edge updates where $\delta>0$ is a 
constant. 
Lastly, $\omega$ is the exponent of square matrix multiplication.
These parameters will satisfy the following constraints.

The first constraint we want comes from being able to multiply two square matrices of dimension 
$m^{2/3+2\eps}$ in a phase. The matrices we are interested in multiplying together are $A,B,C$ each of 
which have dimension $m^{2/3+2\eps}$. The reason we do this is so that we can get all types of paths 
between a pair of vertices which belong to the old phases i.e. phases for which this matrix multiplication 
has been done. 
The update time for an entire phase is $m^{1-\delta} \cdot m^{2/3-\eps}$ so the constraint we get is 
$m^{1-\delta} \cdot m^{2/3-\eps} \geq m^{\omega \cdot (2/3+2\eps)}$ which implies:
\begin{constraint}\label{eq:phase-constraint}		
	1-\delta \geq (2\omega +1) \cdot \eps + (\omega-1) 2/3.
\end{constraint}
This constraint will be used in \Cref{subsec:phases}.

The second constraint we want comes from wanting to iterate over pairs of high (or dense) vertices one of 
which is in a new phase and the other one can be in any of the old phases.
The number of high degree vertices in the new phase is $2m^{1-\delta}/m^{2/3-\eps}$ and the number of 
high degree vertices in any of the old phases can be at most $m/m^{2/3-\eps}$.
The product of the two should be at most the update time $O(m^{2/3-\eps})$ so the constraint we get is 
$m^{1/3+\eps} \cdot m^{1-\delta - 2/3+\eps} \leq m^{2/3-\eps}$ which implies:
\begin{constraint}\label{eq:iteration-pairs-constraint}		
	3\eps \leq \delta.
\end{constraint}
This constraint will be used in \Cref{subsec:DS-main} and \Cref{subsec:queries-main}.

The simplest constraints come from the fact that the thresholds for the classes should be in 
increasing order. 
So we have $1/3+\eps \leq 2/3-\eps$ implying the following constraint:
\begin{constraint}
	\label{eq:simple-constraint}		
	\eps \leq 1/6.
\end{constraint}

Solving these equations with the \emph{current best} value of the matrix multiplication exponent 
$\omega=2.371339$ gives us $\eps=0.0098109$ and $\delta=3\eps = 0.0294327$.
Using the \emph{best possible} value of the matrix multiplication exponent $\omega=2$ gives us 
$\eps=1/24$ and $\delta=1/8$.
\Cref{sec:apx-verify-constraints} shows that these values of $\eps$ and $\delta$ satisfy 
all the constraints.


\section{Main Algorithm}\label{sec:main-alg}

In this section, we prove \Cref{thm:alg-layer} under a few assumptions.
\AsmVertices*
\AsmClasses*

For vertices in an overlapping region i.e. they belong to two classes, put them in one of the classes 
arbitrarily. 
We will prove the following lemma:
\begin{lemma}\label{lem:layered-under-asm}
	 \Cref{thm:alg-layer} holds under \Cref{asm:vertices} and \Cref{asm:classes}.
\end{lemma}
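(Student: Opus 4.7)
The plan is to partition the stream of updates into phases of length $m^{1-\delta}$ and, at the boundary between phases, freeze the graph state as $\ao, \bo, \co$. During the current phase these stay fixed while we accumulate the net effect of updates as $\anew, \bnew, \cnew$, each with at most $m^{1-\delta}$ nonzero entries. A query on $(u,v)$ asks for an entry of $(\ao+\anew)(\bo+\bnew)(\co+\cnew)$ at position $(u,v)$, which we expand into eight terms indexed by the old/new status of each factor. My plan is to treat each of the eight terms separately: the all-old term via precomputation spread across the phase, the all-new term via the warm-up algorithm of \Cref{lem:layered-AC-fixed} applied to the small new subgraph, and the six mixed terms by iterating over the small new portion while consulting stored summaries of the old portion.

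For the precomputation at each phase boundary, the plan is to compute, via fast matrix multiplication, class-restricted analogues of the warm-up's data structures but with $\bo$ in place of $B_{<i}$: products like $A_{old}^{H*} \cdot \bo$, $\bo \cdot C_{old}^{*H}$, $A_{old}^{H*} \cdot \bo \cdot C_{old}^{*H}$, and the low-vertex products $A_{old}^{L*} \cdot B_{old,DD}$ and its analogues. The dominant cost is an $n \times n \times n$ multiplication at $n = m^{2/3+2\eps}$, and \Cref{eq:phase-constraint} is chosen precisely so that $m^{\omega(2/3+2\eps)}$ fits inside the per-phase budget $m^{1-\delta} \cdot m^{2/3-\eps}$. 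Spreading this cost evenly across the $m^{1-\delta}$ updates yields an $O(m^{2/3-\eps})$ contribution per update, but since it is scheduled in advance it is worst-case. Between phases we merge the completed ``frozen'' results into the running old totals, mirroring how the warm-up merges completed chunks, and \Cref{asm:classes} guarantees that no class label shifts inside a precomputed matrix.

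For online per-update work, the plan is to maintain class-restricted products involving the small matrices $\anew, \bnew, \cnew$ --- for example $A_{new}^{H*} \cdot \bo$ is updated by iterating over the $O(m^{1/3+\eps})$ heavy vertices of $L_1$ each time an entry of $\anew$ changes, and similarly for $\bo \cdot C_{new}^{*H}$, $\anew \cdot \bnew$, $\bnew \cdot \cnew$. Each update therefore costs $O(m^{1/3+\eps})$ per structure, well inside the budget. At query time we sum the eight terms: the all-old term is read from the precomputed tables using the warm-up's case analysis; the all-new term is answered by a parallel instance of the warm-up on the $O(m^{1-\delta})$-edge new subgraph; and each of the six mixed terms reduces to iterating over either the new edges of one matrix, or the at most $m^{1/3-\delta+\eps}$ vertices that are heavy only within a new matrix, paired with a single lookup into a precomputed wedge or $3$-path count on the old portion. \Cref{eq:iteration-pairs-constraint} ($3\eps \leq \delta$) is the bound that makes the heavy-in-new with heavy-in-old pairings affordable.

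The main obstacle I expect is the case where both $u$ and $v$ are low-degree while the intermediate vertices of the $3$-path are dense and at least one middle edge lies in $\bo$: neither storing $3$-path counts for all low-low pairs nor iterating over the $m^{1/3+\eps}$ dense vertices of the old graph per update is admissible. The plan is to resolve this in the style of the warm-up's LL case by precomputing $A_{old}^{L*} \cdot B_{old,DD}$, $A_{old}^{L*} \cdot B_{old,SS}$ and their $C$-side symmetric counterparts, so that a low-endpoint query iterates only over its own $O(m^{1/3+\eps})$ neighbors and reads a single precomputed entry; the analogous mixed term with $\bnew$ in the middle is handled by enumerating the $O(m^{1-\delta})$ edges of $\bnew$ and combining each with a precomputed $A_{old}^{L*} \cdot C_{old}^{*L}$-style wedge, which again fits within $O(m^{2/3-\eps})$ by \Cref{eq:iteration-pairs-constraint}. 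Summing the eight bounded contributions yields the total number of layered $3$-paths from $u$ to $v$, which by the discussion in \Cref{subsec:prelim-notation} is exactly the number of new $4$-cycles created by the query edge, establishing \Cref{lem:layered-under-asm}.
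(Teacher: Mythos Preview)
Your phase/old-new skeleton matches the paper's, but two concrete steps break. First, you invoke the warm-up (\Cref{lem:layered-AC-fixed}) on the all-new term $\anew\bnew\cnew$; that lemma requires \Cref{asm:AC-fixed}, i.e.\ that $A$ and $C$ receive no updates, which $\anew,\cnew$ manifestly do. The paper uses the warm-up only for the term $\ao\bnew\co$, where the outer matrices are genuinely frozen and only the middle one changes. Second, you propose to handle ``$\bnew$ in the middle'' by enumerating all $O(m^{1-\delta})$ edges of $\bnew$ at query time; but with $\delta=3\eps$ we have $1-\delta>2/3-\eps$, so this alone blows the per-query budget, and \Cref{eq:iteration-pairs-constraint} does not save you---it bounds a product of heavy-vertex counts (one side heavy-in-new), not a raw new-edge count. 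The paper instead routes $\ao\bnew\co$ through the warm-up (chunking $\bnew$ into pieces of size $m^{2/3-\eps}$) and handles the remaining LL-DD cases with $\bnew$ by iterating over \emph{pairs of dense vertices}, one side of which is dense only within the new phase and hence has at most $m^{1/3-\delta+\eps}$ members.

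A related gap: your sample maintained structure $A_{new}^{H*}\cdot\bo$ cannot be updated in $O(m^{1/3+\eps})$ per $\anew$-edge as you claim, because a single new edge $(u,x)$ changes the entire row $\bo[x,\cdot]$ of length $n=m^{2/3+2\eps}$. The paper's fix is to split the intermediate layers $L_2,L_3$ into sparse/dense \emph{before} any old/new split and to maintain only structures like $A^{HD}\cdot B^{DD}$ and $A^{*S}\cdot B^{S*}$, so that each update touches at most $m^{2/3-\eps}$ entries. This S/D case analysis on the middle vertices is what drives the paper's query-time argument (the hard cases being HH with SS-middle and LL with DD-middle); only inside those hard cases does the eight-fold old/new expansion enter. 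Without the S/D layer, the eight-term expansion by itself does not yield maintainable summaries.
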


\paragraph{High Level Idea.}
We would like to do something similar to the previous algorithm 
(\Cref{sec:alg-simple}) but here $A$ 
and $C$ are not fixed so there is a problem with aggregation of data structures over 
different 
\emph{chunks} when using the previous approach.
For instance, consider the data structure $A^{L*} \cdot B_{<i,DD}$ we maintained in \Cref{sec:alg-simple}. We maintained it by computing the matrix product of $A^{L*}$ with $B_{j,DD}$ for all $j < i$ and keeping a running sum. The running sum was updated in the following way: $A^{L*} \cdot B_{\leq i,DD} = A^{L*} \cdot B_{<i,DD} + A^{L*} \cdot B_{i,DD}$. Doing this is not possible if the matrix $A$ is undergoing changes.

Thus, we create different data structures for vertices in $H,M,L$ that can be 
updated on the fly i.e.\ 
that can be updated during each edge update which we show in 
\Cref{subsec:DS-main}. 
However, this is not enough, and we need one more idea to solve the problem.
Just like in \Cref{sec:alg-simple}, we need to use fast matrix multiplication to get a speedup.
We divide the edge updates into \emph{phases} of a certain size and during a 
phase we compute 
paths for the edges in the old phase whose edges are now fixed, in a way similar to 
what we did for 
\emph{chunks}.
We explain the idea of phases and its difference from chunks in 
\Cref{subsec:phases}.
Lastly, we show how we use these data structures and what we compute for 
different phases to 
answer queries in \Cref{subsec:queries-main}.

\subsection{Phases}\label{subsec:phases}
We define a \textbf{phase} as $m^{1-\delta}$ edge updates. 
A phase should be long enough so that in the time it takes to process all the edge 
updates in a 
phase, we are able to multiply two square matrices of dimension $m^{2/3+2\eps}$.
In other words, the number of edges in a phase times the worst-case update time 
should be more 
than the time it takes to multiply two square matrices of dimension 
$m^{2/3+2\eps}$ using fast 
matrix multiplication.
The amount of time it takes to do the multiplication using fast matrix multiplication is 
$m^{\omega 
\cdot (2/3 +2\eps)}$.
Thus, we need $m^{1-\delta} \cdot m^{2/3-\eps} \geq m^{\omega \cdot (2/3+2\eps)}$ which we get 
from \Cref{eq:phase-constraint}.
We divide the incoming edge updates into phases $P_1,P_2,\ldots$ each containing $m^{1-\delta}$ 
edge updates.
This means that the number of vertices that belong to a certain class reduces if we restrict ourselves 
to a particular phase.
For instance, the number of high degree vertices (degree at least $m^{2/3-\eps}$) in a phase is at 
most $m^{1-\delta-2/3+\eps}$ as opposed to $m^{1-2/3+\eps}$.

A \emph{phase} feels similar to a \emph{chunk} (introduced in 
\Cref{sec:setup-simple}), but they 
have slightly different purposes apart from 
having different sizes.
A chunk is small enough for us to do lazy evaluation on and when it is being inserted we create some 
specific data structures for the previous chunk.
During a phase we can compute all types of paths between all pairs of vertices for the previous 
phase (we are not limited to just specific data structures).

Note that because of the way a phase is set up, during a phase we can compute all matrix products 
(for relevant submatrices) for the previous phase (for $A,B,C,D$ in that phase).
This gives us the number of paths between all pairs of vertices that go through certain classes of 
vertices (the exact types of paths we need will be clear when we discuss how we answer queries).

We divide all the phases into two parts $\pnew$ which contains the current phase 
$P_{j+1}$ and the phase just before it $P_{j}$. All phases older than that are represented by 
$\po$. 
The edges in $A$ that are part of $\po$ are represented by $\ao$ and those that are part of 
$\pnew$ are represented by $\anew$ (similar for $B,C$).
So the new phase has at most $2 m^{1-\delta}$ edges but the old phase can have $O(m)$ 
edges.

A noteworthy observation is that we require $\delta>0$, which happens only when $\omega < 2.5 
-O(\eps)$ (\Cref{eq:phase-constraint}) which in particular happens only when $\omega<2.5$.
As we mention in the introduction, this is very surprising because any upper bound on $\omega$ better than $3$ is not sufficient.

\subsection{Data Structures}\label{subsec:DS-main}

\begin{table}[h!]
	\centering
	\footnotesize 
	\begin{tabular}{ |p{3.2cm}|p{3.2cm}|p{3.2cm}| p{3.2cm}| }
		\hline
		\multicolumn{4}{|c|}{\textbf{Data Structures}} \\
		\hline\hline	
		\textbf{High}& \textbf{Medium} &\textbf{Low} & \textbf{Common}  \\
		\hline
			 $A^{HD} \cdot B^{DD}$ (\Cref{eq:DS-HM-main}) & $A^{MD} \cdot 
			 B^{DD}$(\Cref{eq:DS-HM-main}) & $\anew^{*D} \cdot 
			 \bo^{DD}$ (\Cref{eq:DS-L-main}) & $A^{*S} \cdot 
		B^{S*}$ (\Cref{eq:DS-com-main})	\\
		\hline
		$B^{DD} \cdot C^{DH}$ (\Cref{eq:DS-HM-main}) & $B^{DD} \cdot C^{DM}$ (\Cref{eq:DS-HM-main}) 
		& $\bo^{DD} \cdot 
		\cnew^{D*}$ (\Cref{eq:DS-L-main}) & $B^{*S} \cdot C^{S*}$ (\Cref{eq:DS-com-main}) \\
		\hline
		\hline
		\multicolumn{4}{|c|}{\textbf{Additional Data Structures for High (\Cref{eq:DS-H-main}) }} \\		
		\hline
		\multicolumn{4}{|c|}{
		$\anew^{HS}\cdot \bnew^{SS}\cdot \cnew^{SH}, \;\;\;\;
		\anew^{HS}\cdot \bnew^{SS}\cdot \co^{SH}, \;\;\;\;
		\ao^{HS}\cdot \bnew^{SS}\cdot \cnew^{SH},\;\;\;\;
		\anew^{HS}\cdot \bo^{SS}\cdot \cnew^{SH}, $}\\
		\multicolumn{4}{|c|}{
		$\anew^{HS}\cdot \bo^{SS}\cdot \co^{SH}, \;\;\;\;
		\ao^{HS}\cdot \bo^{SS}\cdot \cnew^{SH}$} \\
		\hline
	\end{tabular}
\caption{Data Structures used in the Main Algorithm.}
\label{tab:DS-main}
\end{table}

We discuss the data structures we store to answer the queries in 
$O(m^{2/3-\eps})$ time. 
We will show that the worst-case time it takes to update the data structures is $O(m^{2/3-\eps})$.
The list of data structures can be found in \Cref{tab:DS-main}.

\begin{lemma}\label{lem:DS-update-time}
	The worst-case time it takes to update all the data structures we store is $O(m^{2/3-\eps})$.
\end{lemma}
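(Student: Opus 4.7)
The plan is to verify the bound separately for each of the data structures listed in Table~\ref{tab:DS-main}. For each, I would split the maintenance cost into (i) a per-update incremental cost during a phase, and (ii) a recomputation cost incurred when a phase boundary is crossed. The latter is spread across the $m^{1-\delta}$ updates of the phase, so it contributes at most $m^{1-\delta+2/3-\eps}/m^{1-\delta}=m^{2/3-\eps}$ amortized per update. Both (i) and (ii) must therefore be $O(m^{2/3-\eps})$.

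First I would dispatch the ``easy'' rows: the High and Medium products $A^{HD}\cdot B^{DD}$, $A^{MD}\cdot B^{DD}$, $B^{DD}\cdot C^{DH}$, $B^{DD}\cdot C^{DM}$, and the Common products $A^{*S}\cdot B^{S*}$, $B^{*S}\cdot C^{S*}$. For all of these the per-update bound follows by direct iteration, essentially as in \Cref{clm:H-online-simple,clm:M-online-simple}. An update in $A$ at a high (resp.\ medium) row needs to iterate over $|L_3^D|\leq m^{1/3+\eps}$ dense targets; an update in $B^{DD}$ iterates over the at most $m^{2/3-\eps}$ medium vertices (or $m^{1/3+\eps}$ high vertices) in the appropriate outer layer; an update that touches a sparse $L_2$ or $L_3$ vertex in a Common row triggers iteration over its at most $2m^{2/3-\eps}$ neighbors on the other side. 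Each of these is $O(m^{2/3-\eps})$, and there is no $\pnew/\po$ split to track for these rows.

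Next I would handle the Low rows $\anew^{*D}\cdot \bo^{DD}$ and $\bo^{DD}\cdot \cnew^{D*}$. During a fixed phase the matrix $\bo$ does not change, so the only incremental updates come from edges entering $\anew$ (or $\cnew$), and each costs $O(m^{1/3+\eps})$ by iterating over the $m^{1/3+\eps}$ dense vertices of the opposite layer. At a phase boundary $\bo$ absorbs one additional phase of edges and $\anew$ is reshuffled, forcing a fresh computation of the matrix product. Here I would invoke fast matrix multiplication: $\anew^{*D}$ has $O(m^{1-\delta})$ nonzero entries over $m^{2/3+2\eps}\times m^{1/3+\eps}$ positions, while $\bo^{DD}$ is $m^{1/3+\eps}\times m^{1/3+\eps}$, and the resulting recomputation fits comfortably in the phase budget $m^{1-\delta}\cdot m^{2/3-\eps}$ guaranteed by \Cref{eq:phase-constraint}, amortizing to $O(m^{2/3-\eps})$ per update.

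The hard part will be the six Additional High triple products such as $\anew^{HS}\cdot\bnew^{SS}\cdot\cnew^{SH}$ and its five mixed-phase cousins. For each triple $X\cdot Y\cdot Z$, I plan to (a)~at every phase boundary, precompute the pair of factors that remains fixed during the next phase via a square fast matrix multiplication of dimension $m^{2/3+2\eps}$, which fits in a phase thanks to \Cref{eq:phase-constraint}; and (b)~during the phase, incrementally update the full triple product by applying a single row or column update whenever an edge enters the one dynamic factor, using the precomputed auxiliary as a lookup table. The per-update cost then reduces to updating at most $m^{1/3+\eps}$ entries globally, or $m^{1-\delta-2/3+\eps}$ entries restricted to a phase; and for entries indexed by a pair of high-vertex sets (one global, one phase-restricted), \Cref{eq:iteration-pairs-constraint} delivers exactly the product bound $m^{2/3-\eps}$. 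The delicate bookkeeping will be, in each of the six mixed cases, to identify which pair of factors is ``static'' enough to be batched at the boundary, which factor is ``dynamic'' enough to require incremental updates, and to arrange things so that every incremental update's row or column has length at most $m^{1/3+\eps}$ rather than $m^{2/3+2\eps}$; I expect this case analysis to be the main obstacle.
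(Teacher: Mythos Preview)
Your treatment of the Common, High, and Medium two-factor products matches the paper (\Cref{clm:S-DS,clm:HM-DS}). The gap is in the six Additional High triple products. Your plan rests on the premise that for each triple $X\cdot Y\cdot Z$ exactly one factor is ``dynamic'' during a phase while the remaining pair is ``static'' enough to be batched by matrix multiplication at the phase boundary. This premise is false: $\anew^{HS}\cdot\bnew^{SS}\cdot\cnew^{SH}$ has all three factors in $\pnew$, and four of the other five have two new-phase factors. There is no fixed pair to precompute, so step~(a) of your plan has nothing to act on, and step~(b) cannot be reduced to propagating updates from a single factor.

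The paper's route (\Cref{clm:H-DS}) is different and uses no matrix multiplication for these six triples. It maintains the auxiliary two-factor products $A^{HS}\cdot B^{S*}$ and $B^{*S}\cdot C^{SH}$ \emph{incrementally}, for every old/new combination, by the same per-edge argument as \Cref{clm:S-DS}. With those in hand, an edge update in $A$ or $C$ costs $O(m^{1/3+\eps})$ by scanning the $m^{1/3+\eps}$ high vertices on the far side and looking up the auxiliary. An edge update in $B$ requires iterating over pairs in $L_1^H\times L_4^H$; the structural fact you are missing is that each of the six stored triples has at least one of $A,C$ restricted to $\pnew$, so one of those two high-vertex sets has size only $m^{1-\delta-2/3+\eps}$, and \Cref{eq:iteration-pairs-constraint} then gives the $O(m^{2/3-\eps})$ bound directly. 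This is precisely why the two triples $\ao^{HS}\cdot\bnew^{SS}\cdot\co^{SH}$ and $\ao^{HS}\cdot\bo^{SS}\cdot\co^{SH}$ are \emph{not} stored: they are the only combinations with both $A$ and $C$ old, and they are handled at query time by other means.

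For the Low rows you invoke fast multiplication at the phase boundary; this works but is heavier than needed. Iterating over the $\leq 2m^{1-\delta}$ nonzero entries of $\anew$ and, for each, over the $m^{1/3+\eps}$ dense $L_3$ vertices already fits in the phase budget $m^{1-\delta}\cdot m^{2/3-\eps}$ by \Cref{eq:simple-constraint}. The paper's \Cref{clm:L-DS} only states the per-edge cost $O(m^{1/3+\eps})$ and leaves the boundary recomputation to the general phase mechanism in \Cref{subsec:phases}.
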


Consider one of these data structures, for instance $\anew^{HS}\cdot \bnew^{SS}\cdot \co^{SH}$. It can be expressed in database terminology as follows:  
We start with the binary relation $A$ and filter the tuples to include only those belonging to the new phase.
We then select the tuples where the first attribute corresponds to a high-degree vertex in $L_1$ and the second attribute corresponds to a sparse vertex in $L_2$, denoted as $\anew^{HS}$. 
For the binary relation $B$, we similarly filter the tuples to include only those from the new phase and then select the tuples where both attributes correspond to sparse vertices in $L_2$ and $L_3$, denoted as $\bnew^{SS}$. 
For the binary relation $C$, we filter the tuples to include only those from the old phase and then select the tuples where the first attribute corresponds to a sparse vertex in $L_3$ and the second to a high-degree vertex in $L_4$, denoted as $\co^{SH}$.
We then perform a join between the filtered relations $\anew^{HS}$, $\bnew^{SS}$, and $\co^{SH}$ on their shared attributes (vertices in $L_2$ and $L_3$), forming three-hop paths.  
Finally, we aggregate the result over pairs of vertices in $L_1$ and $L_4$, by computing the count of three-hop paths between them.

\paragraph{Common Data Structures.}
We first discuss the data structures we store for all the vertices.
We maintain the data structures: 
\begin{equation}\label{eq:DS-com-main}
A^{*S} \cdot B^{S*} \text{ and } \; B^{*S} \cdot C^{S*}.	
\end{equation}

\begin{claim}\label{clm:S-DS}
	The worst-case update time for the data structures $A^{*S} \cdot B^{S*}$ and 
	$B^{*S} \cdot 
	C^{S*}$ is $O(m^{2/3-\eps})$.
\end{claim}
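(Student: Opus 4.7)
The plan is to show that both data structures can be maintained on the fly during each edge update, exploiting the degree bound that comes from being sparse. By \Cref{asm:classes}, the set $S$ is fixed, so we need not worry about vertices changing class as $m$ grows or shrinks.

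Consider first $A^{*S} \cdot B^{S*}$. This is a matrix indexed by $L_1 \times L_3$ whose $(u,v)$-entry counts wedges from $u$ to $v$ that pass through some sparse vertex $w \in L_2$. An edge update affects this product only if it is an update in $A$ or $B$ incident to a sparse vertex $w \in L_2$; updates in $C$, or updates in $A$ or $B$ incident only to dense vertices in $L_2$, leave it unchanged. In the first relevant case, an update $(u,w) \in A$ with $w \in S \cap L_2$, we iterate over all neighbors $v$ of $w$ in $B$ and add $\pm 1$ to entry $(u,v)$. In the second relevant case, an update $(w,v) \in B$ with $w \in S \cap L_2$, we iterate over all neighbors $u$ of $w$ in $A$ and similarly update entry $(u,v)$. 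In both cases, the number of iterations is bounded by the combined degree of $w$ in $A \cup B$, which by sparsity is at most $2m^{2/3-\eps}$, giving a worst-case update time of $O(m^{2/3-\eps})$ per edge update. The maintenance of $B^{*S} \cdot C^{S*}$ is symmetric: updates in $B$ or $C$ incident on a sparse vertex in $L_3$ are handled in the same way, using the bound on the combined $B,C$-degree of sparse vertices in $L_3$.

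I do not expect any real obstacle here: the only subtlety is that sparsity is defined using the current value of $m$, but \Cref{asm:classes} freezes the class assignments, so the bound $2m^{2/3-\eps}$ holds throughout. Insertions and deletions are handled identically since both amount to incrementing or decrementing matrix entries, so the worst-case update cost is the same for either operation.
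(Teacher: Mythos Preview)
Your proof is correct and follows essentially the same approach as the paper: for an update incident to a sparse vertex in $L_2$, iterate over its neighbors on the other side (in $B$ if the update is in $A$, in $A$ if the update is in $B$), using the $2m^{2/3-\eps}$ degree bound that defines sparsity. The paper's proof is terser but identical in substance; your added remarks about irrelevant updates, \Cref{asm:classes}, and the symmetry of insertions/deletions are all accurate elaborations.
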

\begin{proof}
	Consider the data structure $A^{*S} \cdot B^{S*}$. The analysis for the other is 
	identical.
	If there is an update $(u,v)$ in $A^{*S}$ then we go over the $2m^{2/3-\eps}$ neighbors of $v$ 
	in $L_3$ and update the number of paths in the data structure. 
	If the update $(u,v)$ is in $B^{S*}$ then we iterate 
	over the $2m^{2/3-\eps}$ neighbors of $u$ in $L_1$ and update the number of paths.
\end{proof}

We also store the following data structures for all vertices, but we use them only for 
low degree 
vertices:
\begin{equation}\label{eq:DS-L-main}
\anew^{*D} \cdot
\bo^{DD} \text{ and } \; \bo^{DD} \cdot \cnew^{D*}.
\end{equation}

\begin{claim}\label{clm:L-DS}
	The worst-case update time for data structures $\anew^{*D} \cdot 
	\bo^{DD},$ $\bo^{DD} 
	\cdot \cnew^{D*}$ is $O(m^{1/3+\eps})$.
\end{claim}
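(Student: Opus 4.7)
The plan is to analyze the two data structures separately, though by symmetry it suffices to handle $\anew^{*D} \cdot \bo^{DD}$ carefully; the argument for $\bo^{DD} \cdot \cnew^{D*}$ will be the mirror image with the roles of $L_1/A$ and $L_4/C$ swapped.

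First, I would observe which edge updates can possibly change $\anew^{*D} \cdot \bo^{DD}$ inside a single phase. The matrix $\bo$ only changes at phase boundaries (since it is the restriction of $B$ to edges from phases strictly older than the current pair $P_j,P_{j+1}$), so within a phase no update in $B$, $C$, or $D$ can alter this data structure. Hence the only relevant case is an insertion or deletion $(u,v)$ in $A$ with $v\in L_2$; and even then only if $v$ is dense, since otherwise $(u,v)\notin \anew^{*D}$ and the product is unaffected.

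Next, I would bound the work per such update. The key observation is that the number of dense vertices in $L_3$ is at most $2m/m^{2/3-\eps} = 2m^{1/3+\eps}$, because every dense vertex has degree at least $m^{2/3-\eps}$ and the total number of edges is $O(m)$ (by Assumption~\ref{asm:classes} this count does not fluctuate with class transitions). To update row $u$ of the product, I would iterate over the at most $O(m^{1/3+\eps})$ dense vertices $w\in L_3$, check in $O(1)$ time (via a standard adjacency hash table maintained for $\bo$) whether $(v,w)\in \bo$, and if so add $\pm 1$ to the $(u,w)$ entry of the stored matrix (with the sign depending on whether the update is an insertion or deletion, using the ``negative-edge'' convention already described in the warm-up section). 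This gives a worst-case update time of $O(m^{1/3+\eps})$ per edge update in $A$.

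The analogous argument for $\bo^{DD} \cdot \cnew^{D*}$ only reacts to updates in $C$ with a dense $L_3$-endpoint, and iterates over the at most $O(m^{1/3+\eps})$ dense vertices of $L_2$. I do not expect a genuine obstacle here: the proof is essentially a degree-counting argument, and the main subtlety is simply to notice that $\bo$ is frozen within a phase so we never have to process $B$-updates against this data structure (phase-transition recomputations are accounted for separately via the matrix-multiplication budget set up in Section~\ref{subsec:phases}, and are not part of this per-update cost). Assumption~\ref{asm:classes} ensures the list of dense vertices in each layer can be stored once and queried in $O(1)$, so the per-step cost inside the iteration is indeed constant, completing the $O(m^{1/3+\eps})$ bound.
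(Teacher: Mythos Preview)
Your proposal is correct and follows essentially the same approach as the paper: observe that $\bo$ is frozen within a phase so only updates in $A$ (resp.\ $C$) matter, and for such an update iterate over the at most $O(m^{1/3+\eps})$ dense vertices in $L_3$ (resp.\ $L_2$) to update the relevant row (resp.\ column). The paper's proof is terser---it simply says ``iterate over the dense neighbors of $v$ in $L_3$''---but your more explicit version (enumerate all dense vertices and test adjacency in $\bo$) is exactly how one would implement that phrase, and the extra remarks you make about phase boundaries and the negative-edge convention are accurate elaborations rather than departures.
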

\begin{proof}
	Observe that we only deal with $\bo$ and not $\bnew$ so there are no updates in $B$.
	If there is an update $(u,v)$ in $A$ (or $C$) then we iterate over the dense neighbors of $v$ in $L_3$ 
	in $O(m^{1/3+\eps})$ time and update the data structure.
\end{proof}

\paragraph{Data Structures for vertices in H,M.}

We now mention the data structures we store for the high and medium degree vertices 
in $L_1$ and $L_4$.
We maintain the following data structures: 
\begin{equation}\label{eq:DS-HM-main}
A^{HD} \cdot B^{DD}, \;\; B^{DD} \cdot C^{DH} \text{ and } \; A^{MD} \cdot B^{DD}, \;\;
B^{DD} \cdot C^{DM}.
\end{equation}

Notice that we can always update these data structures for any edge update in $O(m^{2/3-\eps})$ 
time. 
The reason for this is that the maximum size of these restricted layers is at most $m^{2/3-\eps}$.
\begin{claim}\label{clm:HM-DS}
	The worst-case update time for the data structures in 
	\Cref{eq:DS-HM-main} is 
	$O(m^{2/3-\eps})$.
\end{claim}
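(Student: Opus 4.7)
The plan is to bound the cost of updating each of the four data structures by case analysis on which of $A$, $B$, or $C$ receives the edge update, and in each case exploit that at least one of the restricted layers appearing in the product has size at most $m^{2/3-\eps}$. The relevant population bounds are: the number of high-degree vertices in $L_1$ (and symmetrically in $L_4$) is at most $m / m^{2/3-\eps} = m^{1/3+\eps}$; the number of medium-degree vertices is at most $m / m^{1/3+\eps} = m^{2/3-\eps}$; and by the same degree/edge counting argument the number of dense vertices in $L_2$ (and $L_3$) is at most $m^{1/3+\eps}$. So all the restricted axes appearing in \Cref{eq:DS-HM-main} have size $\leq m^{2/3-\eps}$.

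Next I would go through the six possible update locations. For an edge update $(u,w)$ in $A^{HD}$: to update $A^{HD} \cdot B^{DD}$ we iterate over every dense vertex $w' \in L_3$ and adjust the $(u,w')$ entry by $\pm B^{DD}[w,w']$; there are at most $m^{1/3+\eps}$ dense vertices in $L_3$. Updates in $A^{MD}$ are handled the same way and again cost $m^{1/3+\eps}$. Updates in $C$ are symmetric, affecting $B^{DD} \cdot C^{DH}$ and $B^{DD} \cdot C^{DM}$ in $m^{1/3+\eps}$ time. For an update $(v,w)$ in $B^{DD}$, all four data structures must be touched: for the two products involving $H$ we iterate over the at most $m^{1/3+\eps}$ high-degree vertices in $L_1$ or $L_4$, and for the two products involving $M$ we iterate over the at most $m^{2/3-\eps}$ medium-degree vertices. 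In each case the per-entry work is $O(1)$ using the relevant stored submatrix row/column of $A$, $B$, or $C$, so the worst case across all six possibilities is $O(m^{2/3-\eps})$, matching the claim.

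There is no real technical obstacle here once the population bounds are in place; the only thing to be careful about is what to do when an edge update in $A$, $B$, or $C$ causes a vertex to cross the threshold between $H$, $M$, $L$ or between $D$ and $S$, which would in principle force a recomputation that costs more than a single row or column. This is precisely the scenario ruled out by \Cref{asm:classes}, under which \Cref{lem:layered-under-asm} is being proved, so for this claim I can assume the class of every vertex is fixed throughout and the updates only add or subtract $\pm 1$ to a single position of the appropriate restricted submatrix. Removing this assumption is deferred to later sections (as stated just before \Cref{lem:DS-update-time}), so the proof here reduces to the bookkeeping above.
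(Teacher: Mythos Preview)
Your proposal is correct and follows essentially the same argument as the paper: case-split on which of $A^{HD}$, $A^{MD}$, $B^{DD}$, $C^{DH}$, $C^{DM}$ receives the update, and in each case iterate over the relevant restricted layer whose size is bounded by $m^{1/3+\eps}$ or $m^{2/3-\eps}$. Your explicit derivation of the population bounds and the remark about \Cref{asm:classes} handling class transitions are both accurate and match what the paper relies on (the latter only implicitly, since the claim sits under \Cref{lem:layered-under-asm}).
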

\begin{proof}
	Consider the data structure $A^{MD} \cdot B^{DD}$.
	If there is an update $(u,v)$ in $A^{MD}$ then we go over the $m^{1/3+\eps}$ neighbors of $v$ in 
	$L_3^D$ and update the number of paths in the data structure. If the update is in $B^{DD}$ then 
	we iterate over the $m^{2/3-\eps}$ vertices in $L_1^M$ and update the number of paths.
	The analysis is similar for $ B^{DD} \cdot C^{DM}$.
	The same arguments work for high degree vertices. This concludes the proof.
\end{proof}

We maintain six additional data structures for high degree vertices: 
\begin{equation}\label{eq:DS-H-main}
\begin{split}
\anew^{HS}\cdot \bnew^{SS}\cdot \cnew^{SH} \; , \quad
\anew^{HS}\cdot \bnew^{SS}\cdot \co^{SH} \; , \quad
\ao^{HS}\cdot \bnew^{SS}\cdot \cnew^{SH} \; , \\
\anew^{HS}\cdot \bo^{SS}\cdot \cnew^{SH} \; , \quad
\anew^{HS}\cdot \bo^{SS}\cdot \co^{SH} \; , \quad
\ao^{HS}\cdot \bo^{SS}\cdot \cnew^{SH}. 	
\end{split}
\end{equation}

This list includes all possible 
eight combinations of old and new phases \emph{other than} $\ao^{HS} \cdot \bnew^{SS} \cdot 
\co^{SH}$ and 
$\ao^{HS} \cdot
\bo^{SS} \cdot \co^{SH}$.

\begin{claim}\label{clm:H-DS}
	The worst-case update time for the additional data structures for high degree vertices 
	in \Cref{eq:DS-H-main} is 
	$O(m^{2/3-\eps})$.
\end{claim}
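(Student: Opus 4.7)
The approach is a case analysis on which of the three new-phase matrices $\anew$, $\bnew$, or $\cnew$ receives the edge update, since any edge update during the current phase lands in one of these. For each case we describe how to update the affected data structures among the six in \Cref{eq:DS-H-main}, and verify that the per-update cost stays within $O(m^{2/3-\eps})$. Throughout, if the endpoints of the update do not match the $HS$/$SS$/$SH$ filters of a given data structure, that data structure is untouched.

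A key preparatory step is to maintain, in the background, the auxiliary two-matrix products of the form $A^{HS}\cdot B^{SS}$ and $B^{SS}\cdot C^{SH}$, decomposed by new/old for each factor (for instance $\anew^{HS}\cdot\bnew^{SS}$, $\anew^{HS}\cdot\bo^{SS}$, $\bnew^{SS}\cdot\cnew^{SH}$, $\bnew^{SS}\cdot\co^{SH}$, and the remaining six combinations). By arguments nearly identical to \Cref{clm:HM-DS} and \Cref{clm:L-DS}, each is maintainable under a single edge insertion/deletion in $A$, $B$, or $C$ in $O(m^{2/3-\eps})$ worst-case time.

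For an update $(u,v)$ in $\anew$ with $u\in L_1^H$ and $v\in L_2^S$, each of the four triple products whose leftmost factor is $\anew^{HS}$ receives a rank-one update to row $u$: specifically, row $v$ of the corresponding precomputed $B\cdot C$ auxiliary product is added into row $u$. A row has length $|L_4^H|\leq m^{1/3+\eps}$, so each such update costs $O(m^{1/3+\eps})$, well within budget. Updates to $\cnew$ are handled symmetrically as column-wise rank-one updates into the four data structures ending in $\cnew^{SH}$.

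The main obstacle is an update $(v,x)$ to $\bnew^{SS}$, which affects the three triple products containing $\bnew^{SS}$. For each, the change is the outer product of column $v$ of the left matrix with row $x$ of the right matrix, and we must iterate over the nonzero positions of this rank-one update. The plan is to use the structure of these three data structures: in each one, at least one of the two outer matrices is a new-phase matrix ($\anew$ or $\cnew$), which by the new-phase high-degree count $2m^{1-\delta}/m^{2/3-\eps}$ from \Cref{subsec:phases} contributes at most $O(m^{1/3+\eps-\delta})$ relevant endpoints, while the other factor contributes at most $O(m^{1/3+\eps})$. By \Cref{eq:iteration-pairs-constraint} ($\delta \geq 3\eps$), the product is $O(m^{2/3+2\eps-\delta})\leq O(m^{2/3-\eps})$, so each of the three rank-one updates can be applied within budget. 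Note that the all-new data structure $\anew^{HS}\bnew^{SS}\cnew^{SH}$ is actually the easiest here, since both outer factors enjoy the sharper new-phase bound, giving $O(m^{2/3+2\eps-2\delta})$ nonzero positions; the mixed old/new triple products $\anew^{HS}\bnew^{SS}\co^{SH}$ and $\ao^{HS}\bnew^{SS}\cnew^{SH}$ are the ones that tightly consume the $\delta\geq 3\eps$ slack. The technical subtlety to pin down is matching the set of $L_1^H$ (respectively $L_4^H$) endpoints actually appearing in the rank-one update to the new-phase high-degree count, which is done by traversing the adjacency list of $v$ in $\anew$ (resp.\ of $x$ in $\cnew$) and filtering by the class $L_1^H$ (resp.\ $L_4^H$).
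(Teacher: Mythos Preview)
Your proposal is correct and follows essentially the same approach as the paper. Both maintain the auxiliary two-factor products for all old/new combinations (the paper cites \Cref{clm:S-DS} rather than \Cref{clm:HM-DS} and \Cref{clm:L-DS}, but the argument is the same), handle $A$/$C$ updates by iterating over the $\leq m^{1/3+\eps}$ high-degree vertices on the far side using the stored two-factor products, and handle $B$ updates by exploiting that at least one outer factor is a new-phase matrix so the high-degree endpoint count on that side is $O(m^{1/3+\eps-\delta})$, whence the pair iteration fits in $O(m^{2/3-\eps})$ via \Cref{eq:iteration-pairs-constraint}.
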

\begin{proof}
	We can store the data structures of the form $A^{HS} \cdot 
	B^{S*}$ and $B^{*S} 
	\cdot C^{SH}$ (\Cref{eq:DS-com-main}) for all combinations of old and new phases using the same proof 
	as \Cref{clm:S-DS}.
	
	Say we get an edge update in $A$ (resp. $C$) then we can go over all vertices $m^{1/3+\eps}$ in 
	$L_4^H$ (resp. $L_1^H$) and update the number of paths. 
	This is possible since we have the number of $B^{*S} \cdot C^{SH}$ (resp. 
	$A^{HS} \cdot 
	B^{S*}$) paths stored.
	
	If we get an edge update in $B$ then we iterate over all pairs of vertices in $L_1^H$ and $L_4^H$ 
	and update the data structure.
	This seems like it will take too much time on the surface, but we know that the data structure has 
	either $A$ or $C$ in the new phase and hence the number of high degree vertices in a new phase 
	is at most $m^{1-\delta -2/3 +\eps}$ as opposed to $m^{1-2/3+\eps}$.
	Thus, iterating over all pairs of high vertices takes time $m^{1/3+\eps} \cdot m^{1-\delta - 2/3+\eps} \cdot 
	O(1) \leq O(m^{2/3-\eps})$ (\Cref{eq:iteration-pairs-constraint}).
\end{proof}

\begin{proof}[Proof of \Cref{lem:DS-update-time}]
	\Cref{clm:S-DS,clm:L-DS,clm:HM-DS,clm:H-DS} together prove \Cref{lem:DS-update-time}.
\end{proof}

\subsection{Queries}\label{subsec:queries-main}

\begin{figure}[t]
	\centering
    \resizebox{150pt}{90pt}{ \begin{tikzpicture}
	\node[vertexBlack] (a1) at (0,0)  {};
	\node[vertexBlack] at (0,1) (a2) {};
	\node[vertexBlack] at (0,2) (a3) {};
	\node[vertexRed] at (0,3) (a4) {};
 
	\node[vertexBlack] at (2,0) (b1) {};
	\node[vertexBlack] at (2,1) (b2) {};
	\node[vertexBlack] at (2,2) (b3) {};
    \node[vertexBlack] at (2,3) (b4) {};

	\node[vertexBlack] at (4,0) (c1) {};
	\node[vertexBlack] at (4,1) (c2) {};
	\node[vertexBlack] at (4,2) (c3) {};
	\node[vertexBlack] at (4,3) (c4) {};

	\node[vertexBlack] at (6,0) (d1) {};
	\node[vertexBlack] at (6,1) (d2) {};
	\node[vertexBlack] at (6,2) (d3) {};
	\node[vertexRed] at (6,3) (d4) {};

	\draw[edge,bend left=40,dashed] (a4) to (d4);

    \draw[edge,blue] (c3) to (b1);
    \draw[edge,blue] (c3) to (b2);
    \draw[edge,blue] (a4) to (b2);
    \draw[edge,blue] (a4) to (b1);
    \draw[edge,red] (d4) to (c3);

    \draw[edge,red] (d4) to (c1);
    \draw[edge,blue] (b1) to (c1);

    \draw[edge,red] (d4) to (c4);

    \draw[edge,green] (b3) to (c4);
    \draw[edge,green] (b3) to (a4);

\draw (1,-0.5) node [anchor=north west][inner sep=0.75pt]   [align=left] 
{\normalsize{$A$}};
	
\draw (3,-0.5) node [anchor=north west][inner sep=0.75pt]   [align=left] 
{\normalsize{$B$}};

\draw (5,-0.5) node [anchor=north west][inner sep=0.75pt]   [align=left] 
{\normalsize{$C$}};

\draw (-1.3,3) node [anchor=north west][inner sep=0.75pt]   [align=left] 
{\normalsize{$u \in H$}};

\draw (6.3,3.2) node [anchor=north west][inner sep=0.75pt]   [align=left] 
{\normalsize{$v\in M$}};

\draw   (1.7,3.3) -- (2.3,3.3) -- (2.3,1.7) -- (1.7,1.7) -- cycle ;

\draw (2.4,3.3) node [anchor=north west][inner sep=0.75pt]   [align=left] 
{\normalsize{$D$}};

\draw   (3.7,3.3) -- (4.3,3.3) -- (4.3,1.7) -- (3.7,1.7) -- cycle ;

\draw (4.4,3.3) node [anchor=north west][inner sep=0.75pt]   [align=left] 
{\normalsize{$D$}};

\draw   (3.7,-0.7) -- (4.3,-0.7) -- (4.3,1.5) -- (3.7,1.5) -- cycle ;

\draw (4.4,1.7) node [anchor=north west][inner sep=0.75pt]   [align=left] 
{\normalsize{$S$}};

\draw   (1.7,-0.7) -- (2.3,-0.7) -- (2.3,1.5) -- (1.7,1.5) -- cycle ;

\draw (2.4,1.7) node [anchor=north west][inner sep=0.75pt]   [align=left] 
{\normalsize{$S$}};

\draw (-0.2,3.8) node [anchor=north west][inner sep=0.75pt]   [align=left] 
{\normalsize{$L_1$}};
\draw (1.8,3.8) node [anchor=north west][inner sep=0.75pt]   [align=left] 
{\normalsize{$L_2$}};
\draw (3.8,3.8) node [anchor=north west][inner sep=0.75pt]   [align=left] 
{\normalsize{$L_3$}};
\draw (5.8,3.8) node [anchor=north west][inner sep=0.75pt]   [align=left] 
{\normalsize{$L_4$}};

\end{tikzpicture} }	
	\caption{This figure shows a query $(u,v)$ where $u\in H$ and $v\in M$. We iterate over the neighbors of 
	$v$ (shown in \textcolor{red}{red}) and then use the wedges through sparse vertices to $u$ stored in the 
	data structure 
	$A^{*S}B^{S*}$ (shown in \textcolor{blue}{blue}). To get the $DD$ paths we use the wedges through 
	dense vertices to $u$ 
	stored in the data structure $A^{HD}B^{DD}$ (shown in \textcolor{green}{green}). To get the DS paths we 
	need to iterate 
	over the neighbors of $u$ in $L_2^D$ and use the wedges through sparse vertices to $v$ stored in the 
	data structure $B^{*S}C^{S*}$ (not shown here).}	
	\label{fig:HM-query-main}
\end{figure}
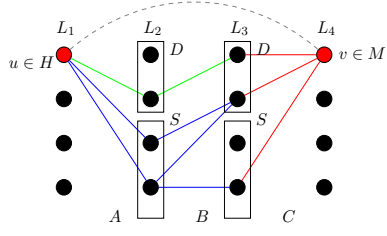

We now discuss how we can answer queries in worst-case time $O(m^{2/3-\eps})$.
We will prove the following:
\begin{lemma}\label{lem:query-time}
	The worst-case query time is $O(m^{2/3-\eps})$.
\end{lemma}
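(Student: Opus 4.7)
The plan is to mirror the case analysis of the warm-up Lemma~\ref{lem:query_time-simple} but now accounting for the extra axis of refinement introduced by phases. I split on the classes of $u\in L_1$ and $v\in L_4$, and within each case further on the classes of the middle vertices $a\in L_2,\ b\in L_3$, and when necessary on whether each of the three edges of the $3$-path lies in the old or the new phase. The stored data structures from Table~\ref{tab:DS-main} give almost every required count in $O(1)$; the remaining counts come from short iterations whose lengths are bounded by Constraints~\eqref{eq:phase-constraint} and~\eqref{eq:iteration-pairs-constraint}, together with one appeal to the warm-up algorithm on the fragment of the graph on which $A$ and $C$ are fixed.

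For the \textbf{HH} case I would handle the SD, DS, and DD middles by iterating over the $O(m^{1/3+\eps})$ dense vertices on one side (using $|L_2^D|,|L_3^D|\le m^{1/3+\eps}$) and combining the resulting edge with the wedge data structures $A^{HD}B^{DD}$, $B^{DD}C^{DH}$, $A^{*S}B^{S*}$, and $B^{*S}C^{S*}$ from~\eqref{eq:DS-HM-main} and~\eqref{eq:DS-com-main}. For the SS middle I split further into the eight combinations of (old/new) phases for the edges in $A,B,C$; six are read off in $O(1)$ from~\eqref{eq:DS-H-main}, and for the two missing combinations (both of the form $\ao^{HS}\cdot B^{SS}\cdot \co^{SH}$) I invoke Lemma~\ref{lem:layered-AC-fixed} as a subroutine, which is legitimate because $\ao$ and $\co$ are fixed within a phase and $B$ can be fed to the warm-up in chunks.

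For the mixed cases \textbf{HM, HL, MM, ML} and their symmetric counterparts, by symmetry I may assume $v$ has degree at most $2m^{2/3-\eps}$. I iterate over the $\le 2m^{2/3-\eps}$ neighbors $b$ of $v$ in $L_3$ and, for each $b$, add the number of $A\cdot B$-wedges from $u$ to $b$, split on the class of $a$: sparse $a$ contributes $A^{*S}B^{S*}[u,b]$, and dense $a$ contributes $A^{HD}B^{DD}[u,b]$ or $A^{MD}B^{DD}[u,b]$ according to the class of $u$. The total running time is $O(m^{2/3-\eps})$, as illustrated in Figure~\ref{fig:HM-query-main}. The \textbf{LL} case is the delicate one: since $|N(u)|,|N(v)|\le 2m^{1/3+\eps}$, neither endpoint alone supports a full iteration over all $(a,b)$ pairs. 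The SS and SD parts (sparse $a$) are still obtained by iterating $b\in N(v)$ and summing $A^{*S}B^{S*}[u,b]$, and the DS part by iterating $a\in N(u)\cap L_2^D$ and summing $B^{*S}C^{S*}[a,v]$. For the DD part I split further into the eight phase combinations of the three edges: the single $(\ao,\bo,\co)$ combination is looked up in $O(1)$ from an $L_1\times L_4$ three-path matrix that Section~\ref{subsec:phases} precomputes during the current phase, which is afforded by~\eqref{eq:phase-constraint}; the other seven combinations each have at least one edge in the new phase, and I handle them by iterating over the dense pairs $(a,b)$ one of whose incident edges lies in the new phase, pairing each such pair with the stored data structures $\anew^{*D}\bo^{DD}$ and $\bo^{DD}\cnew^{D*}$ or with their analogues reconstructed on the fly from the common wedge structures. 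By~\eqref{eq:iteration-pairs-constraint} the number of dense pairs with one side in the new phase is $O(m^{2/3-\eps})$, which keeps this sub-case within budget.

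The main obstacle will be the LL.DD sub-case, where class refinement and eight-way phase splitting must be done simultaneously, and every sub-count has to be charged to the stored matrix products, to the precomputed old-phase three-path matrix, or to an iteration whose size is bounded by~\eqref{eq:iteration-pairs-constraint}. Once each of the resulting sub-cases is verified to run in $O(m^{2/3-\eps})$ time, summing over the bounded number of cases yields Lemma~\ref{lem:query-time}.
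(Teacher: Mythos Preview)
Your overall structure matches the paper's, but two sub-cases have gaps. In the mixed cases HM, HL, MM, ML, your single pass over $b\in N(v)$ reading $A^{*S}B^{S*}[u,b]$ and $A^{HD}B^{DD}[u,b]$ (or $A^{MD}B^{DD}[u,b]$) misses paths with $a\in L_2^D$ and $b\in L_3^S$: the second table is indexed only by dense $b$, so it contributes nothing when $b$ is sparse. The paper covers these $DS$ paths by a separate pass over the at most $m^{1/3+\eps}$ vertices in $L_2^D$ (equivalently over $N(u)\cap L_2^D$), checking $(u,a)\in A$ and reading $B^{*S}C^{S*}[a,v]$. A smaller point: in the HH.SS sub-case, the warm-up instance is freshly initialized each phase and fed only $\bnew$, so it yields $\ao\,\bnew\,\co$ but not $\ao\,\bo\,\co$; the latter is read from the precomputed old-phase matrix product, exactly as you already propose for LL.DD.

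The more serious gap is in LL.DD, where you collapse all seven ``at least one new edge'' combinations into a single dense-pair iteration. The paper instead splits them into four groups handled by distinct mechanisms. Combinations with $B$ old ($\ao\bo\cnew$, $\anew\bo\co$, $\anew\bo\cnew$) are handled by iterating over the $O(m^{1/3+\eps})$ neighbors of the low-degree endpoint and looking up either the precomputed old-phase product $\ao\bo$ (resp.\ $\bo\co$) or the on-line structure $\anew^{*D}\bo^{DD}$ from~\eqref{eq:DS-L-main}. Combinations with $B$ new and at least one of $A,C$ also new ($\anew\bnew\cnew$, $\anew\bnew\co$, $\ao\bnew\cnew$) are the only ones where the dense-pair iteration via~\eqref{eq:iteration-pairs-constraint} is valid, because then one of $L_2,L_3$ has both its incident edge sets confined to $\pnew$. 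The remaining combination $\ao\,\bnew\,\co$ fits none of these: each middle layer touches an old-phase edge set, so neither side of the pair is small, and no stored two-hop table covers it. The paper handles $\ao\,\bnew\,\co$ by invoking the warm-up algorithm of \Cref{lem:layered-AC-fixed}, just as in the HH.SS sub-case; your proposal omits this call for LL queries.
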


Let the current query be $e=(u,v)$ where $u$ is in $L_1$ and $v$ is in $L_4$.
We first discuss the queries when the endpoints belong to $H$ or $M$.
\begin{claim}\label{clm:HM-query-time}
	The worst-case query time when both the query endpoints belong to $H$ or $M$ is 
	$O(m^{2/3-\eps})$.
\end{claim}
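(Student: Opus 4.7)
The plan is to split the count of $3$-paths $u \to x \to y \to v$ (with $x \in L_2$, $y \in L_3$) into four types based on whether each intermediate vertex is dense or sparse: $SS$, $SD$, $DS$, and $DD$. For each type I would combine an appropriate precomputed data structure with a short iteration, then sum the four contributions. The key global fact I will lean on is that there are at most $m^{1/3+\eps}$ dense vertices in each of $L_2$ and $L_3$, so any substep that enumerates dense vertices or dense neighbors costs only $O(m^{1/3+\eps})$, which is well within budget.

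I would first treat the sub-cases where at least one endpoint is medium (i.e.\ $HM$, $MH$, $MM$); by the symmetry of the query I may assume $v \in M$, so $v$ has at most $2m^{2/3-\eps}$ neighbors in $L_3$. To count the $SS + SD$ paths, I would iterate over the neighbors $y$ of $v$ and sum $A^{*S}\cdot B^{S*}[u,y]$ using the common data structure in \Cref{eq:DS-com-main}; this product counts exactly the wedges from $u$ to $y$ whose $L_2$ vertex is sparse, which is precisely the $SS+SD$ contribution. For $DD$ paths I would enumerate the $\leq m^{1/3+\eps}$ dense vertices $y \in L_3$, test adjacency with $v$ in $O(1)$, and read off either $A^{HD}\cdot B^{DD}[u,y]$ or $A^{MD}\cdot B^{DD}[u,y]$ from \Cref{eq:DS-HM-main} depending on the class of $u$. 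For $DS$ paths I would enumerate the $\leq m^{1/3+\eps}$ dense neighbors $x$ of $u$ in $L_2$ and look up $B^{*S}\cdot C^{S*}[x,v]$ from \Cref{eq:DS-com-main}. Summing, the total cost is $O(m^{2/3-\eps}) + O(m^{1/3+\eps}) + O(m^{1/3+\eps}) = O(m^{2/3-\eps})$.

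The harder sub-case will be $HH$, where neither endpoint admits iterating over its whole neighborhood. For $DD$, $DS$, and $SD$ the enumeration-over-dense-vertices strategy still works, each in $O(m^{1/3+\eps})$, so the main obstacle is the $SS$ count. My plan is to read this count directly from the six triple-product data structures of \Cref{eq:DS-H-main}, which in $O(1)$ per entry supply six of the eight possible old/new splittings of the three edges of the $SS$-path. The two remaining splittings are $\ao^{HS}\cdot \bo^{SS}\cdot \co^{SH}$ and $\ao^{HS}\cdot \bnew^{SS}\cdot \co^{SH}$. My plan for these is to use the phase machinery of \Cref{subsec:phases}: the all-old product is provided by the fast matrix multiplication performed during phase transitions, whose runtime fits in a phase by \Cref{eq:phase-constraint}, while the $\bnew$-only piece exploits that $|\bnew|\leq 2m^{1-\delta}$ is asymptotically smaller than $m$, allowing it to be maintained incrementally under the $O(m^{2/3-\eps})$ update budget. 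Verifying that these two auxiliary computations actually fit within the claimed time budget is the principal technical step of the argument.
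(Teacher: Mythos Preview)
Your overall decomposition and the handling of the $SD$, $DS$, $DD$ pieces, as well as the medium-endpoint $SS$ piece, are essentially identical to the paper's argument. The six stored triple products from \Cref{eq:DS-H-main} and the all-old product from the phase matrix multiplication are also used exactly as the paper does.

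The gap is in your treatment of the $\ao^{HS}\cdot \bnew^{SS}\cdot \co^{SH}$ piece for $HH$ queries. You write that the bound $|\bnew|\leq 2m^{1-\delta}$ allows this quantity to be ``maintained incrementally under the $O(m^{2/3-\eps})$ update budget,'' but neither natural reading of this works. Maintaining it on the fly fails: an edge update in $\bnew^{SS}$ touches all pairs in $L_1^H\times L_4^H$, and since $\ao,\co$ carry the full $m$ edges, each side has up to $m^{1/3+\eps}$ high-degree vertices, giving $m^{2/3+2\eps}$ pairs---this exceeds the budget (and is precisely why the two combinations with both $\ao$ and $\co$ are excluded from \Cref{eq:DS-H-main}). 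Lazy evaluation at query time fails too: iterating over the $2m^{1-\delta}$ edges of $\bnew$ is too expensive because $1-\delta = 1-3\eps > 2/3-\eps$.

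The paper's actual solution for this piece is to invoke, as a black box, the entire warm-up algorithm of \Cref{sec:alg-simple}, run with $A:=\ao$ and $C:=\co$ held fixed while $B:=\bnew$ receives updates. That algorithm has its own machinery---chunks of size $m^{2/3-\e1}$, a per-chunk sparse/dense split in $L_2,L_3$, and its own data structures---and its correctness under \Cref{asm:AC-fixed} is exactly what makes this piece fit in $O(m^{2/3-\eps})$. You should name and invoke that subroutine explicitly rather than leaving the mechanism unspecified; the intuition that ``$\bnew$ is small'' is correct, but making it work requires that separate construction.
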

\begin{proof}
There are two choices for vertices in $L_2$ and $L_3$, sparse or dense, so there are four kinds of 
paths between any pair of vertices in $L_1$ and $L_4$.
Consider paths where there is at least one dense vertex. 
We can iterate over all the $m^{1/3+\eps}$ dense vertices in that layer and check for an edge to 
one side and then the number of paths to the other side.
This can be done using the data structures of the form $A^{HD} \cdot B^{DD}$ 
(\Cref{eq:DS-HM-main}) 
for 
paths through dense vertices and $A^{*S} \cdot B^{S*}$ (\Cref{eq:DS-com-main}) 
for paths through 
sparse 
vertices.

So now we only worry about paths that go through $L_2^S$ and $L_3^S$.
If one of the vertices in the query is in $M$, let it be in $L_1$ without loss of generality, then we iterate over its 
$2m^{2/3-\eps}$ neighbors in $L_2^S$ 
and use $B^{*S} \cdot C^{S*}$ (\Cref{eq:DS-com-main}) to get the number of paths to the other endpoint of 
the query that go through $L_3^S$.

The remaining case is where the query has both high degree endpoints, and we need to calculate 
the number of $SS$ paths.
We know the number of $\ao \cdot \bo \cdot \co$ paths from the matrix product of the previous phase.
We also know the number of $\ao \cdot \bnew \cdot \co$ paths because we can run the 
algorithm where 
$A,C$ are fixed (\Cref{sec:alg-simple}).
For the remaining combinations of phases, we have data structures of the form 
$A^{HS} \cdot 
B^{SS} \cdot C^{SH}$ (\Cref{eq:DS-H-main}) where the number of paths are 
explicitly stored.
\Cref{fig:HM-query-main} gives an illustration of an $HM$ query.
\end{proof}

We now discuss the queries when at least one endpoint belong to $L$.
\begin{claim}\label{clm:L-query-time}
	The worst-case query time when at least one of the query endpoints belongs to $L$ is 
	$O(m^{2/3-\eps})$.
\end{claim}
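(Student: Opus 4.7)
The plan is to mirror the strategy of \Cref{clm:HM-query-time} by decomposing the $3$-path count between $u$ and $v$ into contributions from the four class-patterns $\{SS,SD,DS,DD\}$ of the two internal vertices in $L_2$ and $L_3$, and (for the DD pattern) further by the eight old/new phase-patterns of the three path-edges. Without loss of generality take $u \in L$, so that $|N_A(u)| \le 2m^{1/3+\eps}$, and split into the two subcases $v \in H \cup M$ and $v \in L$.

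For $v \in H \cup M$, I would iterate over the $O(m^{1/3+\eps})$ edges $(u,x) \in A$; the paths through a sparse $y$ are read off as $B^{*S}\cdot C^{S*}[x,v]$ (\Cref{eq:DS-com-main}) in $O(1)$ per $x$, and the paths through a dense $y$ are obtained by iterating instead over the $O(m^{1/3+\eps})$ globally dense vertices in $L_3$, testing $(y,v) \in C$, reading $A^{*S}\cdot B^{S*}[u,y]$ (\Cref{eq:DS-com-main}) for the sparse-$x$ subcase, and assembling the $A^{*D}\cdot B^{DD}[u,y]$ value from its four phase combinations: the all-old part from the matrix product precomputed during the previous phase (\Cref{subsec:phases}), the $(\anew,\bo)$ part from the stored $\anew^{*D}\cdot \bo^{DD}$ (\Cref{eq:DS-L-main}), the $(\ao,\bnew)$ part from invoking the warm-up algorithm of \Cref{sec:alg-simple} on the fixed pair $(\ao,\co)$ with stream $\bnew$, and the all-new part from walking the small set of dense $L_2$-vertices active within $\bnew$, whose size is controlled by \Cref{eq:iteration-pairs-constraint}.

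For $v \in L$ we additionally have $|N_C(v)| \le 2m^{1/3+\eps}$, so we can iterate on both sides. The SS, SD, and DS contributions use the stored $A^{*S}\cdot B^{S*}$ and $B^{*S}\cdot C^{S*}$ data structures together with the low-degree neighborhoods of $u$ and $v$. The DD contribution is again split into its eight phase-patterns: the all-old one is a $O(1)$ matrix-product lookup; the patterns whose $B$-edge is in $\bo$ are resolved by iterating over $(u,x)\in \anew$ or $(y,v)\in \cnew$ (each of size $O(m^{1/3+\eps})$) and combining with the stored $\anew^{*D}\cdot \bo^{DD}$, $\bo^{DD}\cdot \cnew^{D*}$ and with the matrix products of the previous phase; and the $\ao\cdot\bnew\cdot\co$ pattern is resolved by the warm-up algorithm of \Cref{sec:alg-simple} as a subroutine, whose per-query cost of $O(m^{2/3-\e1})$ fits the budget because $\eps_1 \ge \eps$.

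The main obstacle is the remaining DD patterns in the $L\times L$ case in which the $B$-edge lies in $\bnew$ and at least one outer edge is also new (patterns $\anew\cdot\bnew\cdot\co$, $\ao\cdot\bnew\cdot\cnew$, and $\anew\cdot\bnew\cdot\cnew$), because neither naive iteration over the $\Theta(m^{1-\delta})$ edges of $\bnew$ nor iteration over the $\Theta(m^{2/3+2\eps})$ pairs of globally dense $L_2\times L_3$ vertices fits in $O(m^{2/3-\eps})$. My plan to resolve these is to combine iteration over $N_{\anew}(u)$ or $N_{\cnew}(v)$ (each of size $O(m^{1/3+\eps})$) with iteration over globally dense vertices on the opposite side, using \Cref{eq:iteration-pairs-constraint} to bound the number of relevant pairs one of whose endpoints is active in the current phase; together with \Cref{eq:phase-constraint}, which guarantees that the auxiliary matrix products computed during a phase fit inside its edge-update budget, this gives the desired $O(m^{2/3-\eps})$ worst-case query time across all subcases.
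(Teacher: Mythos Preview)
Your $LL$ case is essentially the paper's argument: the eight-way phase split of the $DD$ pattern, the use of the precomputed old-phase matrix products, the stored $\anew^{*D}\cdot\bo^{DD}$ and $\bo^{DD}\cdot\cnew^{D*}$ for the mixed-$\bo$ patterns, the warm-up subroutine for $\ao\cdot\bnew\cdot\co$, and the pair-iteration bound from \Cref{eq:iteration-pairs-constraint} for the three $\bnew$-with-a-new-outer-edge patterns all match the paper's Cases~1--4.

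The gap is in your $HL/ML$ case. You try to assemble the value $A^{*D}\cdot B^{DD}[u,y]$ (with $u\in L$) from its four phase pieces, and for the $(\ao,\bnew)$ piece you ``invoke the warm-up algorithm on the fixed pair $(\ao,\co)$ with stream $\bnew$''. But that subroutine only answers $3$-path queries between $L_1$ and $L_4$; it does not expose a two-hop count $\ao^{*D}\cdot\bnew^{DD}[u,y]$ indexed by $y\in L_3$, and its internal wedge tables (\Cref{eq:DS-L-simple}) are organised around the \emph{chunk-local} dense/sparse split, not the main algorithm's global $D/S$ classes. So that step does not go through.

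The fix, and what the paper actually does, is much simpler: since $v\in H\cup M$, the data structures $B^{DD}\cdot C^{DH}$ and $B^{DD}\cdot C^{DM}$ of \Cref{eq:DS-HM-main} are maintained on the fly and give $DD$ two-hop counts from any dense $x\in L_2$ to $v$ directly. So iterate over the $O(m^{1/3+\eps})$ neighbours $x$ of the low vertex $u$ and read off $(B^{DD}\cdot C^{DH})[x,v]$ (resp.\ $(B^{DD}\cdot C^{DM})[x,v]$) for the $DD$ part, $(B^{*S}\cdot C^{S*})[x,v]$ for the $*S$ part, and handle the remaining $DS$ pattern symmetrically from the $v$-side via $A^{*S}\cdot B^{S*}$ after iterating over the $O(m^{1/3+\eps})$ dense vertices in $L_2$. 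No phase decomposition is needed in this subcase.
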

\begin{proof}
We first consider $HL$ and $ML$ queries.
In this case, $u \in L_1$ is a high or medium degree vertex and $v \in L_4$ is a low degree vertex.
We can iterate over all $2m^{1/3 +\eps}$ neighbors $w_i$ of $v$. We know the number of $DD, SS, 
SD$ paths from $u$ to $w_i$'s using the data structures $A^{*S} \cdot B^{S*}$  
(\Cref{eq:DS-com-main})
and 
$A^{HD} \cdot B^{DD},A^{MD} \cdot B^{DD}$ (\Cref{eq:DS-HM-main}). 
For the $DS$ paths we iterate over the $2m^{1/3+\eps}$ neighbors of $u$ in $L_2^D$ and use 
$B^{*S} \cdot C^{S*}$ (\Cref{eq:DS-com-main}) to get the number of paths to $v$ 
via $L_3^S$.

We now consider $LL$ queries.
In this case, $u \in L_1^L$ and $v\in L_4^L$ are both low degree vertices.
It is easy to find the number of $SS,SD,DS$ paths by going over the neighbors of a query vertex and 
then using the data structures $A^{*S} \cdot B^{S*}, B^{*S} \cdot C^{S*}$ 
(\Cref{eq:DS-com-main}).
So we now only worry about $DD$ paths.
Consider the different cases depending on the phases $A,B$ and $C$ are in:

\textbf{Case 1:} $\ao \cdot \bo \cdot \co, \,\, \ao \cdot \bo \cdot \cnew, \,\, \anew \cdot 
\bo \cdot \co$ \\
Consider $\ao \cdot \bo \cdot \cnew$ paths (the others are similar).
We can iterate over the $2m^{1/3+\eps}$ neighbors of $v$ and use the number of 
$\ao \cdot \bo$ 
paths stored in the matrix product for the old phase.

\textbf{Case 2:} $\anew \cdot \bo \cdot \cnew$ \\
We have the following data structures stored: $\anew^{LD} \cdot \bo^{DD}, 
\bo^{DD} \cdot 
\cnew^{DL}$ (\Cref{eq:DS-L-main}).
Iterate over the $2m^{1/3+\eps}$ neighbors of $u$ and use the number of $DD$ paths to $v$.

\textbf{Case 3:} $\anew \cdot \bnew \cdot \cnew,\,$ 
$\anew \cdot \bnew \cdot \co,\,$ $\ao \cdot \bnew \cdot \cnew$ \\
Either $L_2$ or $L_3$ has just $O(m^{1-\delta})$ edges incident on it.
Thus, we go over all pairs of dense vertices in $L_2,L_3$ in time $O(m^{1/3+\eps}) \cdot O(m^{1/3-\delta 
+\eps}) 
\leq O(m^{2/3-\eps})$ (\Cref{eq:iteration-pairs-constraint}) and check if they form a path.

\textbf{Case 4:} $\ao \cdot \bnew \cdot \co$ \\
We can run the algorithm where $A,C$ are fixed (\Cref{sec:alg-simple}).

\end{proof}

\Cref{clm:HM-query-time} and \Cref{clm:L-query-time} together prove \Cref{lem:query-time}.
Also, \Cref{lem:DS-update-time} and \Cref{lem:query-time} together prove 
\Cref{lem:layered-under-asm}.
Therefore, we have proved \Cref{thm:alg-layer} under \Cref{asm:vertices} and \Cref{asm:classes}.
The next step is getting rid of these assumptions which we will do in the following sections.
The last thing we do in this section is give the pseudo code for the algorithms.

\subsection{Pseudocode}

In this subsection, we give some high level pseudocode for the main algorithm.
\Cref{Alg:main-count} is the main algorithm (Equivalent Queries in \Cref{sec:prelim}) which outputs the total number of $4$-cycles in the current graph after every update. 
It uses \Cref{Alg:main-update} and \Cref{Alg:main-query} as subroutines which are algorithms that update the data structures (\Cref{subsec:DS-main}) and answer the query (\Cref{subsec:queries-main}) respectively.
Note that these algorithms are only high level and the full details can be found in the referenced sections.

\begin{minipage}{0.45\textwidth}  
	
	\tiny{
		\begin{Algorithm}\label{Alg:main-count}
			\textbf{Counting layered $4$-Cycles}
			
			\vspace{4pt} 
			
			\textbf{Input:} A dynamic $4$-layered graph $G$ and an edge insertion or deletion $(u,v)$. \\
			\textbf{Output:} Maintain the count of total layered $4$-cycles in $G$.
			
			\vspace{4pt} 
			
			\textbf{Step 1: Process Edge Update $(u, v)$}
			\begin{itemize}[leftmargin=8pt]
				\item \textbf{Call Update($u,v$)} on the three relevant copies.
			\end{itemize}
			
			\vspace{4pt} 
			
			\textbf{Step 2: Compute Number of $4$-Cycles}
			\begin{itemize}[leftmargin=8pt]
				\item \textbf{Call Query($u,v$)} on the fourth copy:
				\[
				\text{New} \gets \textbf{Query}(u, v)
				\]
				\item Use the previous query answer to update the total count:
				\begin{align}
					\text{Total} &\gets \text{Total} + \text{New}  \tag{Insertion} \\
					\text{Total} &\gets \text{Total} - \text{New} \tag{Deletion}
				\end{align}
			\end{itemize}
			
			\textbf{Output} Total.

		\end{Algorithm}
	}
	
	\tiny{
		\begin{Algorithm}\label{Alg:main-update}
			\textbf{Update($u,v$)}
			
			\vspace{4pt} 
			
			\textbf{Input:} A dynamic graph $G$, an edge update $(u, v) \in B$, where $u \in L_2$ and $v \in L_3$. \\
			\textbf{Output:} Updates only the affected data structures containing $B$, reflecting the addition of $(u, v)$.
			
			\vspace{4pt} 
			
			\textbf{Step 1: Update Affected Data Structures:}  
			
			\textbf{If $u, v \in D$ :}  
			\begin{itemize}[leftmargin=8pt]
				\item Update data structures in \Cref{eq:DS-HM-main} by iterating over all neighbors of $u$ in $L_1^H$ or $L_1^M$ (similar for neighbors of $v$ in $L_4^H, L_4^M$).
			\end{itemize}
			
			\textbf{Else:}  
			\begin{itemize}[leftmargin=8pt]
				\item Update data structures in \Cref{eq:DS-com-main} by iterating over all neighbors of $u$ (similarly iterate over all neighbors of $v$).
				
				\item \textbf{If $u, v \in S$ :} Update data structures in \Cref{eq:DS-H-main} by iterating over all pairs of vertices in $L_1^H$ and $L_4^H$.
				
			\end{itemize}
			
			\vspace{4pt} 
			
			\textbf{Step 2: Matrix Multiplication for the Old Phase:}  
			
			\begin{itemize}[leftmargin=8pt]
				\item If $(u,v)$ is the first edge of a phase: \\
				We start the computation of matrix products for all relevant submatrices in the old phase.

				\item Else: \\
				Continue the matrix multiplication computation for $O(m^{2/3-\eps})$ steps.
				
			\end{itemize}

		\end{Algorithm}
	}
	
\end{minipage}%
\hfill  
\begin{minipage}{0.50\textwidth}  
	
	\tiny{
		\begin{Algorithm}\label{Alg:main-query}
			\textbf{Query($u,v$)}
			
			\vspace{4pt} 
			
			\textbf{Input:} A dynamic graph $G$, a query $(u, v)$ where $u \in L_1$ and $v \in L_4$. \\
			\textbf{Output:} Number of $3$-paths between $u$ and $v$.
			
			\vspace{4pt} 
			
			\textbf{If $u,v \in H \cup M$:}

			\begin{itemize}[leftmargin=8pt]
				\item Compute $\alpha$ the number of $DD, SD, DS$ paths by iterating over $L_2^D,L_3^D$ and using data structures of the form $A^{HD} \cdot B^{DD}$ for dense paths and $A^{*S} \cdot B^{S*}$ for sparse paths.
				
				\vspace{2pt} 
				
				\item \textbf{If} $u$ and $v \in H$, let $\beta$ be the sum of these paths:
				\begin{itemize} [leftmargin=4pt]
					\item Retrieve precomputed paths in $\ao \cdot \bo^{SS} \cdot \co$ and data structures in \Cref{eq:DS-H-main}.
					\item Query $u,v$ for the copy of the warm up algorithm run on $\ao \cdot \bnew^{SS} \cdot \co$.
				\end{itemize}
				
				\vspace{2pt} 
				
				\item \textbf{Else} (wlog $u\in M$) and let $\beta$ be the number of following paths.
				Iterate over neighbor of $u$ in $L_2^S$ and use $B^{*S} \cdot C^{S*}$.
				
			\end{itemize}
			
			\vspace{4pt} 
			
			\textbf{Else If $u,v \in L$:}
			
			\begin{itemize}[leftmargin=8pt]
				\item Compute $\alpha$ the number of $SS, SD, DS$ paths using $A^{*S} \cdot B^{S*}, B^{*S} \cdot C^{S*}$.
				\vspace{2pt} 
				\item Compute $\beta$ the number of $DD$ paths based on phases:
				\vspace{2pt} 
				\begin{itemize}[leftmargin=4pt]
					\item \textbf{Case 1:} $\ao \cdot \bo \cdot \co, \ao \cdot \bo \cdot \cnew, \anew \cdot \bo \cdot \co$ \\
					Iterate over neighbors in $D$ and use stored paths.
					\vspace{2pt} 
					
					\item \textbf{Case 2:} $\anew \cdot \bo \cdot \cnew$ \\
					Iterate over neighbors in $D$ and use data structures $\anew^{LD} \cdot \bo^{DD}$ and $\bo^{DD} \cdot \cnew^{DL}$.
					\vspace{2pt}
					
					\item \textbf{Case 3:} $\anew \cdot \bnew \cdot \cnew, \anew \cdot \bnew \cdot \co, \ao \cdot \bnew \cdot \cnew$ \\
					Iterate over all pairs of dense vertices in $L_2, L_3$ and check paths with $u,v$.
					\vspace{2pt}
					
					\item \textbf{Case 4:} $\ao \cdot \bnew \cdot \co$ \\
					Query $u,v$ for the copy of the warm up algorithm run for $\ao \cdot \bnew \cdot \co$.
				\end{itemize}
			\end{itemize}
			
			\vspace{4pt} 
			
			\textbf{Else:} (wlog $u\in L$)
			\begin{itemize}[leftmargin=8pt]
				\item Iterate over the neighbors of $u$.
				\item Compute $\alpha$ the number of $SS, SD, DS$ paths using $A^{*S} \cdot B^{S*}$ and $B^{*S} \cdot C^{S*}$.
				\item Compute $\beta$ the number of $DD$ paths using $A^{HD} \cdot B^{DD}, A^{MD} \cdot B^{DD}$.
			\end{itemize}
			
			\vspace{4pt} 
			
			\textbf{Return $\alpha+\beta$}.
			
		\end{Algorithm}
	}
	
\end{minipage}%

Note that \Cref{Alg:main-update} is written only for when the update is in matrix $B$. We have similar algorithms for when the update is in $A,C$ which update different data structures (using similar ideas). 
In Step $2$ of \Cref{Alg:main-update} we want to compute the number of paths between all pairs of vertices (between all layers) for all choices of classes of vertices $(S,D)$ in $L_2$ and $L_3$.
This is done by going over all possible $O(1)$ combinations of submatrices and multiplying the compatable ones using fast matrix multiplication. The matrix multiplications we compute are $A^{*x} \cdot B^{xy} \cdot C^{y*}, \, A^{*x} \cdot B^{x*}$ and $B^{*y} \cdot C^{y*}$ for all choices of $x,y \in \set{S,D}$.
In \Cref{Alg:main-query}, $\alpha$ and $\beta$ are just some counters where $\alpha$ counts $3$ types of paths and $\beta$ counts one type of path (there is no connection between $\alpha$ and $\beta$ across different cases).


\section{Getting Rid of \texorpdfstring{\Cref{asm:vertices}}{Assumption 1}: Tiny Vertices}
\label{sec:tiny-vertices}

In this section, we introduce \textbf{tiny} vertices ($T$) and show how they help get 
rid of 
\Cref{asm:vertices}.
The vertices in $L_1$ (resp. $L_4$) are in $T$ when their degree in $A$ (resp. $C$) is at most 
$2 m^{1/3-2\eps}$.
The vertices in $L_2$ (resp. $L_3$) are in $T$ when their combined degree in $A,B$ (resp. $B,C$) 
is at most $2 m^{1/3-2\eps}$.

We describe how we calculate the number of paths that go through tiny vertices and also how we 
answer queries when at least one endpoint is a tiny vertex.
Once we do this we can completely ignore tiny vertices.
For queries with a tiny vertex we will know how to compute the answer and for other queries we will 
be able to figure out all the paths that go through tiny vertices, so the only thing left to do will be to 
calculate the number of paths that do not go through any tiny vertices.
This is precisely what we focused on in the previous section \Cref{sec:main-alg} 
(modulo 
\Cref{asm:classes}). 
We will prove the following lemma:
\begin{lemma}\label{lem:thm-under-asm2}
	\Cref{thm:alg-layer} holds under \Cref{asm:classes}.
\end{lemma}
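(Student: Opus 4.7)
The plan is to decompose every 3-path $u \to w_2 \to w_3 \to v$ from $L_1$ to $L_4$ according to which of its four vertices are tiny, and treat the sixteen resulting types separately. The all-non-tiny type reduces directly to \Cref{lem:layered-under-asm}: a simple degree count shows that each layer contains at most $2m/m^{1/3-2\eps} = 2m^{2/3+2\eps}$ non-tiny vertices, since every non-tiny vertex has degree exceeding $m^{1/3-2\eps}$ and the sum of degrees in any layer is at most $2m$. Hence the induced subgraph on non-tiny vertices already satisfies \Cref{asm:vertices}. Because we remain under \Cref{asm:classes}, the tiny/non-tiny partition is fixed in time, so \Cref{lem:layered-under-asm} applies verbatim to the non-tiny core and produces the count of all-non-tiny 3-paths in $O(m^{2/3-\eps})$ worst-case update time.

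The remaining fifteen tiny-involving types are where the work lies, and I would handle them in two groups. For a query $(u,v)$ with a tiny endpoint, say $u \in L_1^T$ (the case $v \in L_4^T$ is symmetric), we can afford to enumerate at query time the at most $2m^{1/3-2\eps}$ neighbors $w$ of $u$ in $L_2$, and for each $w$ read off the number of 2-paths from $w$ to $v$ from wedge tables indexed by the tiny-class of $w$ and of the intermediate $L_3$ vertex. These tables mirror the common data structures of \Cref{subsec:DS-main} but are restricted to destinations in $L_4^{NT}$, which by the above degree count has size at most $2m^{2/3+2\eps}$; they can be maintained by iterating either over a tiny neighborhood on the incident side (cost at most $m^{1/3-2\eps}$) or over the bounded non-tiny layer on the opposite side (cost at most $m^{2/3+2\eps}$), in both cases within the $O(m^{2/3-\eps})$ budget. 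For queries with non-tiny endpoints but one or two tiny intermediate vertices, the required tables are indexed by $L_1^{NT} \times L_4^{NT}$, so have at most $m^{4/3+4\eps}$ entries; within each phase we recompute these tables using fast matrix multiplication exactly as in \Cref{subsec:phases}, replacing the label \emph{sparse} by \emph{tiny} on the middle layer in one of the submatrix products.

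The principal obstacle is bookkeeping: each of the fifteen tiny-involving cases must be matched to a data structure whose size and maintenance cost are governed by either the tiny-degree bound $m^{1/3-2\eps}$ or the non-tiny-layer bound $m^{2/3+2\eps}$, with their product lying below $m^{2/3-\eps}$. The relevant inequalities, notably $m^{1/3-2\eps}\cdot m^{1/3-2\eps} = m^{2/3-4\eps}$ and $m^{\omega\cdot(2/3+2\eps)} \leq m^{5/3-\delta-\eps}$, are precisely the ones enforced by the constraints of \Cref{sec:setup-main}, so no constraints beyond those solved there are required. Once each tiny-involving contribution is maintained within budget, a query sums sixteen contributions (fifteen lookups and iterations plus one invocation of \Cref{lem:layered-under-asm}) in total worst-case time $O(m^{2/3-\eps})$, proving \Cref{lem:thm-under-asm2}.
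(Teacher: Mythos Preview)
Your high-level decomposition---split off the tiny vertices, observe that the non-tiny core has at most $O(m^{2/3+2\eps})$ vertices per layer and therefore satisfies \Cref{asm:vertices}, then invoke \Cref{lem:layered-under-asm}---is exactly what the paper does. The gap is in how you treat the tiny-involving path types.

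First, there is a direct arithmetic slip: you write that iterating ``over the bounded non-tiny layer on the opposite side (cost at most $m^{2/3+2\eps}$)'' is ``within the $O(m^{2/3-\eps})$ budget.'' It is not; $m^{2/3+2\eps}$ strictly exceeds $m^{2/3-\eps}$ for any $\eps>0$. This already breaks the maintenance claim for the wedge tables you describe.

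Second, and more structurally, your plan to maintain a single $3$-path table indexed by \emph{all} of $L_1^{NT}\times L_4^{NT}$ and to treat it ``exactly as in \Cref{subsec:phases}'' cannot work. The phase machinery in \Cref{subsec:DS-main} for the $SS$ paths relies on the endpoints being in $H$: when an edge in $B$ is updated, the on-the-fly data structures in \Cref{eq:DS-H-main} are refreshed by iterating over all pairs in $L_1^H\times L_4^H$, which is affordable only because $|L_1^H|,|L_4^H|\leq m^{1/3+\eps}$ and one side is further shrunk by the new-phase restriction. If you replace $H$ by $NT$, those layers have size up to $m^{2/3+2\eps}$, and the analogous pair iteration costs $m^{4/3+4\eps}$ per update. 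Similarly, an update in $A$ at an edge $(u,w)$ with $u\in L_1^{NT}$ would force you to touch the row for $u$ across all of $L_4^{NT}$, again $m^{2/3+2\eps}$ work. The old-phase FMM handles only the all-old case; the seven mixed-phase cases still need on-the-fly maintenance, and that is where the $NT\times NT$ indexing blows up.

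The paper avoids this by \emph{not} maintaining $3$-path-through-tiny tables for all non-tiny endpoint pairs. It keeps such tables only for endpoints in $H$ and $M$ (\Cref{eq:DS-HM-tiny}, \Cref{eq:DS-H-tiny}), where the endpoint layers are small enough to iterate over. For endpoints in $L$ (or $M$), it instead iterates over the endpoint's at most $2m^{2/3-\eps}$ neighbors at query time and reads the wedge tables $A^{*T}\cdot B^{T*}$ and $B^{*T}\cdot C^{T*}$ (\Cref{eq:DS-common-tiny}); and for an $HL$ query it exploits that an $L$-vertex's neighbor's tiny-neighbor set has size $m^{1/3+\eps}\cdot m^{1/3-2\eps}=m^{2/3-\eps}$. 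In short, the fix is to distinguish $H$ from $M,L$ among the non-tiny endpoints and use different mechanisms for each, rather than a uniform $NT\times NT$ table.
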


\paragraph{High Level Idea.}
The goal of this section is to say that tiny vertices are easy to handle.
We will show that we can easily compute paths through tiny vertices and answer 
queries containing 
tiny vertices.
It is easy to maintain $2$-paths through tiny vertices, but we also show in 
\Cref{subsec:DS-tiny} that 
we can store some types of $3$-paths through them in worst-case update time 
$O(m^{2/3-\eps})$.
These are helpful for finding paths through tiny vertices when one of the query 
endpoints is a high 
degree vertex.
If both the endpoints are not high degree then we can easily iterate over the 
neighbors of each of 
the query vertices and then use the data structure for wedges through tiny vertices.
The complete argument for this part can be found in \Cref{subsec:paths-tiny}.

Answering queries when one of the endpoints is a tiny vertex is the easier part, and 
we show the 
details in \Cref{subsec:queries-tiny}.
If the other vertex is tiny or low degree then we just iterate over the neighbors of 
both the endpoints.
Otherwise, we go over the neighbors of the tiny vertex and use the stored wedges 
through tiny and 
sparse vertices.
The only remaining case is paths going through dense vertices and the crucial 
observation is that we 
can iterate over all the neighbors of a tiny vertex and all the dense vertices of a layer.
This turns out to be enough to solve the problem.

\subsection{Data Structure for Tiny Vertices}\label{subsec:DS-tiny}
\begin{table}[h!]
	\centering
	\begin{tabular}{ |p{4.5cm}|p{4.5cm}| p{4.5cm}| }
		\hline
		\multicolumn{3}{|c|}{\textbf{Data Structures}} \\
		\hline\hline	
		\textbf{High}& \textbf{Medium} & \textbf{Common}  \\
		\hline
		 $A^{HT} \cdot B^{TT} \cdot C^{TH}$ (\Cref{eq:DS-HM-tiny}) &$A^{MT} \cdot B^{TT} \cdot 
		 C^{TH}$ (\Cref{eq:DS-HM-tiny}) & $A^{*T} \cdot B^{T*}$ (\Cref{eq:DS-common-tiny}) \\
		\hline		 
		$A^{HT} \cdot B^{TS} \cdot C^{SH}$ (\Cref{eq:DS-H-tiny}) &$A^{HT} \cdot B^{TT} \cdot C^{TM}$ 
		(\Cref{eq:DS-HM-tiny})& 
		$B^{*T} \cdot C^{T*}$ (\Cref{eq:DS-common-tiny})\\
		\hline		 
		$A^{HS} \cdot B^{ST} \cdot C^{TH}$(\Cref{eq:DS-H-tiny}) && \\
		\hline
	\end{tabular}
\caption{Data Structures for Tiny Vertices.}
\label{tab:DS-tiny}	
\end{table}

We show the data structures we need to store to calculate the number of paths 
through tiny 
vertices and to answer queries with tiny vertices in them.
The list of data structures can be found in \Cref{tab:DS-tiny}.
\begin{lemma}\label{lem:DS-T-update-time}
	The worst-case time it takes to update all the data structures we store for tiny vertices is 
	$O(m^{2/3-\eps})$.
\end{lemma}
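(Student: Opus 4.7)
The plan is to handle the common $2$-hop structures directly, maintain a collection of auxiliary $2$-hop partial products, and then use those partial products to update each $3$-hop data structure cheaply. First I would handle the common structures $A^{*T}\cdot B^{T*}$ and $B^{*T}\cdot C^{T*}$ from \Cref{eq:DS-common-tiny}. For an update $(u,v)\in A$ with $v\in L_2^T$, I iterate over the neighbors of $v$ in $L_3$; since $v$ is tiny, there are at most $2m^{1/3-2\eps}$ of them, well within budget. Updates to $B$ are symmetric, and $B^{*T}\cdot C^{T*}$ is handled in the same way.

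Next I would maintain all the intermediate $2$-hop products implicit in \Cref{tab:DS-tiny}: $B^{TT}\cdot C^{TH}$, $A^{HT}\cdot B^{TT}$, $B^{TT}\cdot C^{TM}$, $A^{MT}\cdot B^{TT}$, $B^{TS}\cdot C^{SH}$, $A^{HT}\cdot B^{TS}$, and their symmetric counterparts. In each of these the index ``$T$'' appears adjacent to the updated matrix, so updates can again be pushed through using the tiny-degree bound on one side and, where needed, the fact that $L_1^H$ and $L_4^H$ each have only $m^{1/3+\eps}$ vertices. For instance, an edge update in $C^{SH}$ for $B^{TS}\cdot C^{SH}$ costs at most the degree of the endpoint in $B$, which is at most $2m^{2/3-\eps}$ since the endpoint is sparse, fitting the budget.

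For each $3$-hop data structure I then split on the layer of the edge update. Updates in an outer matrix, say $A^{HT}$ for $A^{HT}\cdot B^{TT}\cdot C^{TH}$, are resolved by reading off the stored partial product $B^{TT}\cdot C^{TH}$ along the row corresponding to the right endpoint of the update and writing into the $O(m^{1/3+\eps})$ high vertices of $L_4$; the Medium variants cost at most $O(m^{2/3-\eps})$ because $|L_1^M|,|L_4^M|\le 2m^{2/3-\eps}$. Updates in the middle matrix $B$ are the delicate case: the change propagates to every pair $(x,y)$ with $x$ a neighbor of the left endpoint in $A$ and $y$ a neighbor of the right endpoint in $C$. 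For $A^{HT}\cdot B^{TT}\cdot C^{TH}$, the $B^{TT}$ edge has both endpoints tiny, so the total number of such pairs is at most $(2m^{1/3-2\eps})^2\le m^{2/3-\eps}$. For the mixed structures $A^{HT}\cdot B^{TS}\cdot C^{SH}$ and $A^{HS}\cdot B^{ST}\cdot C^{TH}$, a $B$ edge has one tiny endpoint (contributing at most $2m^{1/3-2\eps}$ neighbors) and one sparse endpoint whose only relevant neighbors lie in $L_4^H$ (or $L_1^H$), of which there are at most $m^{1/3+\eps}$; the product again is $O(m^{2/3-\eps})$.

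The main obstacle will be the mixed tiny/sparse $3$-hop structures of \Cref{eq:DS-H-tiny}: one must be careful that on a middle update the sparse endpoint is charged not by its full degree (which could be as large as $m^{2/3-\eps}$) but only by the number of high neighbors in the opposite outer layer. This requires keeping, alongside each sparse vertex, quick access to its high neighbors on the relevant side, which I would maintain as an auxiliary list under insertions and deletions in constant time per update. Once that bookkeeping is set up, every case of every data structure fits into $O(m^{2/3-\eps})$ worst-case time, establishing \Cref{lem:DS-T-update-time}.
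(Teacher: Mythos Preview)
Your proposal is correct and follows essentially the same approach as the paper: maintain the tiny $2$-hop wedges, use them as partial products to handle outer-matrix updates in the $3$-hop structures, and handle middle updates by iterating over pairs of relevant neighbors, exploiting the tiny-degree bound on at least one side. One minor simplification: for the mixed structures in \Cref{eq:DS-H-tiny}, you do not need the auxiliary list of high neighbors of each sparse vertex; since $|L_4^H|\le m^{1/3+\eps}$, the paper simply iterates over \emph{all} vertices of $L_4^H$ (checking adjacency in $O(1)$) together with the at most $2m^{1/3-2\eps}$ neighbors of the tiny endpoint, which already gives the $O(m^{2/3-\eps})$ bound without extra bookkeeping. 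Similarly, the paper does not maintain the intermediate products like $B^{TS}\cdot C^{SH}$ separately but reads them off as submatrices of the already-stored $A^{*T}\cdot B^{T*}$ and $B^{*S}\cdot C^{S*}$.
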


We first maintain the following data structures for tiny vertices: 
\begin{equation}\label{eq:DS-common-tiny}
A^{*T} \cdot B^{T*} \text{ and } B^{*T} \cdot C^{T*}.	
\end{equation}

\begin{claim}\label{clm:T-DS}
	The worst-case update time for the data structures $A^{*T} \cdot B^{T*}$ and 
	$B^{*T} \cdot 
	C^{T*}$ is 
	$O(m^{1/3-2\eps})$.
\end{claim}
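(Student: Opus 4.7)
The plan is to exploit the defining property of tiny vertices directly: a vertex in $L_2^T$ has combined degree in $A$ and $B$ at most $2m^{1/3-2\eps}$, so iterating over any subset of its neighbors is cheap. By the symmetry between the two data structures ($L_1 \leftrightarrow L_4$, $A \leftrightarrow C$, $L_2 \leftrightarrow L_3$), it suffices to analyze $A^{*T} \cdot B^{T*}$; the argument for $B^{*T} \cdot C^{T*}$ is identical after swapping layers.

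Recall that an entry $(A^{*T} \cdot B^{T*})[u,w]$ stores the number of wedges $u \to t \to w$ with $u \in L_1$, $w \in L_3$, and $t \in L_2^T$. So the data structure only needs to be touched when an incoming edge update is incident on a tiny vertex in $L_2$. There are only two kinds of such updates to handle:

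\begin{enumerate}
\item An update $(u,t)$ in $A$ with $t \in L_2^T$: for every neighbor $w$ of $t$ in $B$, increment (or decrement, in case of a deletion) the entry $(A^{*T} \cdot B^{T*})[u,w]$ by one. Since $t$ is tiny, $\deg_B(t) \leq \deg_A(t) + \deg_B(t) \leq 2m^{1/3-2\eps}$, so this costs $O(m^{1/3-2\eps})$ time.

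\item An update $(t,w)$ in $B$ with $t \in L_2^T$: for every neighbor $u$ of $t$ in $A$, update $(A^{*T} \cdot B^{T*})[u,w]$ by one. Again the cost is bounded by the combined degree of $t$, which is at most $2m^{1/3-2\eps}$.
\end{enumerate}

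If $t \notin L_2^T$ the data structure is not touched at all, and by \Cref{asm:classes} no vertex changes its class during the algorithm so we never need to retroactively populate or flush entries. Updates in $C$ and in the part of $A$ or $B$ incident on non-tiny $L_2$-vertices are irrelevant. Thus each update costs $O(m^{1/3-2\eps})$, which is the claimed bound.

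I do not expect any real obstacle here: the bound falls straight out of the definition of a tiny vertex, and the only thing one has to be slightly careful about is using the combined $A$-plus-$B$ degree (rather than the degree in a single matrix) to bound the number of neighbors iterated over in each case; both iterations are subsumed by this combined bound.
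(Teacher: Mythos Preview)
Your proposal is correct and follows essentially the same approach as the paper: reduce by symmetry to $A^{*T}\cdot B^{T*}$, then for an update in $A^{*T}$ iterate over the tiny vertex's neighbors in $B$ and for an update in $B^{T*}$ iterate over its neighbors in $A$, both bounded by $2m^{1/3-2\eps}$. The paper's proof is terser (it simply points back to the analogous argument for sparse vertices), while you spell out explicitly that the combined $A$-plus-$B$ degree bound covers both iterations and that non-tiny updates are irrelevant; but the underlying argument is the same.
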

\begin{proof}
	The proof is almost identical to the proof of \Cref{clm:S-DS}.
	Consider the data structure $A^{*T} \cdot B^{T*}$. The analysis for the other 
	case is identical.
	If there is an update $(u,v)$ in $A^{*T}$ then we go over the $2m^{1/3-2\eps}$ neighbors of $v$ 
	in $L_3$ and update the number of paths in the data structure. 
	If the update $(u,v)$ is in $B^{T*}$ then we iterate 
	over the $2m^{1/3-2\eps}$ neighbors of $u$ in $L_1$ and update the number of paths.
\end{proof}

We maintain some data structures just for high and medium degree vertices: 
\begin{equation}\label{eq:DS-HM-tiny}
A^{HT} \cdot B^{TT} \cdot C^{TH},
A^{MT} \cdot B^{TT} \cdot C^{TH}, A^{HT} \cdot B^{TT} \cdot C^{TM}.
\end{equation}

\begin{claim}\label{clm:HM-T-DS}
	The worst-case update time for the data structures $A^{HT} \cdot B^{TT} \cdot 
	C^{TH},$ 
	$A^{MT} \cdot B^{TT} \cdot C^{TH},$ $A^{HT} \cdot B^{TT} \cdot C^{TM}$ is 
	$O(m^{2/3-\eps})$.
\end{claim}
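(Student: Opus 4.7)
My plan is to handle each of the three data structures in \Cref{eq:DS-HM-tiny} by case-splitting on which of the matrices $A$, $B$, or $C$ contains the edge update, and then use the degree bounds on tiny, medium, and high vertices together with the already-maintained wedge data structures in \Cref{eq:DS-common-tiny}.

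First, consider $A^{HT} \cdot B^{TT} \cdot C^{TH}$. If the update is an edge $(u,v)$ in $A$ with $u \in L_1^H$ and $v \in L_2^T$, then the change in the number of $3$-paths between $u$ and any $x \in L_4^H$ equals the number of wedges from $v$ to $x$ through $B^{TT} \cdot C^{TH}$, which is a submatrix of the already-maintained $B^{*T} \cdot C^{T*}$ (\Cref{eq:DS-common-tiny}) and can therefore be read off in $O(1)$. Since $|L_4^H| \leq m^{1/3+\eps}$, iterating over all high vertices $x$ gives update time $O(m^{1/3+\eps})$. Updates in $C$ are symmetric with $|L_1^H| \leq m^{1/3+\eps}$. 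The interesting case is an update $(v,w)$ in $B$ with $v \in L_2^T$ and $w \in L_3^T$: here I would iterate over all pairs $(u, x)$ with $u \in N_A(v) \cap L_1^H$ and $x \in N_C(w) \cap L_4^H$, incrementing the entry at $(u, x)$ by one. Since $v$ and $w$ are both tiny, each has at most $2m^{1/3-2\eps}$ neighbors in the relevant layer, so this loop runs in $O(m^{2/3-4\eps}) \leq O(m^{2/3-\eps})$ time.

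For $A^{MT} \cdot B^{TT} \cdot C^{TH}$ and $A^{HT} \cdot B^{TT} \cdot C^{TM}$, the argument is essentially identical, except that when the update is in the matrix adjacent to the medium side I must iterate over at most $|L_j^M| \leq m^{2/3-\eps}$ medium vertices rather than high vertices; this is the binding bound and is still within budget. When the update is in the matrix adjacent to the high side, the iteration is again over at most $m^{1/3+\eps}$ high vertices. For updates in $B$, the same argument as before applies: both endpoints are tiny, so each has at most $2m^{1/3-2\eps}$ neighbors in the outer layers (regardless of whether we restrict to high or medium), and the nested loop costs $O(m^{2/3-4\eps})$.

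The main subtlety, which I would verify carefully, is that the iteration direction for updates in $A$ or $C$ must be chosen to match the class on the \emph{opposite} side of the path: for an update in $A$ we loop over the class restricting the right endpoint, and for an update in $C$ we loop over the class restricting the left endpoint. In every case, the loop size is at most $m^{2/3-\eps}$ (attained only when the opposite side is medium), and each loop iteration costs $O(1)$ thanks to the wedge data structures of \Cref{eq:DS-common-tiny}. Combining the three cases yields worst-case update time $O(m^{2/3-\eps})$, as claimed.
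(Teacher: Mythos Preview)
Your proposal is correct and takes essentially the same approach as the paper: use the already-maintained wedge data structures $A^{*T}\cdot B^{T*}$ and $B^{*T}\cdot C^{T*}$ from \Cref{eq:DS-common-tiny}, iterate over the (at most $m^{2/3-\eps}$) vertices of the opposite outer class for updates in $A$ or $C$, and iterate over the pairs of tiny-neighbor endpoints for updates in $B$. The paper's proof is slightly more terse (it treats $A^{HT}\cdot B^{TT}\cdot C^{TM}$ as the representative case) but the argument is identical.
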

\begin{proof}
	Consider the data structure $A^{HT} \cdot B^{TT} \cdot C^{TM}$. The analysis 
	for the others is 
	identical.
	First we note that we already have the data structures $A^{*T} \cdot B^{T*}, 
	B^{*T} \cdot C^{T*}$ 
	stored.
	This gives us access to $A^{HT} \cdot B^{TT}$ and $B^{TT} \cdot C^{TM}$.
	
	If there is an update $(u,v)$ in $A^{HT}$ then we go over the $m^{2/3-\eps}$ vertices in 
	$L_4^M$ and update the number of paths using $B^{TT} \cdot C^{TM}$.
	Similarly, if there is an update $(u,v)$ in $C^{TM}$ then we go over the $m^{1/3+\eps}$ vertices 
	in $L_1^H$ and update the number of paths using $A^{HT} \cdot B^{TT}$.
	
	Finally, if there is an update $(u,v)$ in $B^{TT}$ then we go over the $2m^{1/3-2\eps}$ neighbors 
	of $u$ in $L_1^H$ and $v$ in $L_4^M$ and update the number of paths in the data structure. 
	The total number of pairs is $4 m^{2/3-4\eps}$, so we are done.
\end{proof}

We maintain additional data structures for high degree vertices: 
\begin{equation}\label{eq:DS-H-tiny}
A^{HT} \cdot B^{TS} \cdot C^{SH}, 
A^{HS} \cdot B^{ST} \cdot C^{TH}.
\end{equation}

\begin{claim}\label{clm:HT-DS}
	The worst-case update time for these data structures for high degree vertices (\Cref{eq:DS-H-tiny}) is 
	$O(m^{2/3-\eps})$.
\end{claim}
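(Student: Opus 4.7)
The plan is to use symmetry: the two data structures $A^{HT} \cdot B^{TS} \cdot C^{SH}$ and $A^{HS} \cdot B^{ST} \cdot C^{TH}$ are handled identically after exchanging the roles of $L_2$ and $L_3$, so I will only bound the worst-case update time for the first. The key observation before starting is that all the intermediate products I need are already maintained online: by \Cref{clm:S-DS} we have $B^{*S} \cdot C^{S*}$ available, and by \Cref{clm:T-DS} we have $A^{*T} \cdot B^{T*}$ available. Restricting rows/columns to the classes $H$, $S$, $T$ is free, so any entry of $A^{HT} \cdot B^{TS}$ or $B^{TS} \cdot C^{SH}$ can be queried in $O(1)$ time.

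For an edge update $(u,v) \in A$ that actually affects $A^{HT}$, meaning $u \in L_1^H$ and $v \in L_2^T$, the contribution is the rank-one update that adds $(B^{TS} \cdot C^{SH})[v,y]$ to entry $(u,y)$ for every $y \in L_4^H$. Since $|L_4^H| \leq m^{1/3+\eps}$, the total work is $O(m^{1/3+\eps}) \leq O(m^{2/3-\eps})$. An update in $C$ affecting $C^{SH}$ is handled symmetrically using $A^{HT} \cdot B^{TS}$ in place of $B^{TS} \cdot C^{SH}$.

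The main obstacle is the remaining case: updates $(u,v) \in B$ with $u \in L_2^T$ and $v \in L_3^S$. The naive rank-one update would touch $|L_1^H| \cdot |L_4^H| = m^{2/3+2\eps}$ entries, which overshoots by an $m^{3\eps}$ factor. My plan is to exploit the tiny endpoint $u$: a new path $x \to u \to v \to y$ contributing to the data structure requires $x \in N_A(u)$, and since $u$ is tiny we have $|N_A(u)| \leq 2m^{1/3-2\eps}$. I iterate over each such candidate $x$, and for each one over the at most $m^{1/3+\eps}$ vertices $y \in L_4^H$, incrementing entry $(x,y)$ by one whenever $(v,y) \in C$, checked in $O(1)$. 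The total work is $O(m^{1/3-2\eps}) \cdot O(m^{1/3+\eps}) = O(m^{2/3-\eps})$, matching the target.

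Notice that this third case uses \emph{exactly} two facts: tiny vertices have degree at most $2m^{1/3-2\eps}$, and every layer contains at most $m^{1/3+\eps}$ high-degree vertices. No phase splitting is required here, in contrast to \Cref{subsec:DS-main}, precisely because the tiny endpoint already provides the necessary savings. The argument for $A^{HS} \cdot B^{ST} \cdot C^{TH}$ is obtained by relabelling, with the tiny vertex now sitting in $L_3$ for updates in $B$, which completes the proof of the claim.
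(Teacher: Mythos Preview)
Your proof is correct and essentially identical to the paper's: both arguments use the stored products $A^{*T}\cdot B^{T*}$ and $B^{*S}\cdot C^{S*}$ to handle updates in $A$ and $C$ by iterating over the at most $m^{1/3+\eps}$ high vertices on the far side, and for updates in $B^{TS}$ both exploit that the tiny endpoint has at most $2m^{1/3-2\eps}$ neighbors in $L_1$, paired against the $m^{1/3+\eps}$ vertices of $L_4^H$, giving $O(m^{2/3-\eps})$ pairs.
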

\begin{proof}
	Consider the data structure $A^{HT} \cdot B^{TS} \cdot C^{SH}$. The analysis 
	for $A^{HS} \cdot 
	B^{ST} \cdot C^{TH}$ is 
	identical.
	First we note that we already have data structures $A^{*T} \cdot B^{T*}$ and 
	$B^{*S} \cdot 
	C^{S*}$ 
stored.
	This gives us access to $A^{HT} \cdot B^{TS}$ and $B^{TS} \cdot C^{SH}$.
	
	If there is an update $(u,v)$ in $A^{HT}$ then we go over the $m^{1/3+\eps}$ vertices in 
	$L_4^H$ and update the number of paths using $B^{TS} \cdot C^{SH}$.
	Similarly, if there is an update $(u,v)$ in $C^{SH}$ then we go over the $m^{1/3+\eps}$ vertices 
	in $L_1^H$ and update the number of paths using $A^{HT} \cdot B^{TS}$.
	
	Finally, if there is an update $(u,v)$ in $B^{TS}$ then we go over the $2m^{1/3-2\eps}$ neighbors 
	of $u$ in $L_1^H$ and also the $m^{1/3+\eps}$ vertices in $L_4^H$ and update the number of 
	paths in the data structure. 
	The total number of pairs is $2 m^{2/3-\eps}$, so we are done.
\end{proof}

\Cref{clm:T-DS,clm:HM-T-DS,clm:HT-DS} together prove \Cref{lem:DS-T-update-time}.
We now show how we answer queries that contain tiny vertices.

\subsection{Queries for Tiny Vertices}\label{subsec:queries-tiny}
Here we show how to answer queries when at least one of the vertices in the query is a tiny vertex.
\begin{lemma}\label{lem:T-query-time}
	The worst-case query time when at least one of the vertices in the query is a tiny vertex is 
	$O(m^{2/3-\eps})$.
\end{lemma}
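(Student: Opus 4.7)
The plan is to handle queries with at least one tiny endpoint by a short case analysis that leverages only the common data structures of \Cref{subsec:DS-main} together with the small degree of the tiny endpoint. By the $L_1\leftrightarrow L_4$ symmetry of the construction (we maintain mirror data structures on both sides), I may assume without loss of generality that $u\in L_1$ is tiny, so $\deg(u)\leq 2m^{1/3-2\eps}$. I then split on the class of $v\in L_4$.

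\emph{Case 1: $v$ is tiny or low.} Then $\deg(v)\leq 2m^{1/3+\eps}$. I iterate over every pair $(w,x)$ with $w\in N(u)\cap L_2$ and $x\in N(v)\cap L_3$ and test in $O(1)$ whether $(w,x)\in B$ using the adjacency structure maintained for $B$. The number of tested pairs is $O(m^{1/3-2\eps}\cdot m^{1/3+\eps})=O(m^{2/3-\eps})$.

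\emph{Case 2: $v$ is medium or high.} I iterate over the $O(m^{1/3-2\eps})$ neighbors $w$ of $u$ in $L_2$, and for each $w$ count wedges $w-x-v$ by splitting on the class of $x\in L_3$. If $x\in S$ (which absorbs the $T$ case since $T\subseteq S$), the count is the $(w,v)$ entry of $B^{*S}\cdot C^{S*}$ from \Cref{eq:DS-com-main}, read in $O(1)$. If $x\in D$, I enumerate all at most $m^{1/3+\eps}$ dense vertices of $L_3$ and perform two $O(1)$ adjacency checks. The total cost is
\begin{equation*}
O(m^{1/3-2\eps})\cdot\paren{O(1)+O(m^{1/3+\eps})}=O(m^{2/3-\eps}).
\end{equation*}

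Correctness hinges on the decomposition being exhaustive and non-overlapping: every $3$-path $u-w-x-v$ has $w\in N(u)$, so sweeping $N(u)$ covers all possible classes of $w$ automatically; then the disjoint split $L_3=S\sqcup D$ (using the arbitrary tie-breaking for overlapping regions) cleanly partitions wedges by the class of $x$, so each path is counted exactly once. The specialized data structures of \Cref{subsec:DS-tiny} are \emph{not} needed here; they are reserved for computing contributions of tiny \emph{intermediate} vertices in queries whose endpoints are not tiny. The main obstacle is Case 2: no precomputed structure yields a dense-$L_3$ wedge count from an arbitrary $w$ to a fixed medium/high $v$, forcing an explicit enumeration over $L_3^D$. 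This is affordable only because tiny was defined with the tight threshold $m^{1/3-2\eps}$, giving $(1/3-2\eps)+(1/3+\eps)=2/3-\eps$; had $u$ merely been ``low'' with degree $m^{1/3+\eps}$, the product would blow up to $m^{2/3+2\eps}$, well over budget.
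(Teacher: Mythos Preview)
Your approach is essentially the paper's: the same two-case split on the class of the non-tiny endpoint, and in Case~2 the same combination of iterating over the tiny vertex's neighbors, doing an $O(1)$ wedge lookup for the ``light'' $L_3$ vertices, and brute-forcing over the $O(m^{1/3+\eps})$ dense $L_3$ vertices. The time accounting is identical.

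There is, however, a genuine gap in your Case~2. You assert that ``$T\subseteq S$'' and hence that the entry of $B^{*S}\cdot C^{S*}$ already covers intermediate vertices $x\in L_3^T$. In the paper's setup this is false: once the tiny class is introduced in \Cref{sec:tiny-vertices}, each vertex of $L_2,L_3$ is placed in exactly one of $T,S,D$ (you yourself invoke the arbitrary tie-breaking for overlapping ranges), and the data structure $B^{*S}\cdot C^{S*}$ of \Cref{eq:DS-com-main} counts only wedges through vertices classified as $S$, not those classified as $T$. The paper makes this explicit by summing the $B^{*S}\cdot C^{S*}$ and $B^{*T}\cdot C^{T*}$ lookups to cover $L_3^S$ and $L_3^T$ separately; \Cref{sec:vertices-moving-classes} likewise treats $S\leftrightarrow T$ as a real class transition requiring data-structure rebuilds. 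Consequently your decomposition $L_3=S\sqcup D$ omits all $3$-paths whose $L_3$ vertex is tiny, and your statement that ``the specialized data structures of \Cref{subsec:DS-tiny} are not needed here'' is incorrect. The fix is a one-line addition---also read the $(w,v)$ entry of $B^{*T}\cdot C^{T*}$ from \Cref{eq:DS-common-tiny}, another $O(1)$ per neighbor---after which your argument coincides with the paper's.
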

\begin{proof}
We divide the types of queries into two cases.

\textbf{Case 1: TT, TL.} 
If one vertex is in $T$ and the other is in $T$ or $L$ then we can iterate over both of their neighbors 
in time $2 m^{1/3-2\eps} \cdot 2 m^{1/3+\eps} \cdot O(1) = O(m^{2/3-\eps})$ and then calculate the number 
of paths.

\textbf{Case 2: TM, TH.}
If one vertex is in $T$ and the other is in $H,M$ then we can first calculate the number of paths that 
go 
through $L_3^D$.
This is done by iterating over the $2m^{1/3-2\eps}$ neighbors of $u$ and all $m^{1/3+\eps}$ 
vertices in $L_3^D$ and checking if they form a path.
The time taken is $O(m^{2/3-\eps})$.

We can iterate over the $2m^{1/3-2\eps}$ neighbors of $u$ and then use the data structure 
$B^{*S}\cdot C^{S*}$ (\Cref{eq:DS-com-main}) to get the number of paths going through 
$L_3^S$ and use 
$B^{*T} \cdot C^{T*}$ (\Cref{eq:DS-common-tiny}) to 
get the number of paths going through $L_3^T$.
Thus, we get all types of paths
\end{proof}

\subsection{Paths Through Tiny Vertices}\label{subsec:paths-tiny}
Now we need to show that we can easily calculate the number of paths through tiny vertices in 
$L_2$ and $L_3$ no matter what kind of query we get.
Once we do this we can ignore tiny vertices for the rest of the analysis.

\begin{lemma}\label{lem:paths-through-T}
	The worst-case time it takes to calculate the number of paths through tiny vertices in $L_2$ or 
	$L_3$ during an arbitrary query is $O(m^{2/3-\eps})$.
\end{lemma}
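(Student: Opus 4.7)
The plan is to fix a query $(u,v)\in L_1\times L_4$ with $u,v$ non-tiny (the tiny-endpoint cases are handled by \Cref{lem:T-query-time}) and count the $3$-paths $u\to w_2\to w_3\to v$ with at least one of $w_2\in L_2, w_3\in L_3$ tiny. I plan to partition such paths by the pair of classes $(\text{class}(w_2),\text{class}(w_3))\in\{TT,TS,TD,ST,DT\}$, which is exhaustive and pairwise disjoint over ``at least one tiny'', and bound each contribution by $O(m^{2/3-\eps})$ via a case analysis on the classes of $u,v$.

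The tools are the $2$-path data structures $A^{*T}\cdot B^{T*}$, $B^{*T}\cdot C^{T*}$ (\Cref{eq:DS-common-tiny}) and $A^{*S}\cdot B^{S*}$, $B^{*S}\cdot C^{S*}$ (\Cref{eq:DS-com-main}), together with the five explicit $3$-path data structures in \Cref{eq:DS-HM-tiny,eq:DS-H-tiny}. The quantitative facts I invoke repeatedly are that $|L_2^D|,|L_3^D|\leq m^{1/3+\eps}$, every tiny vertex has total degree at most $2m^{1/3-2\eps}$, every medium vertex has degree at most $2m^{2/3-\eps}$, and every low vertex has degree at most $2m^{1/3+\eps}$. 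The high-level pattern for each type is: use a direct $O(1)$ lookup in one of the stored $3$-path structures when available; otherwise iterate over either the $O(m^{1/3+\eps})$ dense vertices in the middle layer (for the $TD$ and $DT$ types) or the at-most-$2m^{2/3-\eps}$ neighbors of whichever of $u,v$ is non-high, combining each with an $O(1)$ lookup in a $2$-path structure.

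The concrete case split is: (i) $HH$ queries read off $TT,TS,ST$ in $O(1)$ from the three high-vertex $3$-path structures and handle $TD,DT$ by iterating $L_3^D,L_2^D$; (ii) $HM$ and $HL$ queries (and their symmetric counterparts) iterate over the non-high endpoint's $L_3$-neighbors using $A^{*T}\cdot B^{T*}$ or $A^{*S}\cdot B^{S*}$, supplemented by the stored $A^{HT}B^{TT}C^{TM}$ for the $TT$ subcase; (iii) $MM,ML,LL$ queries iterate over the smaller-degree side similarly; (iv) $MH$ and $LH$ queries iterate over the $L_2$-neighbors of the non-high endpoint. The main obstacle is the $TS$ subcase of $MH$ queries (and symmetrically $ST$ of $HM$), where naive nested iteration over $L_2^T\cap N_A(u)$ followed by the $L_3$-neighbors of each such $w_2$ would cost $\deg_A(u)\cdot 2m^{1/3-2\eps}\leq m^{1-3\eps}$, which exceeds the budget. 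The key idea that overcomes this is to reuse the $2$-path data structure $B^{*S}\cdot C^{S*}$ at entries $(w_2,v)$: since this structure is maintained for \emph{all} $L_2$ sources (not only non-tiny ones), the count of $S_3$-wedges from a tiny $w_2$ to $v$ is an $O(1)$ lookup, so the total work reduces to a single iteration over $L_2^T\cap N_A(u)$ of size at most $2m^{2/3-\eps}$. Summing the contributions of all five types in every case yields $O(m^{2/3-\eps})$ total work, completing the proof of \Cref{lem:paths-through-T}.
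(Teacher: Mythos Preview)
Your proposal is correct and follows essentially the same approach as the paper: case-split on the classes of $(u,v)$, handle $TD$/$DT$ by iterating over the $O(m^{1/3+\eps})$ dense vertices in the relevant middle layer, and cover $TT$/$TS$/$ST$ either via the stored $3$-path tables (\Cref{eq:DS-HM-tiny,eq:DS-H-tiny}) in the $HH$/$HM$ cases or by iterating over the non-high endpoint's neighbors with $O(1)$ lookups in the $2$-path tables. The only cosmetic differences are that for $HL$ the paper dispatches the $*T$ count by a direct neighbors-of-neighbors enumeration from the low endpoint (cost $2m^{1/3+\eps}\cdot 2m^{1/3-2\eps}=O(m^{2/3-\eps})$) rather than via data-structure lookups, and the ``main obstacle'' you flag (the $TS$ subcase of $MH$) is resolved in the paper exactly as you propose, via $(B^{*S}\cdot C^{S*})[w_2,v]$, without being singled out.
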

\begin{proof}
We first describe queries where there are no high degree vertices.
Let the query be $u,v$. 
We have the data structures $A^{*T} \cdot B^{T*}$ and $B^{*T} \cdot C^{T*}$ 
(\Cref{eq:DS-common-tiny}) 
stored.
We iterate over neighbors of $u$ in $O(m^{2/3-\eps})$ time and use $B^{*T} \cdot 
C^{T*}$ to get 
the 
paths that go to $v$ via $L_3^T$.
We can do the same side from the side of $v$ and get the paths through $L_2^T$.
Note that we will include the $TT$ paths just once.

We can now see why the above procedure does not work for high degree vertices.
High degree vertices have too many neighbors to iterate over.
So we handle these cases separately.

\textbf{Case 1: HH.}
We can easily get the number of $TT$ paths using $A^{HT} \cdot B^{TT} \cdot 
C^{TH}$ (\Cref{eq:DS-HM-tiny}) and $TS$ 
paths using 
$A^{HT} \cdot B^{TS} \cdot C^{SH}$ (\Cref{eq:DS-H-tiny}). 
Finally, for the number of $TD$ paths, we iterate over the $m^{1/3+\eps}$ neighbors of $v$ in 
$L_3^D$ and then use the data structure $A^{*T} \cdot B^{T*}$ (\Cref{eq:DS-common-tiny}) to get the 
number 
of paths to $u$ 
via 
$L_2^T$.
The other paths can be found symmetrically.

\textbf{Case 2: HM.}
We can easily get the number of $TT$ paths using $A^{HT} \cdot B^{TT} \cdot 
C^{TM}$ (\Cref{eq:DS-HM-tiny}).
For the number of $TD$ paths (and $DT$ paths) we iterate over the $m^{1/3+\eps}$ neighbors of 
$v$ in $L_3^D$ and then use the data structure $A^{*T} \cdot B^{T*}$ (\Cref{eq:DS-common-tiny}) to get 
the number of paths to $u$ 
via $L_2^T$.
To get the number of $ST$ and $TS$ paths we iterate over the $2m^{2/3-\eps}$ neighbors of $v$ 
and then use $A^{*S} \cdot B^{S*}$ (\Cref{eq:DS-com-main}) to get the number of $ST$ paths and use 
$A^{*T} \cdot B^{T*}$ (\Cref{eq:DS-common-tiny}) to get the number of $TS$ paths.

\textbf{Case 3: HL.}
To find the number of $*T$ paths we can iterate over the neighbors of neighbors of $v$ and 
check if 
they have an edge to $u$. This takes $2m^{1/3+\eps} \cdot 2m^{1/3-2\eps} \cdot O(1) =O(m^{2/3-\eps})$ 
time.
For the number of $T*$ paths we can iterate over the neighbors of $v$ and use 
$A^{*T} \cdot 
B^{T*}$ (\Cref{eq:DS-common-tiny}).
\end{proof}

This means that we can figure out the number of paths through tiny vertices, and we can also 
answer 
queries when they involve tiny vertices, so we can completely ignore tiny vertices in the rest of the 
analysis.
Thus, we can assume that each layer has at most $m^{2/3+2\eps}$ vertices 
because all 
the remaining vertices have degree at least $m^{1/3-2\eps}$.

\begin{proof}[Proof of \Cref{lem:thm-under-asm2}]
	We are given a layered graph with $n$ vertices in each layer, and we are still under 
	\Cref{asm:classes}.
	We know from \Cref{lem:DS-T-update-time} that we can maintain the data structures for tiny 
	vertices in $O(m^{2/3-\eps})$ time.
	We know from \Cref{lem:T-query-time} that answering queries which include tiny vertices can be 
	answered in time $O(m^{2/3-\eps})$.
	Finally, \Cref{lem:paths-through-T} tells us that we can calculate the number of paths through tiny 
	vertices in time $O(m^{2/3-\eps})$.
	
	Thus, the only thing left to do is to answer queries with no tiny vertices and for them only calculate 
	paths that go through no tiny vertices.
	This means that we can assume that each layer has at most $m^{2/3+2\eps}$ vertices because all 
	the remaining vertices have degree at least $m^{1/3-2\eps}$.
	
	Therefore, we can assume \Cref{asm:vertices} without loss of generality.
	Also, we have already proved \Cref{lem:layered-under-asm}.
	This should be enough to prove \Cref{lem:thm-under-asm2}.	
	
	We need a final piece here to clarify some details.
	The common data structures we have in \Cref{eq:DS-com-main} and \Cref{eq:DS-L-main} currently hold 
	only when $*$ is either $H,M$ or $L$.
	We will make a case that they hold even when $*$ is $T$.
	This is done by observing that the proof of their update times never used the size of the layers.
	
	Consider the data structure $A^{*S} \cdot B^{S*}$ in \Cref{eq:DS-com-main} (the proof for the other one is 
	symmetric).
	When an edge update in $A$ or $B$ is incident on a vertex $v \in L_2^S$ we go over the $m^{2/3-\eps}$ 
	neighbors of $v$ and update the data structure.
	Thus, it does not matter what the size of the layers is because the number of neighbors of $v$ is small.
	
	Consider the data structure $\anew^{*D} \cdot \bo^{DD}$ in \Cref{eq:DS-L-main} (the proof for the other 
	one 
	is symmetric).
	When an edge update in $\anew$ (there are no updates in $\bo$) is incident on a vertex $v \in L_2^D$ 
	we go over the $m^{1/3+\eps}$ vertices in $L_3^D$ and update the data structure.
	Thus, it does not matter what the size of the layers is because the number of vertices in $L_3^D$ is small.
	Therefore, the common data structures work even when $*$ is $T$.
\end{proof}


\section{Getting Rid of \texorpdfstring{\Cref{asm:classes}}{Assumption 2}: Vertices Moving Between Classes}
\label{sec:vertices-moving-classes}

In this section, we get rid of \Cref{asm:classes}, namely, handle the case when the vertices may 
move between classes due to the change in their degrees.

\subsection{Setup}
Note that until now we have assumed that the classes of vertices are fixed. 
But if the degree of vertices change then their classes will change too. 
So in this section, we describe what happens when the degrees of the vertices changes by a lot, 
and 
they have to change classes.


When the degree of a vertex is changing it goes from being in one class to the overlapping region of 
two classes and then finally into one class again.
When the vertex moves into to the overlapping region of two classes we treat it like it still belongs to 
the old class.
During this period we will create data structures of both classes for this vertex.
After undergoing edge insertions and deletions if the vertex ends up in its old class then we do not 
have to do anything.
However, if it ends up in a new class then we have to update the data structures for both the old 
and 
new classes.
Note that we have to address \emph{two types} of vertex transitions, one where the vertex is the endpoint 
of a path called \textbf{\To} transitions and the other where the vertex is an intermediate vertex in a path 
called \textbf{\Tt} transitions.

\paragraph{\To Transition.}
In this type of transition, the transitioning vertex is the endpoint of a path for some data structure.
For instance consider the data structure $A^{MT} \cdot B^{TT} \cdot C^{TH}$ (\Cref{eq:DS-HM-tiny}) for 
medium vertices that we do not have for low vertices.
Consider a vertex $u \in L_1$ with degree $m^{1/3+\eps}$ which after an edge insertion, has just entered the 
transition range between $L$ and $M$.
Computing this data structure for $u$ means that for all vertices $v_i \in L_4^H$ we need to 
compute the number of paths between $u$ and $v_i$ that go through $L_2^T$ and $L_3^T$.
This is the first type of transition and what we have to do for it is add a new row to the data 
structures for medium vertices like $A^{MT} \cdot B^{TT} \cdot C^{TH}$ corresponding to the 
answers we compute for $u$.
We also need to remove all the rows corresponding to $u$ in the data structures for low vertices.

\paragraph{\Tt Transition.}
In this type of transition, the transitioning vertex is an intermediate vertex in a path for some data structure.
Consider the same data structure $A^{MT} \cdot B^{TT} \cdot C^{TH}$ (\Cref{eq:DS-HM-tiny}) and consider 
a vertex $y \in L_2$ with degree $m^{1/3-2\eps}$ which after an edge insertion, has just 
entered the transition range between $T$ and $S$.
Computing this data structure for $y$ means that for all pairs of vertices $u_i \in L_1^M$ and $v_j \in 
L_4^H$ we need to compute the number of paths between $u_i$ and $v_j$ that go through $y$ 
in $L_2$ and through some vertex in $L_3^T$.
This is the second type of transition and what we have to do for it is update the values in data 
structures for medium vertices like $A^{MT} \cdot B^{TT} \cdot C^{TH}$ corresponding to the 
answers we compute for vertex $y$.
So this can be thought of computing a difference matrix of dimension $\card{L_1^M} \times 
\card{L_4^H}$ which when added to the matrix $A^{MT} \cdot B^{TT} \cdot C^{TH}$ includes the 
contribution of vertex $y \in L_2$.
We also have to update the values in data structures for low vertices.
The major difference between the two types of transitions is that for the first type of transition (\To), we 
need to add or delete a new row of values corresponding to the transitioning vertex $u$, and we 
do not have to change any existing values.
In the second type of transition (\Tt), we do not have to add or delete any rows but update the existing 
values in the data structures for all pairs that are affected by the transitioning vertex $y$.

Note that the computation to create data structures for a vertex $v$ will only be done when we 
get 
an edge incident to this vertex $v$ so by the time its degree doubles we have the new data 
structure 
ready for $v$.
This ensures that during an edge update we are doing this computation for at most two vertices 
and thus does not affect the worst-case update time (if we are not careful about this then the 
guarantee we 
get will be in terms of amortized update time).
Finally, when the vertex $v$ enters the transition range, we partition the edges into the ones that 
came 
before and after that moment and deal with them separately.
Let edge $e$ be the update after which vertex $v$ entered the transition range.
We represent by $\ebef^v$, the set of edge updates that arrive before edge update $e$ (including $e$) and 
by $\eaft^v$ set of edge updates that arrive after edge update $e$.
We will drop the superscript of $v$ when the vertex is clear from context.
We also denote by $\abef$ and $\aaft$ the set of edges in $A$ before and after the edge update $e$, so we 
have 
$A=\abef+\aaft$.
Note that $\aaft$ can have a negative value for some edge because that edge was inserted in $\abef$ and 
deleted in $\aaft$ but this is not going to be a problem.
We define similar notation for $B$ and $C$.
The main task is to compute the data structure for the edges that come before the transition
moment ($\ebef$) in $O(x \cdot m^{2/3-\eps})$ time, where $x$ is the degree of the vertex when it enters 
the overlapping region.

Consider any data structure we need to compute for the new class where edges of $\eaft$ are being used (in 
$A,B$ or $C$) for instance consider $\aaft^{MT} \cdot \bbef^{TT} \cdot \cbef^{TH}$.
Notice that their initial count is $0$ for all pairs because initially $\eaft$ is empty.
Since this is a data structure of the new class, it can be maintained in worst-case update time 
$O(m^{2/3-\eps})$. In this example we know how to maintain $A^{MT} \cdot B^{TT} \cdot C^{TH}$ in 
worst-case update time $O(m^{2/3-\eps})$ where $L_1$ has all medium degree vertices.

The task for $\aaft^{MT} \cdot \bbef^{TT} \cdot \cbef^{TH}$ is slightly different.
We need to maintain this data structure for a subset of vertices in that class i.e.\ the vertices in $L_1$ 
transitioning from 
$L$ to $M$ (note that this is at most the number of vertices in $M$).
Also, we need to maintain this data structure on a subset of the edges.
In this example we need to use edges of $A \cap \eaft$, $B \cap \ebef$, and $C \cap \ebef$. 
But all of these edges are known so the same algorithm that we use for class $M$ for the data structure 
$A^{MT} \cdot B^{TT} \cdot C^{TH}$ will work for the for class of vertices transitioning from $L$ to $M$ for 
the data structure $\aaft^{MT} \cdot \bbef^{TT} \cdot \cbef^{TH}$.

Thus, any data structure we need to maintain with at least one of $A,B,C$ being in $\eaft$ can be maintained 
in worst-case update time $O(m^{2/3-\eps})$.
We now only worry about computing the data structures for transitioning vertices where all the edges are in 
$\ebef$.
We will show that all such data structures can be updated in worst-case time $O(x \cdot m^{2/3-\eps})$ 
where $x$ is the 
degree of the vertex when it enters 
the overlapping region.
\begin{lemma}\label{lem:vertex-transition}
	For a vertex $v$ that is transitioning from one class to the other, the data structures for 
	$v$ can be computed in $O(x \cdot m^{2/3-\eps})$ time,
	where $x$ is the degree of $v$ when it enters the overlapping region.
\end{lemma}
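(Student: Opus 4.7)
The plan is to charge the work of preparing $v$'s new-class data structures to the $\Omega(x)$ further edge updates to $v$ that must happen before $v$ is forced out of the overlap region and into the new class. Each such update already provides $O(m^{2/3-\eps})$ of worst-case time, so spreading the precomputation over them yields the claimed $O(x\cdot m^{2/3-\eps})$ total budget, precisely at the moment the new data structure is first needed. As noted just before the lemma statement, any data structure involving an $\eaft$-edge in at least one of $A,B,C$ can be maintained from scratch starting from the transition moment using the ordinary update procedure of the new class, since its initial value is zero; so it suffices to handle the data structures in which all three factors are restricted to $\ebef$.

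\paragraph{Type-1 transitions (endpoint).} Suppose $v \in L_1$ (the $L_4$ case is symmetric) and let $\mathcal{T}$ be a three-hop data structure of the new class of the form $A^{v\cdot}\cdot B^{\cdot\cdot}\cdot C^{\cdot X}$; we must fill in a single row indexed by $v$. The common product $B\cdot C$ is already maintained via $B^{*S}\cdot C^{S*}$, $B^{*T}\cdot C^{T*}$, and the dense data structures, so for each of $v$'s at most $x$ neighbors $w \in L_2$ in $\abef$, we can read the row of $B\cdot C$ restricted to endpoints in $L_4^X$ and add it into the new row. Each row has at most $|L_4^X| \le m^{1/3+\eps}$ entries to touch, and intermediate class restrictions we do not store directly (for example, requiring both $L_2$- and $L_3$-endpoints to be sparse) are obtained by $O(1)$-term inclusion--exclusion against the already stored restricted products. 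Two-hop products $A^{v\cdot}\cdot B^{\cdot\cdot}$ are handled analogously by summing $x$ rows of $B$. The total cost per data structure is $O(x\cdot m^{1/3+\eps}) \le O(x\cdot m^{2/3-\eps})$.

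\paragraph{Type-2 transitions (intermediate).} Suppose $v \in L_2$ transitions (the $L_3$ case is symmetric). The data structure $\mathcal{T}$ already indexes all pairs in $L_1^Y \times L_4^X$ for the relevant classes $Y,X$, and we must add the rank-one correction corresponding to paths through $v$. Any such path factors as $u \to v \to z \to w$, so we first compute the vector indexed by $w \in L_4^X$ counting, for fixed $v$, the wedges from $v$ to $w$ through the required class of $L_3$; this is done in $O(x\cdot m^{1/3+\eps})$ time by iterating over the at most $x$ neighbors of $v$ in $\bbef$ and summing the appropriate columns of $\cbef$ (with class restrictions again assembled by $O(1)$ inclusion--exclusion from stored products). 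We then outer-product this vector with the characteristic vector of $N_{\abef}(v) \cap L_1^Y$, which has at most $x$ nonzeros, and write the resulting rank-one update into $\mathcal{T}$, touching at most $x\cdot m^{1/3+\eps}$ entries. The overall cost is again within $O(x\cdot m^{2/3-\eps})$.

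\paragraph{Main obstacle.} The real challenge is bookkeeping, not any single computation: there are constantly many data structures (per pair of old and new classes, per choice of old or new phase on each of $A,B,C$, per endpoint/intermediate role for the transitioning vertex), and one must verify for every one of them that the needed class restriction can be read off from what is already stored via a constant-size inclusion--exclusion, and that the appropriate Type-1 or Type-2 template applies. Once these case checks are in place, the two templates above execute mechanically and sum to $O(x\cdot m^{2/3-\eps})$ time in total, proving the lemma.
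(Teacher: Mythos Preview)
Your approach is essentially the paper's approach, abstracted into two templates in place of the paper's explicit case-by-case verification; both rest on the same idea of filling one row (Type-1) or applying a rank-one correction (Type-2) using the already-stored two-hop products.

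There is one concrete slip in your templates: you assert $|L_4^X|\le m^{1/3+\eps}$ and derive the intermediate bound $O(x\cdot m^{1/3+\eps})$. This fails when $X=M$ (as in the data structure $A^{HT}\cdot B^{TT}\cdot C^{TM}$ from \Cref{eq:DS-HM-tiny}), where $|L_4^M|$ is $m^{2/3-\eps}$, not $m^{1/3+\eps}$. The fix is simply to replace $m^{1/3+\eps}$ by $m^{2/3-\eps}$ throughout; since every endpoint class appearing in the stored three-hop data structures has size at most $m^{2/3-\eps}$, your final conclusion $O(x\cdot m^{2/3-\eps})$ is unaffected. You also do not spell out the Type-2 two-hop case (e.g.\ $A^{*S}\cdot B^{S*}$ when $v\in L_2$ moves $D\to S$): there the correction is the outer product $N_A(v)\otimes N_B(v)$ of cost $O(x^2)$, and one must check $x\le 2m^{2/3-\eps}$ so that $x^2=O(x\cdot m^{2/3-\eps})$; this holds for every class boundary but should be said.

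Your ``Main obstacle'' paragraph is accurate: the substance of the proof is precisely the per-data-structure verification that the required restricted two-hop product (e.g.\ $B^{TS}\cdot C^{SH}$, $B^{ST}\cdot C^{TH}$, and the phase-restricted variants of $B^{*S}\cdot C^{S*}$ noted in the proof of \Cref{clm:H-DS}) is already stored or obtainable by restriction. The paper carries out exactly these checks explicitly; your sketch defers them, but once the $m^{1/3+\eps}$ versus $m^{2/3-\eps}$ issue is corrected, the template plus those checks yields the same proof.
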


Note that we also need to prove this for the algorithm where $A$ and $C$ are fixed 
(\Cref{sec:alg-simple}) because we use the simple algorithm as a subroutine in our main algorithm.
But this is trivial because there are no edge updates in $A$ and $C$ so vertices in $L_1$ and $L_4$ do not 
change classes.
Note that this algorithm only has classes for the vertices $L_1$ and $L_4$ and the classes for $L_2$ and 
$L_3$ are local to each chunk and are used to partition the edges of a chunk.
So we do not have any vertex transitions for vertices in $L_2$ and $L_3$.
Thus, we can trivially drop \Cref{asm:classes} for the algorithm where $A$ and $C$ are fixed.
Therefore, \Cref{lem:vertex-transition} is the last piece we need to prove 
\Cref{thm:alg-layer}.
\begin{proof}[Proof of \Cref{thm:alg-layer}]
	\Cref{lem:vertex-transition} allows us to assume wlog that vertices remain in 
	the same classes because if they change classes then we can easily update the data structures.
	Thus, we can now use \Cref{lem:thm-under-asm2}, which we have already proved, to conclude the proof.
\end{proof}

\subsection{Proof of \texorpdfstring{\Cref{lem:vertex-transition}}{Lemma}}

In this subsection, we only deal with matrices $\abef,\bbef$ and $\cbef$. 
We do not use $\aaft,\baft$ and $\caft$ at all.
Thus, to simplify notation we drop the subscript ``bef'' and represent $\abef,\bbef$ and $\cbef$ by $A,B$ 
and $C$ respectively.

We start with all different cases for \emph{\To} transitions where the transitioning vertex is an endpoint of a 
path in the 
data structure.

\noindent
\textbf{Case 1: $L$ and $M$} \\
The amount of time we have in this case is $m^{1/3+\eps} \cdot \, O(m^{2/3-\eps}) = O(m)$.
We will only focus on vertices in $L_1$ transitioning from $L$ to $M$ because there are no data structures 
just for $L$.
Note that these arguments will be similar for $L_4$.

\begin{claim}
	The amount of time it takes to compute the data structures $A^{MD} \cdot B^{DD}$ and $B^{DD}\cdot  
	C^{DM}$ (\Cref{eq:DS-HM-main}) for a vertex $v \in L_1$ transitioning from $L$ to $M$ is $O(m)$.
\end{claim}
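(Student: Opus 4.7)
The budget given by \Cref{lem:vertex-transition} is $O(x \cdot m^{2/3-\eps})$ with $x=m^{1/3+\eps}$, the threshold at which $v$ enters the overlapping region between $L$ and $M$, so the total budget is $O(m)$. Throughout the computation $v$'s degree in $A=\abef$ remains at most $2m^{1/3+\eps}$.

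The plan for $A^{MD}\cdot B^{DD}$ is to append the row indexed by $v$ to the matrix. This row stores, for each $w\in L_3^D$, the number of $2$-paths $v\to y\to w$ with $y\in L_2^D$. The key size bounds are: $v$ has at most $2m^{1/3+\eps}$ neighbors in $L_2$ (hence in $L_2^D$), and $|L_3^D|\leq m^{1/3+\eps}$ because every dense vertex has combined degree at least $m^{2/3-\eps}$ while the total number of edges is at most $m$. The procedure is to iterate over the neighbors $y$ of $v$ in $L_2^D$ and, for each such $y$, iterate over every $w\in L_3^D$, performing an $O(1)$ adjacency lookup to check whether $(y,w)\in B$; whenever the lookup succeeds we increment the $(v,w)$ entry. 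This costs $O(m^{1/3+\eps}\cdot m^{1/3+\eps})=O(m^{2/3+2\eps})$ time, which is $o(m)$ and fits comfortably in the budget.

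The second data structure $B^{DD}\cdot C^{DM}$ is indexed by $L_2^D\times L_4^M$ and is therefore not affected by a transition of $v\in L_1$; no update is needed. (If the symmetric case of a vertex in $L_4$ transitioning from $L$ to $M$ occurs, then the analogous computation -- iterating over its $L_3^D$-neighbors, and for each such neighbor iterating over $L_2^D$ -- handles that data structure within the same $O(m^{2/3+2\eps})$ bound.) We should also remove the row indexed by $v$ from any ``low''-specific data structure in which $v$ appeared; since these are the common and low data structures from \Cref{eq:DS-com-main,eq:DS-L-main}, they are not the subject of this claim and are handled separately.

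The main subtlety is avoiding the naive ``iterate over neighbors of $v$, then over neighbors of $y$'' traversal, which would be too expensive because $y\in L_2^D$ can have up to $n$ neighbors. We sidestep this by exploiting that $L_3^D$ is small and that adjacency queries are $O(1)$, so we walk the product set $N(v)\cap L_2^D \times L_3^D$ directly; this is the only nontrivial observation and it suffices to meet the $O(m)$ bound with substantial slack.
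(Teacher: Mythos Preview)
Your argument is correct and matches the paper's approach: iterate over the at most $m^{1/3+\eps}$ neighbors of $v$ in $L_2^D$, and for each of them enumerate over $L_3^D$ (which has at most $m^{1/3+\eps}$ vertices), giving $O(m^{2/3+2\eps})\leq O(m)$. Your explicit remark that one must not iterate over \emph{all} neighbors of a dense $y\in L_2^D$ (and instead range over $L_3^D$ with $O(1)$ adjacency checks) makes the implementation detail clearer than the paper's terser ``go over all their neighbors in $L_3^D$'', but the underlying computation and bound are the same.
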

\begin{proof}
	Consider the data structure $A^{MD} \cdot B^{DD}$ (the proof for the other one is similar).
	We can iterate over all neighbors of $v$ in $L_2^D$ and then go over all their neighbors in 
	$L_3^D$ to update the number of paths.
	This takes time $m^{1/3+\eps} \cdot m^{1/3+\eps} \cdot O(1)$ that is within our budget.
\end{proof}

\begin{claim}
	The amount of time it takes to compute the data structures $A^{MT} \cdot B^{TT} \cdot C^{TH} $ and 
	$A^{HT} \cdot B^{TT} \cdot C^{TM}$ (\Cref{eq:DS-HM-tiny}) for a vertex $v$ transitioning from $L$ to 
	$M$ is $O(m)$.
\end{claim}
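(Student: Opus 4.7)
The plan is to compute the new row of each data structure by directly enumerating the relevant 3-paths, exploiting the fact that both intermediate vertices are tiny so the nested enumeration stays cheap. By symmetry between the two endpoints, it suffices to handle $v \in L_1$ transitioning from $L$ to $M$ (affecting $A^{MT} \cdot B^{TT} \cdot C^{TH}$); the case of $v \in L_4$ and $A^{HT} \cdot B^{TT} \cdot C^{TM}$ is completely analogous. Since $v$ has just entered the overlapping region between $L$ and $M$, its current degree in $A$ is $x = O(m^{1/3+\eps})$, so the available budget for producing the new row of $v$ is $x \cdot m^{2/3-\eps} = O(m)$.

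The enumeration would proceed in three nested loops. First, I would iterate over the neighbors $y$ of $v$ in $L_2^T$; there are at most $2m^{1/3+\eps}$ of them since this is an upper bound on $\deg_A(v)$ at the transition moment. Next, for each such $y$, I would iterate over its neighbors $z$ in $L_3^T$; because $y$ is tiny its combined degree in $A$ and $B$ is at most $2m^{1/3-2\eps}$, which bounds this loop. Finally, for each such $z$, I would iterate over its neighbors $w$ in $L_4^H$; because $z$ is also tiny, its combined degree in $B$ and $C$ is at most $2m^{1/3-2\eps}$. Every enumerated triple $(y,z,w)$ gives a unique 3-path $v \to y \to z \to w$ of the required form, and I would increment the entry at $(v,w)$ in the new row.

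The total running time is
\[
O(m^{1/3+\eps}) \cdot O(m^{1/3-2\eps}) \cdot O(m^{1/3-2\eps}) \;=\; O(m^{1-3\eps}),
\]
which is well within the $O(m)$ budget. I do not expect a real obstacle here: the essential observation is that making both middle vertices tiny shrinks each inner loop to $m^{1/3-2\eps}$, so a naive triple-nested enumeration already suffices and no fast matrix multiplication is needed for this case. The only bookkeeping to be careful about is the standard one mentioned earlier in the paper, namely that this computation is triggered as soon as $v$ enters the overlapping region so that it is completed before $v$'s degree doubles (keeping the bound worst-case rather than amortized), and the symmetric construction is carried out for $v \in L_4$ and $A^{HT} \cdot B^{TT} \cdot C^{TM}$.
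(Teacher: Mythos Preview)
Your proposal is correct and essentially identical to the paper's proof: both enumerate the 3-paths via a triple-nested loop over the neighbors of $v$ in $L_2^T$, then their neighbors in $L_3^T$, then their neighbors in $L_4^H$, using the tiny degree bound $O(m^{1/3-2\eps})$ at each inner step to arrive at the same $O(m^{1-3\eps})$ total.
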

\begin{proof}
	Consider the data structure $A^{MT} \cdot B^{TT} \cdot C^{TH}$ for a vertex $v\in L_1$.
	We can iterate over all neighbors of $v$ in $L_2^T$ and then go over all their neighbors in 
	$L_3^T$ and finally over their neighbors in $L_4^H$ to update the number of paths.
	This takes time $m^{1/3+\eps} \cdot m^{2(1/3-2\eps)} \cdot O(1)=O(m^{1-3\eps})$ that is within our 
	budget.
\end{proof}

\noindent
\textbf{Case 2: $M$ and $H$} \\
The amount of time we in this case have is $m^{2/3-\eps} \cdot O(m^{2/3-\eps})=O(m^{4/3-2\eps})$.
We focus on a vertex $v\in L_1$ transitioning from $M$ to $H$ or $H$ to $M$.

We first consider data structures $A^{HT} \cdot B^{TS} \cdot C^{SH},$ 
$A^{HS} \cdot B^{ST} \cdot C^{TH}$ (\Cref{eq:DS-H-tiny}). 
\begin{claim}
	The amount of time it takes to compute the data structures $A^{HT} \cdot B^{TS} \cdot C^{SH}$ and 
	$A^{HS} \cdot B^{ST} \cdot C^{TH}$ for a vertex 
	$v\in L_1$ transitioning from $M$ to $H$ is $O(m^{4/3-2\eps})$.
\end{claim}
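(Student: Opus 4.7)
The plan is to compute each new row (corresponding to the transitioning vertex $v$) of the two data structures by combining the neighbors of $v$ with wedge counts that are already maintained by the common data structures. For $A^{HT} \cdot B^{TS} \cdot C^{SH}$, observe that its entry at $(v,w)$ for $w \in L_4^H$ decomposes as
\[
(A^{HT} \cdot B^{TS} \cdot C^{SH})[v,w] \;=\; \sum_{t \in N(v) \cap L_2^T}\,(B^{*S} \cdot C^{S*})[t, w],
\]
where the inner wedge values are already available as entries of the common data structure $B^{*S} \cdot C^{S*}$ from \Cref{eq:DS-com-main}. This is because $(B^{*S} \cdot C^{S*})[t,w]$ counts exactly the wedges from $t$ to $w$ whose middle vertex lies in $L_3^S$, which is the required inner portion of the $3$-path.

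My algorithm iterates over the at most $2m^{2/3-\eps}$ neighbors of $v$ in $L_2^T$ (since $v$ has degree at most $2m^{2/3-\eps}$ throughout the overlap region), and for each such tiny $t$ iterates over the at most $m^{1/3+\eps}$ vertices $w \in L_4^H$, looking up the stored wedge count $(B^{*S} \cdot C^{S*})[t,w]$ and accumulating it into the new entry at $(v,w)$. The total running time is $O(m^{2/3-\eps} \cdot m^{1/3+\eps}) = O(m)$, which is comfortably within the budget $O(m^{4/3-2\eps})$.

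The symmetric data structure $A^{HS} \cdot B^{ST} \cdot C^{TH}$ is handled analogously: its $(v,w)$-entry equals $\sum_{s \in N(v) \cap L_2^S}(B^{*T} \cdot C^{T*})[s, w]$, and $B^{*T} \cdot C^{T*}$ from \Cref{eq:DS-common-tiny} directly provides the wedge counts through $L_3^T$. Iterating over the at most $2m^{2/3-\eps}$ neighbors of $v$ in $L_2^S$ and then over the at most $m^{1/3+\eps}$ vertices in $L_4^H$ gives running time $O(m)$ as well.

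There is no real technical obstacle here; the key observation is that the two-hop data structures $B^{*S} \cdot C^{S*}$ and $B^{*T} \cdot C^{T*}$ already encode the entire middle portion of each $3$-path that the new data structures need, so extending them to the transitioning endpoint $v$ only costs one pass over its immediate neighborhood on one side and over $L_4^H$ on the other. The only thing to verify carefully is that during the transition these common data structures are indeed current (which is guaranteed by \Cref{clm:S-DS} and \Cref{clm:T-DS}, whose maintenance is oblivious to how vertices are classified in $L_1$), so we can freely query them as we build the new row for $v$.
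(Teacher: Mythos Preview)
Your proof is correct and, for $A^{HS}\cdot B^{ST}\cdot C^{TH}$, coincides exactly with the paper's argument. For $A^{HT}\cdot B^{TS}\cdot C^{SH}$ you take a slightly different (and cleaner) route: the paper exploits that vertices in $L_2^T$ have degree at most $2m^{1/3-2\eps}$ and does a triple iteration over neighbors of $v$ in $L_2^T$, then their neighbors in $L_3^S$, then all of $L_4^H$, for a total of $m^{2/3-\eps}\cdot m^{1/3-2\eps}\cdot m^{1/3+\eps}=m^{4/3-2\eps}$; you instead collapse the middle hop via the stored wedges $B^{*S}\cdot C^{S*}$ and get $O(m)$, which is both uniform with your treatment of the second data structure and strictly faster.
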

\begin{proof}
	Consider the data structure $A^{HT} \cdot B^{TS} \cdot C^{SH}$ (the proof for the other is similar).
	We can go over all $m^{2/3-\eps}$ neighbors of $v$ in $L_2^T$ and all their $m^{1/3-2\eps}$ 
	neighbors in $L_3^S$ and finally all $m^{1/3+\eps}$ vertices in $L_4^H$ and update the 
number of 
	paths.
	The time taken is $m^{2/3-\eps} \cdot m^{1/3-2\eps} \cdot m^{1/3+\eps} = m^{4/3-2\eps}$.
	
	Consider the data structure $A^{HS} \cdot B^{ST} \cdot C^{TH}$.
	We can go over all $m^{2/3-\eps}$ neighbors $w_i$ of $v$ in $L_2^S$, but we cannot iterate 
over 
	their neighbors.
	So we also iterate over all $m^{1/3+\eps}$ vertices $u_i$ in $L_4^H$ and then use the data 
structure 
	$B^{*T} \cdot C^{T*}$ (\Cref{eq:DS-common-tiny}) to get the number of paths between $w_i$ and $u_j$ 
	through $L_3^T$. 
	This takes time $m^{2/3-\eps} \cdot m^{1/3+\eps} \leq m^{4/3-2\eps}$.
\end{proof}

We now consider data structures $A^{HT} \cdot B^{TT} \cdot C^{TH},
A^{MT} \cdot B^{TT} \cdot C^{TH}, A^{HT} \cdot B^{TT} \cdot C^{TM}$ (\Cref{eq:DS-HM-tiny}). 
\begin{claim}
	The amount of time it takes to compute the data structures $A^{HT} \cdot B^{TT} \cdot C^{TH},$ $A^{MT} 
	\cdot B^{TT} \cdot C^{TH},$ $A^{HT} \cdot B^{TT} \cdot C^{TM}$ for a vertex 
	$v\in L_1$ transitioning between $M$ and $H$ is $O(m^{4/3-2\eps})$.
\end{claim}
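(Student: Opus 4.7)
The plan is to compute each of the three data structures for the transitioning vertex $v \in L_1$ by iterating over $v$'s neighborhood in $L_2^T$ and combining with the precomputed common data structure $B^{*T} \cdot C^{T*}$ from \Cref{eq:DS-common-tiny}. Since $v$ is transitioning between $M$ and $H$, its degree at the transition moment is at most $2 m^{2/3-\eps}$, so it has at most $2m^{2/3-\eps}$ neighbors in $L_2^T$. The key observation is that a naive three-layer iteration (over $L_2^T$, then $L_3^T$, then $L_4^H$ or $L_4^M$) would cost $m^{2/3-\eps} \cdot m^{1/3-2\eps} \cdot m^{2/3-\eps} = m^{5/3-5\eps}$ in the worst case, which exceeds the budget $O(m^{4/3-2\eps})$. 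To save a factor, I will skip the middle layer by using the already-stored counts of $2$-paths through $L_3^T$.

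For $A^{HT} \cdot B^{TT} \cdot C^{TH}$ (the case of $A^{MT} \cdot B^{TT} \cdot C^{TH}$ is handled identically, just with $v$ moving in or out of the $M$ class): for each neighbor $w$ of $v$ in $L_2^T$ and each vertex $u \in L_4^H$, add $(B^{*T} \cdot C^{T*})[w,u]$ to the entry of the data structure at position $(v,u)$. Correctness follows because $(B^{*T} \cdot C^{T*})[w,u]$ exactly counts $2$-paths from $w$ to $u$ through $L_3^T$, and summing over $w \in N(v) \cap L_2^T$ yields the count of $3$-paths from $v$ to $u$ through $L_2^T$ and $L_3^T$. The runtime is $O(m^{2/3-\eps}) \cdot O(m^{1/3+\eps}) = O(m) \leq O(m^{4/3-2\eps})$, which is within budget.

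For $A^{HT} \cdot B^{TT} \cdot C^{TM}$, apply the same strategy: iterate over $w \in N(v) \cap L_2^T$ and over $u \in L_4^M$, and add $(B^{*T} \cdot C^{T*})[w,u]$ to the entry at $(v,u)$. Here $|L_4^M| \leq m^{2/3-\eps}$, so the runtime is $O(m^{2/3-\eps}) \cdot O(m^{2/3-\eps}) = O(m^{4/3-2\eps})$, matching our budget exactly.

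The main obstacle is precisely the third case, where both $|L_2^T \cap N(v)|$ and $|L_4^M|$ can be as large as $m^{2/3-\eps}$; iterating over any third layer would break the budget, so it is essential to exploit the stored $2$-path counts $B^{*T} \cdot C^{T*}$ rather than recomputing them. Once that observation is in place, each of the three data structures can be updated within $O(m^{4/3-2\eps})$ time; spreading this cost over the $\Omega(m^{2/3-\eps})$ edge updates that arrive while $v$'s degree doubles inside the overlapping region yields the desired $O(m^{2/3-\eps})$ worst-case update time, establishing the claim.
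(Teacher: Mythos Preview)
Your proof is correct but differs from the paper's approach, and your motivation for choosing your route rests on a miscalculation.

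The paper simply does the naive three-hop iteration that you dismissed as too expensive: from $v$, iterate over its neighbors in $L_2^T$, then over \emph{their} neighbors in $L_3^T$, then over \emph{their} neighbors in $L_4$. The point you missed is that the third factor is not $|L_4^M|\leq m^{2/3-\eps}$ but rather the degree of a vertex in $L_3^T$, which is at most $2m^{1/3-2\eps}$. So the naive iteration costs $O(m^{2/3-\eps})\cdot O(m^{1/3-2\eps})\cdot O(m^{1/3-2\eps})=O(m^{4/3-5\eps})$, comfortably inside the budget (and in fact tighter than your bound).

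Your alternative---iterating over $N(v)\cap L_2^T$ and over $L_4^H$ or $L_4^M$, and looking up the stored counts in $B^{*T}\cdot C^{T*}$---is nonetheless a perfectly valid argument and yields the claimed $O(m^{4/3-2\eps})$. It is the same trick the paper uses elsewhere (e.g.\ in Claim~\ref{clm:H-main-trans-HM}), so it is natural that you reached for it. The only thing to correct is the sentence asserting that the direct three-layer walk would cost $m^{5/3-5\eps}$; that overcounts the last hop.
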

\begin{proof}
	Consider the data structure $A^{MT} \cdot B^{TT} \cdot C^{TH}$ for a vertex $v\in L_1$ (the analysis for 
	the others is the same).
	We can iterate over all neighbors of $v$ in $L_2^T$ and then go over all their neighbors in 
	$L_3^T$ and finally over their neighbors in $L_4^H$ to update the number of paths.
	This takes time at most $2m^{2/3-\eps} \cdot m^{2(1/3-2\eps)} \cdot O(1)= O(m^{4/3-5\eps})$ that is 
	within our budget.
\end{proof}

We finally consider data structures of the form $\anew^{HS}\cdot \bnew^{SS}\cdot \cnew^{SH}$ 
(\Cref{eq:DS-H-main}).
\begin{claim}\label{clm:H-main-trans-HM}
	The amount of time it takes to compute the data structures of the form $\ao^{HS} \bo^{SS} \cnew^{SH}$ in 
	\Cref{eq:DS-H-main} for a vertex 
	$v\in L_1$ transitioning from $M$ to $H$ is $O(m^{4/3-2\eps})$.
\end{claim}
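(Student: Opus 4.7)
The plan is to compute the $v$-row of the data structure by choosing the associativity of the three-matrix product carefully, so that at each intermediate step either a new-phase matrix (bounded by $2m^{1-\delta}$ edges) or the sparse-degree bound ($\deg(w) \leq 2m^{2/3-\eps}$ for $w \in L_2^S \cup L_3^S$) can be charged against the $O(m^{4/3-2\eps})$ budget. Concretely, let $r := (\ao^{HS})_{v, \cdot} \cdot \bo^{SS}$ denote the intermediate row vector indexed by $L_3^S$. I would first compute $r$, and then compute $r \cdot \cnew^{SH}$, which is exactly the desired $v$-row of $\ao^{HS} \bo^{SS} \cnew^{SH}$ (indexed by $L_4^H$).

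For the first step, I would iterate over each $w \in L_2^S$ with $(v, w) \in \ao$; since $v$ has just entered the overlapping region between $M$ and $H$, its degree is at most $2m^{2/3-\eps}$, so there are at most that many such $w$. For each $w$, I would then iterate over all $\bo$-neighbors $z$ of $w$ in $L_3^S$ and increment $r[z]$. Since $w$ is sparse, $\deg(w) \leq 2m^{2/3-\eps}$, which caps the inner loop, so Stage 1 takes $O(m^{2/3-\eps}) \cdot O(m^{2/3-\eps}) = O(m^{4/3-2\eps})$, exactly matching the budget.

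For the second step, I would iterate over every edge $(z, u)$ of $\cnew^{SH}$ and add $r[z]$ to the result at $u$. Because $\cnew$ lies within a single phase, $|\cnew^{SH}| \leq |\cnew| \leq 2m^{1-\delta}$, so this stage costs only $O(m^{1-\delta}) \leq O(m^{4/3-2\eps})$ time, well within budget. Summing the two stages gives the claimed $O(m^{4/3-2\eps})$ total, and the resulting vector is stored as the new $v$-row of the data structure.

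The main subtlety, and the reason for insisting on this particular order of associativity, is that first forming the intermediate matrix $\bo^{SS} \cdot \cnew^{SH}$ and then left-multiplying by $(\ao^{HS})_{v, \cdot}$ would require, for every edge $(z, u)$ of $\cnew^{SH}$, iterating over all $\bo$-neighbors of $z$ in $L_2^S$ (up to $2m^{2/3-\eps}$ of them), yielding $\Theta(m^{5/3-\delta-\eps})$ work, which is outside the budget unless $\delta \geq 1/3 + \eps$ (far stronger than what \Cref{eq:phase-constraint} and \Cref{eq:iteration-pairs-constraint} give us). The left-to-right order instead absorbs the sparse-degree factor against $v$'s own (small) degree in Stage 1 and pays only the small size of $\cnew$ in Stage 2. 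The same principle --- pair a small new-phase factor with a sparse-degree factor via the right associativity --- will generalize to the other mixed old/new data structures of \Cref{eq:DS-H-main}.
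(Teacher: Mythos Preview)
Your proof is correct, but it takes a genuinely different route from the paper's.

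The paper's argument exploits the already-maintained data structure $B^{*S}\cdot C^{S*}$ (from \Cref{eq:DS-com-main}, stored for all old/new phase combinations as noted in \Cref{clm:H-DS}): it iterates over the at most $m^{2/3-\eps}$ neighbors $w_i$ of $v$ in $L_2^S$ and over all $m^{1/3+\eps}$ vertices $u_j$ in $L_4^H$, and for each pair looks up the count of $SS$-paths from $w_i$ to $u_j$. This gives a bound of $m^{2/3-\eps}\cdot m^{1/3+\eps}=m\le m^{4/3-2\eps}$ and is uniform across all six data structures in \Cref{eq:DS-H-main}, with no case analysis on which factor is new. Your approach instead computes everything from scratch: Stage~1 iterates over neighbors-of-neighbors (which the paper explicitly avoids, writing ``we cannot iterate over their neighbors''), saturating the $O(m^{4/3-2\eps})$ budget exactly; Stage~2 then does a cheap edge-scan of $C^{SH}$. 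Your argument is more elementary in that it does not rely on any auxiliary maintained structure, whereas the paper's is tighter ($O(m)$ versus $O(m^{4/3-2\eps})$) and more uniform.

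One small remark on your closing paragraph: your stated principle for generalizing (``pair a small new-phase factor with a sparse-degree factor'') is not quite the reason Stage~2 stays in budget for the two data structures in \Cref{eq:DS-H-main} whose $C$-factor is $\co$ rather than $\cnew$ (namely $\anew^{HS}\bnew^{SS}\co^{SH}$ and $\anew^{HS}\bo^{SS}\co^{SH}$). There $|\co^{SH}|$ can be as large as $m$, not $2m^{1-\delta}$. Your Stage~2 still works, but simply because $m\le m^{4/3-2\eps}$ (using $\eps\le 1/6$), not because of a new-phase bound. So the generalization is fine; only the stated rationale needs a small adjustment.
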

\begin{proof}
	Consider the data structure $A^{HS}\cdot B^{SS}\cdot C^{SH}$.
	We do not store this data structure, we only store this when $A,B$ and $C$ are restricted to some phases.
	But for the purpose of analysis we show how to compute the data structure for $A^{HS}\cdot B^{SS}\cdot 
	C^{SH}$ and the analysis for all the data structures in \Cref{eq:DS-H-main} is similar.
	
	We can go over all $m^{2/3-\eps}$ neighbors $w_i$ of $v$ in $L_2^S$, but we cannot iterate 
	over 
	their neighbors.
	So we also iterate over all $m^{1/3+\eps}$ vertices $u_i$ in $L_4^H$ and then use the data 
	structure 
	$B^{*S} \cdot C^{S*}$ (\Cref{eq:DS-com-main}) to get the number of paths between $w_i$ and $u_j$ 
	through $L_3^S$. 
	This takes time $m^{2/3-\eps} \cdot m^{1/3+\eps} \cdot O(1) \leq O(m^{4/3-2\eps})$.
\end{proof}

Note that we do not have a case for transition between $T$ and $L$ because both of them do not have any 
data structures in $L_1$ (or $L_4$).

\noindent
\textbf{Case 3: $S$ and $D$} \\
The amount of time we in this case have is $m^{2/3-\eps} \cdot O(m^{2/3-\eps})=O(m^{4/3-2\eps})$.
We focus on a vertex $v\in L_3$ transitioning between $S$ and $D$.
This will be similar for vertices in $L_2$.

We only have to consider the data structures of the form $A^{MD}B^{DD}$ in \Cref{eq:DS-HM-main} and 
\Cref{eq:DS-L-main}.
\begin{claim}
	The amount of time it takes to compute the data structures of the form $A^{MD}B^{DD}$ in 
	\Cref{eq:DS-HM-main} and 
	\Cref{eq:DS-L-main} for a vertex $v$ transitioning from $S$ to $D$ is $O(m^{4/3-2\eps})$.
\end{claim}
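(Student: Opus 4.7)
The plan is to update each data structure listed in \Cref{eq:DS-HM-main} and \Cref{eq:DS-L-main} separately to incorporate the new contribution from $v \in L_3$ now that $v$ has become dense. Since $v$ enters the overlap region precisely when its combined $B,C$-degree first reaches $x = m^{2/3-\eps}$, the budget per data structure is $O(x \cdot m^{2/3-\eps}) = O(m^{4/3-2\eps})$. I would split the six data structures into two groups: those in which $L_3^D$ is one of the two indexing axes of the stored matrix, so a new column for $v$ must be populated (namely $A^{HD}\cdot B^{DD}$, $A^{MD}\cdot B^{DD}$, and $\anew^{*D}\cdot \bo^{DD}$), and those in which $L_3^D$ is the middle axis that is summed out, so already-existing entries must be incremented along paths through $v$ (namely $B^{DD}\cdot C^{DH}$, $B^{DD}\cdot C^{DM}$, and $\bo^{DD}\cdot \cnew^{D*}$).

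For the ``middle'' data structures I would enumerate the pairs $(w, u)$ with $w$ a $B$-neighbor of $v$ in $L_2^D$ and $u$ a $C$-neighbor of $v$ in $L_4^H$, $L_4^M$, or $L_4$ (via the appropriate phase), incrementing the corresponding entry once for each pair. Since dense vertices have degree at least $m^{2/3-\eps}$, there are at most $m^{1/3+\eps}$ of them in $L_2$, so $v$ has at most $m^{1/3+\eps}$ neighbors in $L_2^D$; together with $v$'s at most $m^{2/3-\eps}$ $C$-neighbors this gives at most $m$ operations, well within budget.

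For the ``new column'' data structures with the first indexing layer restricted to $H$ or $M$, I would iterate over the at most $m^{1/3+\eps}$ $B$-neighbors $w$ of $v$ in $L_2^D$ and, for each such $w$, over the $O(m^{2/3-\eps})$ vertices of the restricted layer, checking the edge $(u, w)$ via the adjacency representation; the most expensive case $A^{MD}\cdot B^{DD}$ uses at most $m^{1/3+\eps}\cdot m^{2/3-\eps} = m$ lookups. The delicate case is $\anew^{LD}\cdot \bo^{DD}$, because $L_1^L$ can contain up to $m^{2/3+2\eps}$ vertices and so cannot be enumerated directly. Here the plan is to reverse the order of iteration: enumerate the at most $m^{1-\delta}$ edges of $\anew$, and for each $(u, w)\in \anew$ with $w$ dense and $(w, v)\in \bo$, increment the entry $(u, v)$. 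Since $\delta \geq 3\eps$ by \Cref{eq:iteration-pairs-constraint}, the total time $O(m^{1-\delta})$ comfortably fits into $O(m^{4/3-2\eps})$.

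The main obstacle is precisely this last case, as $L_1^L$ is asymptotically larger than any indexing layer we have encountered elsewhere in the transition analysis; the key idea is that $\anew$ itself is sparse by the very definition of a phase, so iterating over its edges rather than over vertices of $L_1^L$ is affordable. The analysis for a vertex $v \in L_2$ transitioning from $S$ to $D$ is entirely symmetric, exchanging the roles of $A$ and $C$ and of $L_2$ and $L_3$.
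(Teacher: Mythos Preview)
Your argument is correct. Two remarks on how it compares to the paper.

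First, you are doing more work than this particular claim asks for. The claim sits in the \To\ part of the case analysis, so it only concerns the situation where $v$ is an \emph{endpoint} of the stored two-hop path; the ``middle'' case (your $B^{DD}\cdot C^{DH}$, $B^{DD}\cdot C^{DM}$, $\bo^{DD}\cdot\cnew^{D*}$ when $v\in L_3$) is stated and proved separately as the \Tt\ analogue. Your proof simply merges both into one pass for $v\in L_3$, which is fine but not required here.

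Second, your special treatment of $\anew^{*D}\cdot\bo^{DD}$ via iterating over the $O(m^{1-\delta})$ edges of $\anew$ is a nice, robust alternative, but it is not actually needed to stay within budget. The paper's uniform approach---iterate over the first indexing layer and over all of $L_2^D$---already works even in this case: with the non-tiny vertices one has $|L_1|\le m^{2/3+2\eps}$, so the cost is at most $m^{2/3+2\eps}\cdot m^{1/3+\eps}=m^{1+3\eps}$, and $1+3\eps\le 4/3-2\eps$ holds for the parameter ranges used (indeed $5\eps\le 1/3$). So the ``delicate case'' you isolate is not a genuine obstacle, though your edge-iteration trick gives a cleaner bound and avoids relying on \Cref{asm:vertices}.
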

\begin{proof}
	Consider the data structure $A^{MD}B^{DD}$ (the analysis for the others is similar) and let $v\in L_3$.
	We can iterate over the $m^{2/3-\eps}$ neighbors in $L_1$ and all $m^{1/3+\eps}$ vertices in $L_2^D$ 
	and create the data structure in time $m^{2/3-\eps} \cdot m^{1/3+\eps} \cdot O(1) \leq 
	O(m^{4/3-2\eps})$.
\end{proof}


We now discuss all different cases for \emph{\Tt} transitions where the transitioning vertex is an intermediate 
vertex of a path in the data structure.

%

\noindent
\textbf{Case 1: $S$ and $D$} \\
Consider vertices transitioning between $S$ and $D$.
The amount of time we have is $m^{2/3-\eps} \cdot O(m^{2/3-\eps}) = O(m^{4/3-2\eps})$.

We first consider the data structures of the form $A^{MD} \cdot B^{DD}$ in \Cref{eq:DS-HM-main} and 
\Cref{eq:DS-L-main}.
\begin{claim}
	The amount of time it takes to create the data structures in \Cref{eq:DS-HM-main} and 
	\Cref{eq:DS-L-main} for a vertex $v$ transitioning from $S$ to $D$ is $O(m) \leq O(m^{4/3-2\eps})$.
\end{claim}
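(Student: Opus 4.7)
The plan is to enumerate the data structures in \Cref{eq:DS-HM-main} and \Cref{eq:DS-L-main} and, for each one, identify whether the transitioning vertex $v$ appears as an \emph{endpoint} (a row or column index of the product) or as an \emph{intermediate} vertex summed over in the matrix product. Only the latter case is relevant here, since we are dealing with \Tt transitions; endpoint contributions are handled in the previous (\To) claim. By symmetry I consider $v \in L_2$ transitioning from $S$ to $D$, so $v$'s degree upon entering the overlap region is $x = m^{2/3-\eps}$. The data structures in which $v \in L_2^D$ is an intermediate vertex are $A^{HD} \cdot B^{DD}$, $A^{MD} \cdot B^{DD}$, and $\anew^{*D} \cdot \bo^{DD}$; the remaining ones ($B^{DD} \cdot C^{D?}$ and $\bo^{DD} \cdot \cnew^{D*}$) have $v$ on their boundary and were handled in the prior \To claim.

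Next I observe the key counting bound: the number of dense vertices in any layer is at most $m^{1/3+\eps}$, since each such vertex has degree at least $m^{2/3-\eps}$ and the total number of edges is $m$. This lets me cap the iteration over $L_3^D$-neighbors of $v$ at $m^{1/3+\eps}$, regardless of $v$'s actual degree. For each of the three data structures above, the contribution of $v$ to entry $(u,w)$ equals $[(u,v) \in A] \cdot [(v,w) \in B]$, so I update the matrix by iterating over the two relevant neighbor sets of $v$ and incrementing one entry per pair.

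I would then bound the three cases in turn: (i) $A^{MD} \cdot B^{DD}$ costs at most $\deg_A(v) \cdot m^{1/3+\eps} \leq x \cdot m^{1/3+\eps} = m$; (ii) $A^{HD} \cdot B^{DD}$ costs at most $|L_1^H| \cdot m^{1/3+\eps} \leq m^{1/3+\eps} \cdot m^{1/3+\eps} = m^{2/3+2\eps} \leq m$ (using \Cref{eq:simple-constraint}); and (iii) $\anew^{*D} \cdot \bo^{DD}$ costs at most $\deg_{\anew}(v) \cdot m^{1/3+\eps} \leq x \cdot m^{1/3+\eps} = m$. Summing over the $O(1)$ relevant data structures and appealing to the symmetric analysis for $v \in L_3$ transitioning from $S$ to $D$, the total work is $O(m)$, which is well within the budget $x \cdot m^{2/3-\eps} = O(m^{4/3-2\eps})$ from \Cref{lem:vertex-transition}.

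I do not expect a genuine obstacle here; the only subtle point is bookkeeping, namely correctly reading off for each data structure whether the transitioning vertex acts as an endpoint or as an intermediate vertex of the encoded paths, and then using the tight bound $|L_i^D| \leq m^{1/3+\eps}$ instead of naively iterating over a larger layer.
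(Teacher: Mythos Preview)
Your proposal is correct and follows essentially the same approach as the paper. The paper handles only $A^{MD}\cdot B^{DD}$ explicitly for $v\in L_2$, iterating over $v$'s (at most $m^{2/3-\eps}$) neighbors in $L_1$ and all $m^{1/3+\eps}$ vertices in $L_3^D$, and defers the rest with ``the analysis for the others is similar''; you spell out the three intermediate-vertex data structures separately and for $A^{HD}\cdot B^{DD}$ exploit the sharper bound $|L_1^H|\leq m^{1/3+\eps}$ to get $m^{2/3+2\eps}$ instead of $m$, but this is a minor refinement and the overall argument is the same.
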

\begin{proof}
	Consider the data structure $A^{MD}B^{DD}$ (the analysis for the others is similar) and let $v\in L_2$.
	We can iterate over the $m^{2/3-\eps}$ neighbors in $L_1$ and all $m^{1/3+\eps}$ vertices in $L_3^D$ 
	and create the data structure in time $m^{2/3-\eps} \cdot m^{1/3+\eps} \cdot O(1) =O(m) \leq 
	O(m^{4/3-2\eps})$.
\end{proof}

We now consider the data structures $A^{*S} \cdot B^{S*}$ and $B^{*S} \cdot C^{S*}$ 
(\Cref{eq:DS-com-main}).
\begin{claim}
	The amount of time it takes to create the data structure $A^{*S} \cdot B^{S*}$ and $B^{*S} \cdot C^{S*}$ 
	for a vertex $v$ transitioning from $D$ to $S$ is $O(m^{4/3-2\eps})$.
\end{claim}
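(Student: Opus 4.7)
The plan is to handle each of the two data structures separately based on which layer contains $v$, using a direct rank-one ``outer product'' update; the cases $v \in L_2$ (which affects $A^{*S} \cdot B^{S*}$) and $v \in L_3$ (which affects $B^{*S} \cdot C^{S*}$) are entirely symmetric, so I describe only the former and handle the latter by the obvious relabelling. The first observation is that only one of the two data structures is actually impacted by $v$'s transition: when $v \in L_2$, the ``sparse middle vertex'' position in the paths counted by $B^{*S} \cdot C^{S*}$ lies in $L_3$, so $v$'s change of class does not alter that matrix at all.

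For the affected data structure $A^{*S} \cdot B^{S*}$, the contribution of $v$ is a rank-one correction: while $v$ was dense it was excluded from the sum, and once $v$ becomes sparse it contributes exactly $1$ to every entry $(u, w)$ with $u \in N_A(v)$ and $w \in N_B(v)$. So my plan is simply to enumerate this set of pairs and add $1$ (respectively subtract $1$, for the reverse $S \to D$ direction) to the stored matrix entry. The total work is exactly $|N_A(v)| \cdot |N_B(v)|$, and this is the only computation required.

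The only quantitative step is bounding this product. Because $v$ has just entered the overlapping region between $S$ and $D$, its combined degree in $A$ and $B$ satisfies $|N_A(v)| + |N_B(v)| \leq 2m^{2/3-\eps}$, and therefore by AM-GM
\[
|N_A(v)| \cdot |N_B(v)| \leq \Paren{\frac{|N_A(v)| + |N_B(v)|}{2}}^2 \leq m^{4/3-2\eps},
\]
which matches the allotted budget of $O(x \cdot m^{2/3-\eps}) = O(m^{4/3-2\eps})$ provided by \Cref{lem:vertex-transition}.

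I do not anticipate any real obstacle here: the main conceptual point is just to correctly identify, per layer of $v$, which single data structure is affected by the transition, and to recognize that the correction is a rank-one outer product of the two neighborhoods of $v$. Once that is in place, the time bound follows from a one-line AM-GM estimate applied to the defining degree constraint of the overlap region.
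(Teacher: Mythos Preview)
Your proposal is correct and takes essentially the same approach as the paper: both identify that adding $v$'s contribution to $A^{*S}\cdot B^{S*}$ amounts to iterating over all pairs of neighbors of $v$ in $L_1$ and $L_3$, and both bound the number of such pairs using the fact that $v$ lies in the $S$--$D$ overlap region so its combined degree is at most $2m^{2/3-\eps}$. The paper simply bounds each side by $2m^{2/3-\eps}$ rather than applying AM-GM, and does not spell out your (correct) observation that only one of the two data structures is affected depending on whether $v\in L_2$ or $v\in L_3$, but these are cosmetic differences.
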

\begin{proof}
	Consider the data structure $A^{*S} \cdot B^{S*}$ (the analysis for the other is similar) and let $v\in L_2$.	
	We go over all pairs of $2 m^{2/3-\eps}$ neighbors on both sides ($L_1$ and $L_3$) and create the data 
	structure.
\end{proof}

We next consider the data structures of the form $\ao^{HS} \bo^{SS} \cnew^{SH}$ (\Cref{eq:DS-H-main}).
\begin{claim}\label{clm:H-main-trans-SD}
	The amount of time it takes to create the data structures of the form $\ao^{HS} \bo^{SS} \cnew^{SH}$ in 
	\Cref{eq:DS-H-main} 
	for a vertex $v\in L_2$ transitioning from $D$ to $S$ is $O(m^{4/3-2\eps})$.
\end{claim}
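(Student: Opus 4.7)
The plan is to mirror the argument of \Cref{clm:H-main-trans-HM}, but now with $v$ playing the role of the middle sparse vertex in layer $L_2$ rather than a high-degree endpoint. Since $v\in L_2$ has just entered the overlap region between $D$ and $S$, its degree is $O(m^{2/3-\eps})$, and the $v$-contribution to the entry $(u_i,u_k)\in L_1^H\times L_4^H$ of a data structure of the form $\ao^{HS}\cdot \bo^{SS}\cdot \cnew^{SH}$ equals
\[
  \mathbf{1}[(u_i,v)\in \ao]\;\cdot\;\Bigl|\{\,x\in L_3^S : (v,x)\in \bo,\ (x,u_k)\in \cnew\,\}\Bigr|.
\]
The second factor is precisely the $(v,u_k)$-entry of a phase-restricted copy of the common data structure $B^{*S}\cdot C^{S*}$ from \Cref{eq:DS-com-main}; as noted in the proof of \Cref{clm:H-DS}, such phase-restricted copies are maintained alongside the main data structures by the same argument as \Cref{clm:S-DS}, so every entry we need is available in $O(1)$ time.

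Given this, I would iterate over the $O(m^{1/3+\eps})$ vertices $u_i\in L_1^H$, check in $O(1)$ via adjacency whether $(u_i,v)\in \ao$, and for each such $u_i$ iterate over all $O(m^{1/3+\eps})$ vertices $u_k\in L_4^H$, adding the looked-up sparse-wedge count to the entry $(u_i,u_k)$ of the data structure. The total cost is
\[
  O(m^{1/3+\eps})\cdot O(m^{1/3+\eps})\cdot O(1) \;=\; O(m^{2/3+2\eps}) \;\leq\; O(m^{4/3-2\eps}),
\]
where the last inequality holds because $\eps\leq 1/6$ (\Cref{eq:simple-constraint}). This is comfortably within the allotted budget of $m^{2/3-\eps}\cdot O(m^{2/3-\eps})$ for a transition between $S$ and $D$.

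The same scheme handles all six data structures in \Cref{eq:DS-H-main}; the only change is the choice of phase restrictions read from the maintained copies of $A^{*S}\cdot B^{S*}$ and $B^{*S}\cdot C^{S*}$. The main thing to be careful about is not a calculation but a bookkeeping point: we must make sure that the phase-restricted copies of the common wedge data structures are kept current so that their entries at $(v,\cdot)$ are correct at the moment $v$ enters the overlapping region. This is precisely what is guaranteed by the maintenance procedure in the proof of \Cref{clm:H-DS}, so no new idea beyond that of \Cref{clm:H-main-trans-HM} is required.
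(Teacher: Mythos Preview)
Your proof is correct and in fact cleaner than the paper's. The key difference is in how the inner sum over $x\in L_3^S$ is handled. The paper iterates explicitly over the $O(m^{2/3-\eps})$ neighbours of $v$ in $L_3^S$ and, for each such neighbour, over all pairs of high-degree vertices in $L_1^H\times L_4^H$; to keep this within budget it invokes the observation that every data structure in \Cref{eq:DS-H-main} has either $A$ or $C$ in the new phase, so the number of relevant pairs is only $O(m^{2/3-\eps})$ (cf.\ \Cref{eq:iteration-pairs-constraint} and the proof of \Cref{clm:H-DS}), yielding the stated $O(m^{4/3-2\eps})$. You instead absorb the entire sum over $x$ into a single lookup in the phase-restricted copy of $B^{*S}\cdot C^{S*}$, and then iterate naively over all $O(m^{1/3+\eps})^2=O(m^{2/3+2\eps})$ pairs. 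This avoids the new-phase argument altogether and gives a strictly smaller bound.

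One small caveat on bookkeeping: because the transition work is spread over subsequent updates, the values $(B^{*S}\cdot C^{S*})[v,u_k]$ you look up must be the values at the moment $v$ enters the overlap region (i.e.\ with respect to $\bbef,\cbef$), not the live values. You allude to this, but the fix deserves one explicit sentence: at the transition moment, snapshot the $O(m^{1/3+\eps})$ entries $(B^{*S}\cdot C^{S*})[v,u_k]$ for $u_k\in L_4^H$ (for each of the $O(1)$ phase combinations), which fits in a single $O(m^{2/3-\eps})$ update. The paper glosses over the analogous point, so this is not a defect unique to your argument.
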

\begin{proof}
	Consider the data structure $A^{HS}\cdot B^{SS}\cdot C^{SH}$.
	We do not store this data structure, we only store this when $A,B$ and $C$ are restricted to some phases.
	But for the purpose of analysis we show how to compute the data structure for $A^{HS}\cdot B^{SS}\cdot 
	C^{SH}$ and the analysis for all the data structures in \Cref{eq:DS-H-main} is similar.
	We did the same thing in the proof of \Cref{clm:H-main-trans-HM}.
	
	We go over all $2 m^{2/3-\eps}$ neighbors of $v$ in $L_3^S$ and also all pairs of high degree 
	vertices in $L_1$ and $L_4$ and create the data structure.
	On the surface this seems like it takes too much time but note that all the relevant data 
structures of this type have either $A$ or $C$ in the new phase, and so we can iterate over all pairs of high 
	degree vertices in 
	time $O(m^{2/3-\eps})$.
	This is similar to the proof of \Cref{clm:H-DS} and thus, gives us the desired time bound.
\end{proof}

Finally, consider the data structures $A^{HS} \cdot B^{ST} \cdot C^{TH}$ and $A^{HT} \cdot B^{TS} \cdot 
C^{SH}$ (\Cref{eq:DS-H-tiny}).
\begin{claim}
	The amount of time it takes to create the data structures $A^{HS} \cdot B^{ST} 
	\cdot C^{TH}$ and $A^{HT} \cdot B^{TS} \cdot C^{SH}$
	for a vertex $v$ transitioning from $D$ to $S$ is $O(m^{2/3+2\eps}) \leq O(m^{4/3-2\eps})$.
\end{claim}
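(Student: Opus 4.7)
The plan is to exploit the rank-one (outer-product) structure of the contribution of the transitioning vertex $v$ to each of the two triple products, combined with the already-maintained ``common'' data structures from \Cref{eq:DS-common-tiny}. The transition budget from \Cref{lem:vertex-transition} is $x \cdot O(m^{2/3-\eps}) = O(m^{4/3-2\eps})$ since $v$ enters the overlapping region with degree $x = m^{2/3-\eps}$; I will actually achieve the tighter bound $O(m^{2/3+2\eps})$ claimed, noting that $2/3+2\eps \leq 4/3-2\eps$ whenever $\eps \leq 1/6$, which holds by \Cref{eq:simple-constraint}.

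Without loss of generality suppose $v \in L_2$ is transitioning from $D$ to $S$, so the only affected one of the two products is $A^{HS} \cdot B^{ST} \cdot C^{TH}$; the case $v \in L_3$, which affects $A^{HT} \cdot B^{TS} \cdot C^{SH}$ and uses $A^{*T} \cdot B^{T*}$ in place of $B^{*T} \cdot C^{T*}$, is entirely symmetric. As the sparse vertex in $L_2$, the contribution of $v$ to the entry of $A^{HS} \cdot B^{ST} \cdot C^{TH}$ indexed by $(u_1,u_4) \in L_1^H \times L_4^H$ is
\[
A_{u_1,v} \cdot \sum_{w \in L_3^T} B_{v,w} \cdot C_{w,u_4} \;=\; A_{u_1,v} \cdot (B^{*T} \cdot C^{T*})_{v,u_4}.
\]
Setting $x_{u_1} := A_{u_1,v}$ for $u_1 \in L_1^H$ and $y_{u_4} := (B^{*T} \cdot C^{T*})_{v,u_4}$ for $u_4 \in L_4^H$, the required update is exactly the outer product $x y^\top$.

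To execute the plan: first, compute $x$ by scanning $L_1^H$ and checking adjacency to $v$, in time $O(\card{L_1^H}) = O(m^{1/3+\eps})$; next, read $y$ directly from row $v$ of the maintained matrix $B^{*T} \cdot C^{T*}$ (available by \Cref{clm:T-DS}) in time $O(\card{L_4^H}) = O(m^{1/3+\eps})$; finally, add $x y^\top$ entrywise to the stored data structure, touching $\card{L_1^H} \cdot \card{L_4^H} = O(m^{2/3+2\eps})$ positions. The total cost is dominated by the last step, giving $O(m^{2/3+2\eps})$, which is well within budget. The only nontrivial point is recognising the rank-one form of the per-vertex contribution and then observing that one factor vector is cheap to enumerate while the other is literally one row of a data structure already being maintained; once this is seen, no additional matrix multiplication and no new data structure is needed, so I anticipate no real obstacle.
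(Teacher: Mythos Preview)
Your proof is correct and follows essentially the same approach as the paper: both iterate over the $O(m^{1/3+\eps})$ vertices in $L_1^H$ and the $O(m^{1/3+\eps})$ vertices in $L_4^H$, using the already-maintained data structure $B^{*T}\cdot C^{T*}$ to supply the $L_3^T$-factor for each pair. Your rank-one outer-product framing is a slightly cleaner description of the same computation, but there is no substantive difference.
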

\begin{proof}
	Consider the data structure $A^{HS} \cdot B^{ST} 
	\cdot C^{TH}$ (the analysis for the other is similar) and let $v\in L_2$.
	We go over all $m^{1/3+\eps}$ neighbors of $v$ in $L_1^H$ and also all $m^{1/3+\eps}$ high 
	degree vertices $w_i$ in $L_4$. 
	We then use the number of paths between $v$ and $w_i$ through $L_3^T$ stored in 
	$B^{*T} \cdot C^{T*}$ (\Cref{eq:DS-common-tiny}) and create the data structure.	
	This gives us the desired time bound of $O(m^{2/3+2\eps}) \leq O(m^{4/3-2\eps})$.
\end{proof}

\noindent
\textbf{Case 2: $T$ and $S$} \\
Note that both $T,L$ do not have any data structures in $L_1,L_4$, so we do not have to do anything for a 
transition between $T$ and $L$.
We first consider the transition from $T$ to $S$.
The amount of time we have is $m^{1/3-2\eps} \cdot O(m^{2/3-\eps}) = O(m^{1-3\eps})$.

We start with the data structures $A^{*S} \cdot B^{S*}$ and $B^{S*} \cdot C^{*S}$ (\Cref{eq:DS-com-main}).
\begin{claim}
	The amount of time it takes to create the data structures $A^{*S} \cdot B^{S*}$ and $B^{S*} \cdot C^{*S}$ 
	for a vertex $v$ transitioning from $T$ to $S$ is $O(m^{2/3-4\eps})\leq O(m^{1-3\eps})$.
\end{claim}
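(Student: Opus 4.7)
The plan is to mimic the approach taken in the previous \Tt transition claims: at the transition moment the vertex $v$ has small degree, so we can afford to iterate over pairs of its neighbors and update the relevant entries of the data structure directly.

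First I would fix notation. Since the two data structures are symmetric, I would handle $A^{*S} \cdot B^{S*}$ assuming $v \in L_2$; the argument for $B^{*S} \cdot C^{S*}$ with $v \in L_3$ is identical. At the moment $v$ enters the overlapping region between $T$ and $S$, its combined degree in $A$ and $B$ is at most $2m^{1/3-2\eps}$ by definition of the overlapping region. In particular, $v$ has at most $2m^{1/3-2\eps}$ neighbors in $L_1$ (through $A$) and at most $2m^{1/3-2\eps}$ neighbors in $L_3$ (through $B$).

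Next, to add $v$'s contribution to $A^{*S} \cdot B^{S*}$ (under $\ebef$, i.e.\ considering only the edges fixed at the transition moment) I would iterate over every pair $(u, w)$ with $u \in N_A(v) \subseteq L_1$ and $w \in N_B(v) \subseteq L_3$ and increment the stored count for the pair $(u,w)$ by one. Each such pair corresponds to exactly one wedge $u - v - w$ through the newly-sparse vertex $v$, so this correctly updates the data structure. The total work is
\[
O\!\paren{\card{N_A(v)} \cdot \card{N_B(v)}} = O\!\paren{m^{1/3-2\eps} \cdot m^{1/3-2\eps}} = O(m^{2/3-4\eps}).
\]
Since the allotted budget for the transition is $x \cdot O(m^{2/3-\eps}) = O(m^{1-3\eps})$ with $x = m^{1/3-2\eps}$, and $2/3 - 4\eps \leq 1 - 3\eps$ whenever $\eps \leq 1/3$ (which is implied by \Cref{eq:simple-constraint}), we comfortably stay within budget. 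Handling the symmetric data structure $B^{*S} \cdot C^{S*}$ for $v \in L_3$ is identical and adds another $O(m^{2/3-4\eps})$ term.

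I do not expect any real obstacle here: the point is just that ``tiny'' degree at the transition moment makes a brute-force pair enumeration cheap enough. The only subtlety to be careful about is that we must update using only the $\ebef$ edges incident to $v$, which is what is stored at the transition moment, and that the $\eaft$ contribution to $A^{*S} \cdot B^{S*}$ is maintained on the fly using the standard update procedure (as argued in the setup paragraph before \Cref{lem:vertex-transition}), so we do not need to redo it here.
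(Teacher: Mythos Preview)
Your proposal is correct and takes essentially the same approach as the paper: iterate over all pairs of neighbors of $v$ on the $L_1$ and $L_3$ sides (each of size at most $2m^{1/3-2\eps}$ at the transition moment) and update the data structure in $O(m^{2/3-4\eps})$ time. Your write-up is in fact more detailed than the paper's, which simply states the bound without the surrounding commentary on $\ebef$ versus $\eaft$.
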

\begin{proof}
Consider the data structure $A^{*S} \cdot B^{S*}$ (the analysis for the other is similar) and let $v\in L_2$.	
We go over all pairs of $m^{1/3-2\eps}$ neighbors on both sides ($L_1$ and $L_3$) and create the data 
structure.
This takes time $m^{1/3-2\eps} \cdot m^{1/3-2\eps} \cdot O(1) = O(m^{2/3-4\eps}) \leq 
O(m^{1-3\eps})$.
\end{proof}

We now consider the data structures of the form $\ao^{HS} \bo^{SS} \cnew^{SH}$ (\Cref{eq:DS-H-main}).
\begin{claim}
	The amount of time it takes to create the data structures of the form $\ao^{HS} \bo^{SS} \cnew^{SH}$ 
	in \Cref{eq:DS-H-main}
	for a vertex $v \in L_2$ transitioning from $T$ to $S$ is $O(m^{1-3\eps})$.
\end{claim}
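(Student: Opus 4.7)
The plan is to mirror the strategy of \Cref{clm:H-main-trans-SD}, exploiting the very small degree of a vertex transitioning from $T$ to $S$.

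First, I would recall the budget furnished by \Cref{lem:vertex-transition}: since the degree of $v$ when it enters the overlapping region is $x = 2m^{1/3-2\eps}$, we have $x \cdot O(m^{2/3-\eps}) = O(m^{1-3\eps})$ time. The crucial structural fact is that $v$'s total degree in $A$ and $B$ is $O(m^{1/3-2\eps})$, so $v$ has at most $2m^{1/3-2\eps}$ neighbors in $L_1^H$ via any restriction of $A$, and at most $2m^{1/3-2\eps}$ neighbors in $L_3^S$ via any restriction of $B$.

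Next, for each of the six data structures in \Cref{eq:DS-H-main} of the form $A^{HS} B^{SS} C^{SH}$ (with some combination of new/old phases), I would compute the contribution of $3$-paths passing through $v$ as the middle $L_2$-vertex. Such a path has the form $u \to v \to x \to w$ with $u \in L_1^H$, $x \in L_3^S$, $w \in L_4^H$, and each edge in the appropriate restricted phase. The procedure is a triple loop:
\begin{enumerate}
\item iterate over $u \in N_A(v) \cap L_1^H$ in the required phase (at most $2m^{1/3-2\eps}$ by degree of $v$);
\item iterate over $x \in N_B(v) \cap L_3^S$ in the required phase (at most $2m^{1/3-2\eps}$ by degree of $v$);
\item iterate over $w \in L_4^H$ (at most $m^{1/3+\eps}$);
\item check in $O(1)$ time whether $(x,w)$ lies in the required restriction of $C$, and if so increment the entry $(u,w)$ of the data structure.
\end{enumerate}

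Each data structure is handled in time $m^{1/3-2\eps} \cdot m^{1/3-2\eps} \cdot m^{1/3+\eps} = m^{1-3\eps}$, and there are $O(1)$ data structures, so the total is $O(m^{1-3\eps})$, matching the budget. I do not expect a genuine obstacle here: unlike the $D \to S$ case treated in \Cref{clm:H-main-trans-SD}, where $v$ could have up to $m^{2/3-\eps}$ neighbors and a phase-based savings on pairs of high vertices was required via \Cref{eq:iteration-pairs-constraint}, the much smaller degree of $v$ in the $T \to S$ case makes the naive triple loop already fit, and the full unrestricted size $m^{1/3+\eps}$ of $L_4^H$ is acceptable without invoking the new-phase bound on either endpoint.
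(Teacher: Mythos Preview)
Your proof is correct but follows a different route than the paper. The paper mirrors \Cref{clm:H-main-trans-SD} more closely: it iterates over the $m^{1/3-2\eps}$ neighbors of $v$ in $L_3^S$ and then over all pairs of high-degree vertices in $L_1^H \times L_4^H$, invoking the phase restriction (that every data structure in \Cref{eq:DS-H-main} has either $A$ or $C$ in the new phase) together with \Cref{eq:iteration-pairs-constraint} to bound the number of such pairs by $O(m^{2/3-\eps})$; this yields $m^{1/3-2\eps}\cdot O(m^{2/3-\eps})=O(m^{1-3\eps})$. You instead exploit the small degree of $v$ on \emph{both} sides, bounding the neighbors of $v$ in $L_1^H$ by $2m^{1/3-2\eps}$ as well, and then iterate over all of $L_4^H$ without any phase-based savings. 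Your argument is more elementary in that it never needs \Cref{eq:iteration-pairs-constraint} at all; the paper's argument is more uniform with the $D\to S$ case and would still work if the degree of $v$ toward $L_1$ were larger (as long as it is at most $m^{2/3-\eps}$).
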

\begin{proof}
		Consider the data structure $A^{HS}\cdot B^{SS}\cdot C^{SH}$.
	We do not store this data structure, we only store this when $A,B$ and $C$ are restricted to some phases.
	But for the purpose of analysis we show how to compute the data structure for $A^{HS}\cdot B^{SS}\cdot 
	C^{SH}$ and the analysis for all the data structures in \Cref{eq:DS-H-main} is similar.
	We did the same thing in the proof of \Cref{clm:H-main-trans-HM} and \Cref{clm:H-main-trans-SD}.
	
	We go over all $m^{1/3-2\eps}$ neighbors of $v$ in $L_3^S$ and also all pairs of high degree 
	vertices in $L_1$ and $L_4$ and create the data structure.
	This seems like it takes too much time but note that all the relevant data 
structures 
	of this type have either $A$ or $C$ in the new phase, and so we can iterate over all pairs of high 
	degree vertices in time $O(m^{2/3-\eps})$.
	This gives us the desired time bound of $m^{1/3-2\eps} \cdot O(m^{2/3-\eps}) = O(m^{1-3\eps})$.
\end{proof}

Finally, consider the data structures $A^{HS} \cdot B^{ST} \cdot C^{TH}$ and $A^{HT} \cdot B^{TS} \cdot 
C^{SH}$ (\Cref{eq:DS-H-tiny}).
\begin{claim}
	The amount of time it takes to create the data structures $A^{HS} \cdot B^{ST} 
	\cdot C^{TH}$ and $A^{HT} \cdot B^{TS} \cdot 
	C^{SH}$ 
	for a vertex $v$ transitioning from $T$ to $S$ is $O(m^{2/3+2\eps}) \leq O(m^{1-3\eps})$.
\end{claim}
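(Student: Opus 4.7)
The plan is to handle the Type-2 transition of $v$ from $T$ to $S$ by incorporating $v$ as a new intermediate vertex in $L_2^S$ into the data structure $A^{HS} \cdot B^{ST} \cdot C^{TH}$; the analysis for $A^{HT} \cdot B^{TS} \cdot C^{SH}$ is symmetric, and the case $v \in L_3$ is analogous to $v \in L_2$. Concretely, I need to add to the $(u,x)$-entry of the data structure, for every pair $(u,x) \in L_1^H \times L_4^H$ with $(u,v) \in A$, the number of $v$-to-$x$ wedges through $L_3^T$.

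The key observation is that each high-degree vertex has degree at least $m^{2/3-\eps}$ and the total edge count is $m$, so $|L_1^H|, |L_4^H| \leq m^{1/3+\eps}$. Moreover, the data structure $B^{*T} \cdot C^{T*}$ stored by \Cref{eq:DS-common-tiny} (maintained in worst-case update time $O(m^{1/3-2\eps})$ by \Cref{clm:T-DS}) directly provides the number of $v$-to-$x$ wedges through $L_3^T$ in $O(1)$ time per lookup. First I would iterate over all pairs $(u,x) \in L_1^H \times L_4^H$; for each pair, check in $O(1)$ whether $(u,v) \in A$, and if so, look up the number of $v$-to-$x$ wedges through $L_3^T$ in $B^{*T} \cdot C^{T*}$ and add it to the $(u,x)$-entry.

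The total work is at most $m^{1/3+\eps} \cdot m^{1/3+\eps} \cdot O(1) = O(m^{2/3+2\eps})$. To see that this is within the transition budget of $O(m^{1/3-2\eps}) \cdot O(m^{2/3-\eps}) = O(m^{1-3\eps})$ (the degree of $v$ when it enters the overlap region times the worst-case update time), it suffices to check $m^{2/3+2\eps} \leq m^{1-3\eps}$, i.e.\ $5\eps \leq 1/3$, which holds comfortably for the values of $\eps$ chosen in \Cref{sec:setup-main} (at most $1/24$).

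The main (minor) obstacle is avoiding an explicit traversal of $L_3^T$: if we tried to iterate over both neighbors of $v$ in $L_2$ and the neighbors of those in $L_3^T$ directly, we could blow past the budget for the endpoint side. This is neatly sidestepped by leaning on the precomputed $B^{*T} \cdot C^{T*}$, so the only nontrivial enumeration is over high-degree endpoint pairs. With that, the bound follows immediately from the cardinality estimate on $L_1^H$ and $L_4^H$.
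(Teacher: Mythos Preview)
Your proposal is correct and follows essentially the same approach as the paper: iterate over the at most $m^{1/3+\eps}$ vertices in $L_1^H$ and the at most $m^{1/3+\eps}$ vertices in $L_4^H$, check adjacency with $v$, and use the stored $B^{*T}\cdot C^{T*}$ (\Cref{eq:DS-common-tiny}) to obtain the number of $v$-to-$x$ wedges through $L_3^T$, giving the $O(m^{2/3+2\eps}) \leq O(m^{1-3\eps})$ bound.
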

\begin{proof}
	Consider the data structure $A^{HS} \cdot B^{ST} \cdot C^{TH}$ (the analysis for the other is similar) and 
	let $v\in L_2$.
	We go over all $m^{1/3+\eps}$ neighbors of $v$ in $L_1^H$ and also all $m^{1/3+\eps}$ vertices $w_i$ in 
	$L_4^H$. 
	We then use the number of paths between $v$ and $w_i$ through $L_3^T$ stored in 
	$B^{*T} \cdot C^{T*}$ (\Cref{eq:DS-common-tiny}) and create the data structure.	
	This gives us the desired time bound of $O(m^{2/3+2\eps}) \leq O(m^{1-3\eps})$.
\end{proof}

Now consider vertices transitioning from $S$ to $T$.
The amount of time we have is $2 m^{1/3-2\eps} \cdot O(m^{2/3-\eps})=O(m^{1-3\eps})$.
We start with the data structures $A^{*T} \cdot B^{T*}$ and $B^{T*} \cdot C^{*T}$ 
(\Cref{eq:DS-common-tiny}).
\begin{claim}
	The amount of time it takes to create the data structures $A^{*T} \cdot B^{T*}$ and $B^{T*} \cdot C^{*T}$ 
	for a vertex 
	$v$ transitioning from $S$ to $T$ is $O(m^{2/3-4\eps}) \leq O(m^{1-3\eps})$.
\end{claim}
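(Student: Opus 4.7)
The plan is to mirror the previous ($T \to S$) claim and simply enumerate the contribution of $v$ to each data structure using the combined-degree bound that holds throughout the overlap region. By definition, when $v$ enters the $S$-$T$ overlap, its combined degree in $A$ and $B$ (if $v \in L_2$) or in $B$ and $C$ (if $v \in L_3$) is at most $2m^{1/3-2\eps}$.

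For $A^{*T} \cdot B^{T*}$ with $v \in L_2$ transitioning into $T$, note that while $v$ was classified as sparse it contributed nothing to the tiny-vertex data structure, so we only need to \emph{insert} its new contribution rather than modify existing entries. I would iterate over all pairs $(u,w)$ with $u \in N_A(v) \subseteq L_1$ and $w \in N_B(v) \subseteq L_3$, and increment the $(u,w)$-entry by $1$. The number of such pairs is at most $\deg_A(v) \cdot \deg_B(v) \leq (2m^{1/3-2\eps})^2 = O(m^{2/3-4\eps})$, giving the claimed bound. The argument for $B^{T*} \cdot C^{*T}$ with $v \in L_3$ is entirely symmetric, iterating over pairs of neighbors of $v$ in $B$ and $C$.

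Finally, one checks that $O(m^{2/3-4\eps}) \leq O(m^{1-3\eps})$ reduces to $\eps \geq -1/3$, which is trivially satisfied (and consistent with the constraints in \Cref{subsec:constraints-simple}). Since this claim is essentially the reverse of the $T \to S$ case just handled, there is no substantial obstacle; the only subtlety worth calling out is that we add rather than overwrite entries, which is immediate from the fact that $v$ was not previously classified as tiny and so had not yet been registered in either data structure.
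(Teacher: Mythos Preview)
Your proof is correct and essentially identical to the paper's: both fix $v\in L_2$ (with the $L_3$ case symmetric), iterate over all pairs of neighbors of $v$ on either side, and bound the number of pairs by $(2m^{1/3-2\eps})^2 = O(m^{2/3-4\eps})$. Your added remark that $v$ had no prior contribution to the tiny data structures (so one only adds, never overwrites) is a nice clarification the paper leaves implicit.
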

\begin{proof}
	Consider the data structure $A^{*T} \cdot B^{T*}$ (the analysis for the other is similar) and let $v\in L_2$.	
	We go over all pairs of $2 m^{1/3-2\eps}$ neighbors on both sides ($L_1$ and $L_3$) and create the data 
	structure.
	This takes time $2 m^{1/3-2\eps} \cdot 2 m^{1/3-2\eps} \cdot O(1) = O(m^{2/3-4\eps}) \leq 
	O(m^{1-3\eps})$.
\end{proof}

Next, consider the data structures of the form $A^{MT} \cdot B^{TT} \cdot C^{TH}$ in \Cref{eq:DS-HM-tiny}.
\begin{claim}
	The amount of time it takes to create the data structures of the form $A^{MT} \cdot B^{TT} \cdot C^{TH}$ 
	in \Cref{eq:DS-HM-tiny} 
	for a vertex $v$ transitioning from $S$ to $T$ is $O(m^{2/3-\eps}) \leq O(m^{1-3\eps})$.
\end{claim}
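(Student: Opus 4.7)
The plan is to follow the template of the immediately preceding claim in this case (the one for $A^{*T}\cdot B^{T*}$ and $B^{*T}\cdot C^{T*}$): use the two-hop tables $A^{*T}\cdot B^{T*}$ and $B^{*T}\cdot C^{T*}$ from \Cref{eq:DS-common-tiny}, which that preceding claim just brought up to date to reflect $v$'s new membership in $T$, as $O(1)$-time lookup structures, and exploit the fact that a vertex $v$ just entering the overlap region between $S$ and $T$ has total degree at most $2m^{1/3-2\eps}$. Because $v$ plays the role of an intermediate $T$-vertex in every data structure listed in \Cref{eq:DS-HM-tiny}, this is a \Tt correction: the additive contribution of $v$ to the $(u,w)$-entry equals $A_{uv}\cdot (B^{*T}\cdot C^{T*})_{v,w}$ when $v\in L_2$, and the mirrored $(A^{*T}\cdot B^{T*})_{u,v}\cdot C_{vw}$ when $v\in L_3$.

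Taking $A^{MT}\cdot B^{TT}\cdot C^{TH}$ as the representative case with $v\in L_2$, the approach is to loop over the at most $2m^{1/3-2\eps}$ neighbors $u$ of $v$ lying in $L_1^M$ and, for each such $u$, loop over the $O(m^{1/3+\eps})$ vertices $w\in L_4^H$, perform an $O(1)$ lookup of $(B^{*T}\cdot C^{T*})_{v,w}$, and add $A_{uv}\cdot (B^{*T}\cdot C^{T*})_{v,w}$ to the $(u,w)$-entry. The cost is $O(m^{1/3-2\eps})\cdot O(m^{1/3+\eps})=O(m^{2/3-\eps})$, within the $O(m^{1-3\eps})$ budget guaranteed by \Cref{eq:simple-constraint} (which gives $\eps\le 1/6$, hence $2/3-\eps\le 1-3\eps$). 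The structure $A^{HT}\cdot B^{TT}\cdot C^{TH}$ is handled identically with $L_1^H$ in place of $L_1^M$ on the outer loop. For $A^{HT}\cdot B^{TT}\cdot C^{TM}$, the outer loop iterates over the $O(m^{1/3-2\eps})$ neighbors of $v$ in $L_1^H$, and the inner work is bounded by enumerating the support of $(B^{*T}\cdot C^{T*})_{v,*}$ restricted to $L_4^M$; this support has size at most $O(m^{2/3-4\eps})$ since each of the $O(m^{1/3-2\eps})$ tiny neighbors $y\in L_3^T$ of $v$ contributes at most $O(m^{1/3-2\eps})$ neighbors in $L_4$, giving $O(m^{1-6\eps})\le O(m^{1-3\eps})$. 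The three cases with $v\in L_3$ follow by symmetry using $A^{*T}\cdot B^{T*}$ in place of $B^{*T}\cdot C^{T*}$.

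The step I expect to require the most care is bookkeeping rather than arithmetic: I need to justify that the $(u,w)$-contribution being added is not already present in the tables of \Cref{eq:DS-HM-tiny}. This holds because, prior to crossing the threshold into the overlap region, $v$'s degree exceeded the tiny bound, so $v$ was not classified as $T$ and had identically zero rows and columns in every $T$-indexed table. Meanwhile, the immediately preceding $S\to T$ claim has already refreshed $A^{*T}\cdot B^{T*}$ and $B^{*T}\cdot C^{T*}$ to reflect $v$'s new membership in $T$, so using them as lookup tables in the sweep above returns exactly the quantity that should now be credited to $v$ in the three tables of \Cref{eq:DS-HM-tiny}. A single additive pass therefore produces the correct contribution with no double counting, and the overall time across all three data structures and both choices of $v$'s layer lies within $O(m^{1-3\eps})$, matching the claim's phrasing.
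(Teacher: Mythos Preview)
Your approach is correct and essentially the same as the paper's: for $v\in L_2$ you iterate over $v$'s $O(m^{1/3-2\eps})$ neighbors in $L_1$, over the $O(m^{1/3+\eps})$ vertices of $L_4^H$, and look up $(B^{*T}\cdot C^{T*})_{v,w}$, matching the paper's proof verbatim. Your additional support-bound argument for $A^{HT}\cdot B^{TT}\cdot C^{TM}$ and your bookkeeping paragraph go beyond what the paper supplies (it just writes ``the analysis for the other is similar''), but they are correct; note, though, that the lookup table $B^{*T}\cdot C^{T*}$ you use when $v\in L_2$ depends only on the $T$-class in $L_3$ and is therefore unaffected by $v$'s transition, so the refresh you invoke is not actually needed for that lookup.
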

\begin{proof}
	Consider the data structure $A^{MT} \cdot B^{TT} \cdot C^{TH}$ (the analysis for the other is similar) and 
	let $v\in L_2$.
	We go over all $2m^{1/3-2\eps}$ neighbors of $v$ in $L_1$ and also all $m^{1/3+\eps}$ vertices $w_i$ in 
	$L_4^H$. 
	We then use the number of paths between $v$ and $w_i$ through $L_3^T$ stored in 
	$B^{*T}\cdot C^{T*}$ (\Cref{eq:DS-common-tiny}) and create the data structure.	
	This gives us the desired time bound of $O(m^{2/3-\eps}) \leq O(m^{1-3\eps})$.
\end{proof}

Finally, consider the data structures $A^{HT} \cdot B^{TS} \cdot C^{SH}$ and $A^{HS} \cdot B^{ST} \cdot 
C^{TH}$ (\Cref{eq:DS-H-tiny}).
\begin{claim}
	The amount of time it takes to create the data structures $A^{HT} \cdot B^{TS} 
	\cdot C^{SH}$ and $A^{HS} \cdot B^{ST} \cdot C^{TH}$
	for a vertex $v$ transitioning from $S$ to $T$ is $O(m^{2/3-\eps}) \leq O(m^{1-3\eps})$.
\end{claim}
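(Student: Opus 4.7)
The plan is to mirror the proof of the immediately preceding claim, which handled the symmetric $T \to S$ transition for the same two data structures: we exploit that $v$'s degree is at most $2m^{1/3-2\eps}$ in the overlap region between $S$ and $T$, and we use the already-maintained $2$-hop wedge counts on the opposite side of $v$ given by the common data structures in \Cref{eq:DS-com-main} and \Cref{eq:DS-common-tiny}. Since $v$ appears as an intermediate vertex in these $3$-paths (a \Tt transition), we only need to aggregate, for each neighbor of $v$ on one side, the precomputed number of $2$-paths from $v$ to each high-degree vertex on the other side.

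Concretely, suppose first $v \in L_2$. For $A^{HT} \cdot B^{TS} \cdot C^{SH}$, iterate over the at most $2m^{1/3-2\eps}$ neighbors $u \in L_1^H$ of $v$ and over all $m^{1/3+\eps}$ vertices $w \in L_4^H$, updating the $(u,w)$-entry by the value $(B^{*S} \cdot C^{S*})(v,w)$ read from \Cref{eq:DS-com-main}, which is exactly the number of $2$-paths from $v$ through $L_3^S$ to $w$. For $A^{HS} \cdot B^{ST} \cdot C^{TH}$, do the same but read from $(B^{*T} \cdot C^{T*})(v,w)$ in \Cref{eq:DS-common-tiny}; whether the contribution is added or subtracted depends on whether the relevant slot (the $T$ row or the $S$ row) is being populated or vacated at $v$. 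The case $v \in L_3$ is handled symmetrically, iterating over neighbors of $v$ in $L_4^H$, looping over $L_1^H$, and using the corresponding slices of $A^{*S} \cdot B^{S*}$ or $A^{*T} \cdot B^{T*}$.

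In every case the running time is $O(m^{1/3-2\eps} \cdot m^{1/3+\eps}) = O(m^{2/3-\eps})$, which is comfortably within the $O(m^{1-3\eps})$ budget for this transition case as long as $\eps > 0$. The only subtlety worth flagging is correctness of the values read from the common data structures at the instant of transition: this is automatic because $A^{*S} \cdot B^{S*}$, $B^{*S} \cdot C^{S*}$, $A^{*T} \cdot B^{T*}$, and $B^{*T} \cdot C^{T*}$ are all maintained on-the-fly under every edge update by \Cref{clm:S-DS} and \Cref{clm:T-DS}, and their indexing is oblivious to whether $v$ is currently classified as $S$ or $T$. No fundamental obstacle appears; the argument is essentially a transcription of the $T \to S$ companion claim with the roles of the overlap endpoints swapped.
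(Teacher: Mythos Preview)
Your proposal is correct and follows essentially the same approach as the paper: iterate over the at most $2m^{1/3-2\eps}$ neighbors of $v$ on one side and all $m^{1/3+\eps}$ high-degree vertices on the other, and use the stored wedge counts $B^{*S}\cdot C^{S*}$ (resp.\ $B^{*T}\cdot C^{T*}$) to fill in the contribution. One small quibble: the inequality $m^{2/3-\eps}\leq m^{1-3\eps}$ requires $\eps\leq 1/6$ (\Cref{eq:simple-constraint}), not merely $\eps>0$, but this constraint is already in force throughout the paper.
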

\begin{proof}
	Consider the data structure $A^{HT} \cdot B^{TS} \cdot C^{SH}$ (the analysis for the other is similar) and 
	let $v\in L_2$.
	We go over all $2m^{1/3-2\eps}$ neighbors of $v$ in $L_1^H$ and also all $m^{1/3+\eps}$ vertices $w_i$ 
	in $L_4^H$. 
	We then use the number of paths between $v$ and $w_i$ through $L_3^S$ stored in 
	$B^{*S}\cdot C^{S*}$ (\Cref{eq:DS-com-main}) and create the data structure.	
	This gives us the desired time bound of $O(m^{2/3-\eps}) \leq O(m^{1-3\eps})$.
\end{proof}


\section{Extending the Algorithm to \texorpdfstring{$4$}{4}-Cycles in General 
Graphs}\label{sec:cycles-general-graphs}

In this section, we extend our algorithm for counting $4$-cycles from layered graphs to general 
(simple) graphs.
\finaltheoremfalse
\AlgGeneral*

We solve this problem by creating a layered graph $G'$ by adding all the vertices $V$ to each of the 
four 
layers and then running the algorithm for layered graphs that we developed in \Cref{thm:alg-layer}.
The first thing we address is that when an edge update $(u,v)$ arrives in the graph, it corresponds 
to $4$ updates in the layered graph $G'$.
If the update is an insertion, we insert the edge in $D$ then $C$ then $B$ and then in $A$ and if it is 
a deletion, we delete edges in the reverse order.

The answer to the query of the number of new $4$-cycles in the general graph is the answer we get 
when the edge is inserted or deleted in matrix $D$ of the layered graph.
Note that even though we are counting paths in the layered graph $G'$, these correspond to 
\emph{walks} in the original graph $G$ since vertices are repeated in all the layers, so we 
might count some walks that are not paths. 
We now show that this does not happen.


\begin{claim}\label{clm:layered-to-general}
	The number of walks of length $3$ from $u \in L_1$ to $v \in L_4$ in the layered graph is equal to 
	the number of paths of length $3$ from $u$ to $v$ in the general graph.
\end{claim}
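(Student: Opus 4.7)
The plan is to establish a bijection between $3$-walks in the layered graph $G'$ from $u \in L_1$ to $v \in L_4$ (which are automatically $3$-paths in $G'$ since consecutive vertices lie in different layers) and $3$-paths in the general graph $G$ from $u$ to $v$. One direction is immediate: every $3$-path $u, x, y, v$ in $G$ lifts to a walk $u, x, y, v$ in $G'$ by placing $x \in L_2$ and $y \in L_3$, and using the fact that every edge of $G$ is present in each of $A, B, C, D$ at query time (except possibly $(u,v)$, which is irrelevant since a $3$-path from $u$ to $v$ never uses the edge $(u,v)$). So the nontrivial direction is to show that every walk $u, w_1, w_2, v$ in $G'$ with $(u,w_1) \in A$, $(w_1,w_2) \in B$, $(w_2,v) \in C$ projects to a path in $G$, i.e., the four vertices $u, w_1, w_2, v$ are distinct when identified as vertices of $G$.

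The three adjacent-distinctness conditions $u \neq w_1$, $w_1 \neq w_2$, $w_2 \neq v$ follow immediately from the absence of self-loops in $G$, since each corresponds to an actual edge of $G$. The condition $u \neq v$ follows from the same reason applied to the update edge $(u,v)$ itself, which exists in $D$ of $G'$ at query time.

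The crux is to rule out $w_2 = u$ and $w_1 = v$. For this I would invoke the query-time invariant built into the reduction: at the moment the query is answered, the edge $(u,v)$ is present in $D$ but \emph{absent} from $A$, $B$, and $C$. For an insertion this holds because $(u,v)$ is inserted into $D$ immediately before the query and only later into $C, B, A$; for a deletion it holds because $(u,v)$ is removed from $A, B, C$ before the deletion reaches $D$, and the query is processed at that moment. Given this invariant, if $w_2 = u$ then $(w_2,v) = (u,v)$ would lie in $C$, contradicting the invariant; symmetrically, if $w_1 = v$ then $(u,w_1) = (u,v)$ would lie in $A$, again a contradiction. Thus all four vertices are distinct in $G$, completing the bijection.

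The main step of the argument is really just verifying the query-time invariant on $(u,v)$; once this is in hand, the distinctness is a one-line case check. I do not anticipate any genuine obstacle, only the need to treat insertions and deletions uniformly by pointing out that in both cases the query on $D$ is executed precisely while $(u,v)$ sits only in $D$.
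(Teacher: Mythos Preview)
Your proposal is correct and follows essentially the same argument as the paper: both directions of the bijection are handled as you describe, adjacent distinctness comes from the absence of self-loops, and the non-adjacent cases $w_1 \neq v$, $w_2 \neq u$ are ruled out via the query-time invariant that $(u,v)$ is absent from $A,B,C$ (enforced by the insertion/deletion order). The paper's proof is the same case check with the same justification.
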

\begin{proof}
	All $3$-paths in the general graph from $u$ to $v$ exist in the constructed layered graph, so we 
	need to show that no extra paths exist.
	Consider a walk $u,x,y,v$ of length $3$ from $u$ to $v$ in the layered graph.
	We will show that all vertices on this path are distinct vertices of $V$ so this is a path in the 
	general graph.
	
	Firstly we have $x \neq u$ because we do not have self loops (edges of the form $(i,i)$).
	Also, $x \neq v$ because the edge $(u,v)$ does not exist.
	This is because if $(u,v)$ is an edge insertion then we insert it in matrix $D$ before we insert it in 
	$A,B,C$ so during the query this edge does not exist in $A,B,C$.
	If $(u,v)$ is an edge deletion then we delete it from $A,B,C$ before we query matrix $D$ so 
	during the query this edge does not exist in $A,B,C$.
	
	The same arguments show that $y \neq u$ and $y \neq v$.
	Finally, we have $x \neq y$ because we do not have self loops.
	Thus, all the vertices are distinct and $u,x,y,v$ is a path.
\end{proof}

\Cref{clm:layered-to-general} along with \Cref{thm:alg-layer} proves \Cref{thm:alg-general}.
Thus, the algorithm also works for general graphs and this also shows that the problem of 
maintaining the number of $4$-cycles is equivalent in layered and general graphs.

\subsection*{Acknowledgments}
We are very grateful to Abhibhav Garg, Prashant Gokhale and Zhiang Wu for their helpful 
conversations regarding the project and to the anonymous reviewers of PODS 2025 for their useful comments on the presentation of the paper.

\clearpage

\bibliographystyle{alpha}
\bibliography{new}

\clearpage

\appendix

\section{A Simple Algorithm for Counting 4-Cycles}\label{sec:alg-trivial}

In this section, we give a very simple algorithm to maintain the total number of 
$4$-cycles in a fully 
dynamic general graph with worst-case update time $O(n)$ where $n$ is the 
number of vertices in 
the graph.
\begin{lemma}\label{lem:alg-trivial}
	There is an algorithm to maintain the total number of $4$-cycles in a fully 
	dynamic general graph with worst-case update time $O(n)$ where $n$ is the 
	number of vertices 
	in the graph.
\end{lemma}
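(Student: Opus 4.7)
The plan is to maintain, alongside the running count of $4$-cycles, a wedge table $W$ of dimension $n \times n$ where $W[x,y]$ stores the number of common neighbors of $x$ and $y$, equivalently the number of $2$-paths from $x$ to $y$. The key observation, already noted in the introduction, is that after each edge update $(u,v)$ the change in the total $4$-cycle count equals the number of $3$-paths from $u$ to $v$ in the current graph: these are precisely the new $4$-cycles created upon insertion, or the old $4$-cycles destroyed upon deletion. We will count these $3$-paths by iterating over $N(u)$ and summing entries of $W$.

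The insertion handler for an edge $(u,v)$ proceeds in three steps. First, while $(u,v)$ is still absent from the graph, compute $\Delta := \sum_{x \in N(u)} W[x,v]$ and add $\Delta$ to the running total. Second, insert $(u,v)$ into the adjacency representation. Third, update $W$ to reflect the new wedges: for every $w \in N(u)$ with $w \neq v$, the pair $\{w,v\}$ gains a wedge through $u$, so increment $W[w,v]$ and $W[v,w]$; symmetrically, for every $w \in N(v)$ with $w \neq u$, increment $W[w,u]$ and $W[u,w]$. The deletion handler is analogous but performed in reverse: first decrement the affected entries of $W$ and delete the edge, then compute $\Delta$ as above and subtract it from the running total. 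Since $\card{N(u)}, \card{N(v)} \leq n$, every step touches at most $O(n)$ entries of $W$ or scans at most $O(n)$ neighbors, yielding the claimed $O(n)$ worst-case update time.

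The one substantive point is to verify that $\sum_{x \in N(u)} W[x,v]$ actually counts simple $3$-paths from $u$ to $v$ rather than arbitrary $3$-walks. A walk $u - x - y - v$ fails to be a path only if two of its four vertices coincide; since the graph is simple and has no self-loops, the only possibilities are $x = v$ or $y = u$, both of which force $(u,v) \in E$. Because the summation is performed precisely when $(u,v) \notin E$ (either just before the insertion or just after the deletion), neither bad case arises and walks equal paths on the nose. An alternative route, which sidesteps this check by construction, is to invoke the reduction we will establish in \Cref{sec:cycles-general-graphs}: build the $4$-layered graph with $V$ copied into each layer, mirror each update across $A, B, C, D$ in the order specified there, and run the same wedge-table algorithm layer by layer; the layered structure then forces walks to be paths automatically, making this the only real technical obstacle essentially free.
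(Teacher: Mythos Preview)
Your proposal is correct and follows essentially the same approach as the paper: maintain a wedge table, sum $W[x,v]$ over $x\in N(u)$ to count $3$-paths through the update, and handle the insertion/deletion order so that $(u,v)\notin E$ at the moment of counting, which makes the walk-versus-path check go through. The paper's proof is organized into two claims (\Cref{clm:DS-alg-trivial} and \Cref{clm:query-alg-trivial}) but the content is identical to yours; your additional remark about the layered-graph alternative is a nice touch but not used in the paper's own argument here.
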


We will do this by counting the number of $4$-cycles through every new edge 
update.
The idea is to maintain the number of wedges i.e. two-paths between every pair of 
vertices.
During an edge update $(u,v)$ we can iterate over all the neighbors of $u$ and 
then use the 
number of wedges to $v$ stored in the data structure.
We can also update the number of wedges between all pairs of vertices in $O(n)$ 
worst-case 
update time.
We show this in the following claim.
\begin{claim}\label{clm:DS-alg-trivial}
	The number of wedges between all pairs of vertices can be maintained in $O(n)$ 
	worst-case 
	update time.
\end{claim}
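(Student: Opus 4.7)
The plan is to maintain an $n \times n$ table $W$ where the entry $W[a,b]$ stores the number of common neighbors of $a$ and $b$, i.e., the number of wedges $(a, w, b)$. Initially the graph is empty so $W$ is the all-zeros table. On each edge update $(x,y)$, I will identify precisely which entries of $W$ change and update them one by one; the key observation is that an edge $(x,y)$ contributes to a wedge only when $x$ or $y$ serves as the middle vertex of that wedge.

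Concretely, suppose we receive an insertion of edge $(x,y)$. For each $z \in N(x) \setminus \{y\}$, the pair $(y, z)$ gains a new wedge with middle vertex $x$ (namely $y$--$x$--$z$), so I increment $W[y,z]$ and $W[z,y]$ by $1$. Symmetrically, for each $z \in N(y) \setminus \{x\}$, the pair $(x, z)$ gains a new wedge with middle vertex $y$, so I increment $W[x,z]$ and $W[z,x]$ by $1$. Deletions are handled identically, but with decrements instead of increments. Since $|N(x)|, |N(y)| \leq n-1$, the total number of table entries touched is at most $2(n-1) = O(n)$, and each update is a constant-time arithmetic operation, giving worst-case update time $O(n)$.

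Correctness follows by induction on the number of updates. After each update, every wedge $(a, w, b)$ in the current graph corresponds to a unique (edge, middle-vertex) pair, and our update rule adjusts $W[a,b]$ by exactly $\pm 1$ whenever an edge incident to a common neighbor of $a$ and $b$ is inserted or deleted. Since we assume the graph is simple (no self-loops, no multi-edges), the restriction $z \neq y$ (respectively $z \neq x$) correctly excludes the degenerate ``wedge'' $y$--$x$--$y$. There is no real obstacle here; the only thing to be careful about is the symmetry $W[a,b] = W[b,a]$ and that we are counting unordered middle vertices correctly, which is handled by iterating over $N(x)$ and $N(y)$ separately and treating the table as symmetric. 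This proves \Cref{clm:DS-alg-trivial}, and combined with the query procedure (iterate over $N(u)$ and sum $W[w,v]$ for $w \in N(u)$), we obtain \Cref{lem:alg-trivial}.
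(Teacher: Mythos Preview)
Your proof is correct and follows essentially the same approach as the paper: iterate over the neighbors of each endpoint of the updated edge and adjust the wedge counts accordingly, for a total of $O(n)$ work. Your version is more detailed (you make the table $W$ explicit, handle the symmetry, and exclude the degenerate wedge), but the core idea is identical to the paper's two-line argument.
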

\begin{proof}
	Let $(u,v)$ be the current edge update.
	We go over all the neighbors $w$ of $u$ (at most $n$) and update the number 
	of wedges 
	between $w$ and $v$.
	We do the same for $v$.
	The total time taken is $O(n)$.
\end{proof}

One important thing to note is that we need to be careful about whether we update 
the data 
structures first or answer the queries first. 
The data structures for the wedges should not have the current edge $(u,v)$ as 
part of their paths. 
To make this happen we answer the query first during an edge insertion and update 
the data 
structures first during an edge deletion.
We now show that we can answer the query of the number of $4$-cycles through 
the new edge 
update in $O(n)$ time.

\begin{claim}\label{clm:query-alg-trivial}
	Counting the number of $4$-cycles through the new edge update takes $O(n)$ 
	worst-case time.
\end{claim}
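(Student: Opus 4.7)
The plan is to reduce the query to summing pre-stored wedge counts, and then to argue that this sum exactly equals the number of new $4$-cycles through the updated edge $(u,v)$. Recall that by the timing convention stated just before, when the query is evaluated the edge $(u,v)$ is \emph{not} present in $E$: for an insertion we answer the query before inserting $(u,v)$, and for a deletion we first delete $(u,v)$ and then answer. Consequently, by \Cref{clm:DS-alg-trivial}, we have access to the number of wedges $W(x,y)$ between every pair of vertices $x,y$ in the graph that does not contain $(u,v)$.

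The algorithm for the query is then straightforward. Iterate over the at most $n-1$ neighbors $w$ of $u$ (excluding $v$, if it is still present as a neighbor at this moment, though by the timing convention it is not), and compute
\[
\Phi(u,v) \;:=\; \sum_{w \in N(u)} W(w,v).
\]
Each term is looked up in $O(1)$ from the stored data structure, so the total time is $O(n)$, matching the worst-case budget.

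It remains to argue that $\Phi(u,v)$ equals the number of new $4$-cycles through $(u,v)$, i.e., the number of simple $3$-paths $(u,x,y,v)$ in the current graph (without $(u,v)$). The quantity $\Phi(u,v)$ by definition counts the number of $3$-\emph{walks} $u \to x \to y \to v$ (where $(u,x), (x,y), (y,v) \in E$), since enumerating $x \in N(u)$ and then weighting by the wedge count $W(x,v)$ counts every such walk exactly once. Thus it suffices to show that every such walk is in fact a simple path, i.e., that $u,x,y,v$ are distinct. This follows from the same reasoning used in \Cref{clm:layered-to-general}: the graph is simple, so there are no self-loops, giving $x \ne u$, $y \ne x$, and $y \ne v$; and the edge $(u,v)$ is not in $E$ at query time, which forces $x \ne v$ and $y \ne u$. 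Hence walks equal paths, $\Phi(u,v)$ is the desired count, and the query runs in $O(n)$ worst-case time. Together with \Cref{clm:DS-alg-trivial}, this proves \Cref{lem:alg-trivial}.

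The only subtlety, and hence the ``main obstacle,'' is the bookkeeping around when exactly the data structure reflects the graph with or without $(u,v)$; this is resolved by the explicit ordering convention (query-then-update for insertions, update-then-query for deletions) stated immediately before the claim, so no further work is required.
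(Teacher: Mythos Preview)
Your proof is correct and follows essentially the same approach as the paper: iterate over the neighbors of $u$, sum the stored wedge counts to $v$, and then verify that every counted $3$-walk is a simple path using the absence of self-loops together with the timing convention that ensures $(u,v)\notin E$ at query time. The paper's argument is identical up to variable naming (it calls the two intermediate vertices $w_i$ and $x$ rather than $x$ and $y$).
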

\begin{proof}
	Let $(u,v)$ be the current edge update.
	We iterate over all the neighbors (at most $n$) $w_i$ of $u$ and then use the 
	number of wedges 
	from $w_i$ to $v$ stored in the data structure.
	This gives us all the $3$-paths from $u$ to $v$ but might include some walks as 
	well.
	We now show this is not the case.  
	
	We know that $u,v,w_i$ are all distinct.
	Let $x$ be the vertex between $w_i$ and $v$.
	$x$ is different from $w_i$ and $v$ because the graph does not have self loops.
	$x$ is different from $u$ because the edge $(u,v)$ does not exist in the graph.
	This is because if the update was an edge insertion we answer the query first, so 
	the data 
	structures do not contain the edge $(u,v)$.
	Otherwise, if the update was an edge deletion we update the data structures first, 
	so they do not 
	contain the edge $(u,v)$.
\end{proof}

\Cref{clm:DS-alg-trivial} and \Cref{clm:query-alg-trivial} together prove 
\Cref{lem:alg-trivial}.
 

\section{Verifying Constraints}
\label{sec:apx-verify-constraints}
In this section, we verify that the solutions we provided for the values of $\eps$ and $\delta$ for the main 
algorithm (\Cref{sec:main-alg}) and $\e1$ and $\e2$ for the algorithm where $A,C$ are fixed 
(\Cref{sec:alg-simple}), actually satisfy all the constraints. 
We use the complexity term balancer in \cite{Complexity} to get the values of $\omega(\cdot,\cdot,\cdot)$ 
and find the optimal values of the parameters.
We start with the values of $\eps$ and $\delta$ for the main algorithm (since the values for the algorithm 
where $A,C$ are fixed, are a function of $\eps$).

\subsubsection*{Main algorithm with current best bound on the matrix multiplication exponent}
In \Cref{sec:setup-main}, we claimed that $\eps=0.0098109$ and $\delta=3\eps = 0.0294327$ satisfy all the 
constraints when $\omega=2.371339$.

\begin{itemize}
\item 
It is easy to verify that \Cref{eq:simple-constraint} is satisfied:
$\eps \leq 1/6$
since $1/6 \geq 0.16$.

\item 
It is easy to verify that \Cref{eq:iteration-pairs-constraint} is satisfied:
$3\eps \leq \delta$
since $\delta = 3\eps$.

\item 
Finally, we need to verify \Cref{eq:phase-constraint}:
$1-\delta \geq (2\omega +1) \cdot \eps + (\omega-1) 2/3.$

Substituting $\delta=3\eps$ we get:
$(6 \omega +12)\eps \leq 3- 2 (\omega -1).$

The left-hand side of the equation evaluates to $0.2573206187706$ and the right-hand side of the equation 
evaluates to $0.2573220000000003$ satisfying the equation.
Thus, all the equations are satisfied.
\end{itemize}

\subsubsection*{Main algorithm with best possible bound on the matrix multiplication exponent}

In \Cref{sec:setup-main}, we claimed that $\eps=1/24$ and $\delta=3\eps = 1/8$ satisfy all the constraints 
when $\omega=2$.

\begin{itemize}
\item 
It is easy to verify that \Cref{eq:simple-constraint} is satisfied:
$
\eps \leq 1/6
$
since $\eps=1/24$.

\item 
It is easy to verify that \Cref{eq:iteration-pairs-constraint} is satisfied:
$
3\eps \leq \delta
$
since $\delta = 3\eps$.

\item 
Finally, we need to verify \Cref{eq:phase-constraint}:
$1-\delta \geq (2\omega +1) \cdot \eps + (\omega-1) 2/3.$

The left-hand side is $7/8$ because $\delta=1/8$.
The right-hand side is $5/24+2/3=7/8$ since $\omega=2$.
Thus, all the equations are satisfied.
\end{itemize}

\subsubsection*{Algorithm where \texorpdfstring{$A,C$}{A,C} are fixed with current best bounds on 
the rectangular matrix 
multiplication exponents}

In \Cref{subsec:constraints-simple}, we claimed that $\e1 =0.04201965$ and 
$\e2=0.14568075$ when $\eps=0.0098109$.
\begin{itemize}
\item 
It is easy to verify that \Cref{constraint:L_two-simple} is satisfied: $\e1-\e2 \leq 1/3$ since $\e1-\e2 \leq 0$.

\item 
It is easy to verify that \Cref{constraint:L_one-simple} is satisfied: $\e1 \leq 1/6$ since $1/6 \geq 0.16$.

\item
It is easy to verify that \Cref{constraint:DS-L-L-simple} is satisfied: $3\e1 +2\eps \leq \e2$ since $3\e1 +2\eps 
= \e2$.
	
\item 
Consider \Cref{constraint:DS-L-HH-simple}: $\omega({2/3+2\eps},{1/3-\e1+\e2},{1/3-\e1+\e2}) \leq 4/3 -2\e1$. 
\\
The terms inside $\omega(\cdot,\cdot,\cdot)$ on left-hand side are $2/3+2\eps \leq 0.6764776$ and 
$1/3-\e1+\e2 \leq 0.436994434$.
We get $\omega({2/3+2\eps},{1/3-\e1+\e2},{1/3-\e1+\e2}) + 2\e1 \leq 1.24039952 + 2 (0.04201965) 
=1.32443882 < 1/3$. 
Thus, \Cref{constraint:DS-L-HH-simple} is satisfied.

\item 
Finally, consider \Cref{constraint:DS-H-simple}: $\omega({1/3+\e1},{2/3-\e1},{1/3+\e1}) \leq 4/3 -2\e1$. \\
The terms inside $\omega(\cdot,\cdot,\cdot)$ on left-hand side are $1/3+\e1 \leq 0.375353$ and $2/3-\e1 \leq 
0.6246471$.
We get $\omega({1/3+\e1},{2/3-\e1},{1/3+\e1}) +2\e1 \leq 1.10495201 + 2 (0.04201965) 
= 1.18899131 < 1/3$.
Therefore, all the equations are satisfied.
\end{itemize}

\subsubsection*{Algorithm where \texorpdfstring{$A,C$}{A,C} are fixed with best possible bounds on 
the rectangular matrix multiplication exponents}

In \Cref{subsec:constraints-simple}, we claimed that $\e1 = 1/24$ and 
$\e2=5/24=$ when $\eps=1/24$.

\begin{itemize}
\item 
It is easy to verify that \Cref{constraint:L_two-simple} is satisfied: $\e1-\e2 \leq 1/3$ since $\e1-\e2 \leq 0$.

\item 
It is easy to verify that \Cref{constraint:L_one-simple} is satisfied: $\e1 \leq 1/6$ since $1/6 \geq 0.16$.

\item
It is easy to verify that \Cref{constraint:DS-L-L-simple} is satisfied: $3\e1 +2\eps \leq \e2$ since $3\e1 
+2\eps = \e2$.

\item 
Consider \Cref{constraint:DS-L-HH-simple}: $\omega({2/3+2\eps},{1/3-\e1+\e2},{1/3-\e1+\e2}) \leq 4/3 -2\e1$. 
\\
$\omega({2/3+2\eps},{1/3-\e1+\e2},{1/3-\e1+\e2}) +2\e1 = 1+2\eps +\e2-\e1 +2\e1 = 1+8/24=4/3$.
	
\item 
Finally, consider \Cref{constraint:DS-H-simple}: $\omega({1/3+\e1},{2/3-\e1},{1/3+\e1}) \leq 4/3 -2\e1$. \\
$\omega({1/3+\e1},{2/3-\e1},{1/3+\e1}) +2\e1 = 1 +2\e1 < 4/3$.	
Therefore, all the equations are satisfied.
\end{itemize}

\end{document}